\documentclass[11pt,letterpaper]{article}
\usepackage{times,mysetup,my-acm}
\usepackage{tweaklist}
\FULLPAGE

\newcommand{\Bobby}[1]{\sidenote{Bobby}{#1}}

\newcommand{\payoff}{{\pi}}
\newcommand{\distrib}{{\mathcal{D}}}
\newcommand{\prob}{{\mathbb{P}}}
\newcommand{\expect}{{\mathbf{E}}}

\renewcommand{\ASnote}[1]{}
\renewcommand{\Bobby}[1]{}

\newcommand{\Cov}{\ensuremath{\mathtt{Cov}}}
\newcommand{\LCD}{\ensuremath{\mathtt{LCD}}}
\newcommand{\MaxMinLCD}{\ensuremath{\mathtt{MaxMinLCD}}}

\newcommand{\FFproblem}{full-feedback Lipschitz experts problem}
\newcommand{\ULproblem}{uniformly Lipschitz experts problem}

\providecommand{\Appendix}{}
\renewcommand{\Appendix}[2][?]{%
	\refstepcounter{section}%
	\vspace{\parskip}%
	{\flushright\Large\bfseries\appendixname\ \thesection: #1}%
	\vspace{\baselineskip}%
}

\renewcommand{\appendix}{%
	\renewcommand{\section}{\secdef\Appendix\Appendix}%
	\renewcommand{\thesection}{\Alph{section}}%
	\setcounter{section}{0}%
}

\title{Sharp Dichotomies for Regret Minimization in Metric Spaces%
\footnote{This is the full version of a paper that will appear in~\emph{ACM-SIAM SODA}, 2010.}}

\author{
{\Large Robert Kleinberg}\thanks{Computer Science Department, Cornell University, Ithaca, NY 14853. Email:~{\tt rdk at cs.cornell.edu}.  Supported by NSF awards
CCF-0643934 and IIS-0905467, an Air Force OFfice of Scientific Research grant, a Microsoft Research New Faculty Fellowship, and an
 Alfred P. Sloan Foundation Fellowship.}
\and
{\Large Aleksandrs Slivkins}\thanks{
Microsoft Research, Mountain View, CA 94043.
Email:~{\tt slivkins at microsoft.com}.}
}

\date{December 2008 \\ Revised: April 2009, October 2009}

\begin{document}

\maketitle

\newcommand{\given}{\,|\,}

\hyphenation{interval perhaps finite recent results imply}

\begin{abstract}
The Lipschitz multi-armed bandit (MAB) problem generalizes the classical multi-armed bandit problem by assuming one is given side information consisting of a priori upper bounds on the difference in expected payoff between certain pairs of strategies.  Classical results of Lai-Robbins~\cite{Lai-Robbins-85} and Auer et al.~\cite{bandits-ucb1} imply a logarithmic regret bound for the Lipschitz MAB problem on finite metric spaces.  Recent results on continuum-armed bandit problems and their generalizations imply lower bounds of $\sqrt{t}$, or stronger, for many infinite metric spaces such as the unit interval.  Is this dichotomy universal?  We prove that the answer is yes: for every metric space, the optimal regret of a Lipschitz MAB algorithm is either bounded above by any $f\in \omega(\log t)$, or bounded below by any $g\in o(\sqrt{t})$. Perhaps surprisingly, this dichotomy does not coincide with the distinction between finite and infinite metric spaces; instead it depends on whether the completion of the metric space is compact and countable.  Our proof connects upper and lower bound techniques in online learning with classical topological notions  such as perfect sets and the Cantor-Bendixson theorem.

We also consider the \emph{full-feedback} (a.k.a., \emph{best-expert}) version of Lipschitz MAB problem, termed the \emph{Lipschitz experts problem}, and show that this problem exhibits a similar dichotomy. We proceed to give nearly matching upper and lower bounds on regret in the Lipschitz experts problem on uncountable metric spaces. These bounds are of the form $\tilde{\Theta}(t^\gamma)$, where the exponent $\gamma\in [\tfrac12, 1]$ depends on the metric space. To characterize this dependence, we introduce a novel dimensionality notion tailored to the experts problem. Finally, we show that both Lipschitz bandits and Lipschitz experts problems become completely intractable (in the sense that no algorithm has regret $o(t)$) if and only if the completion of the metric space is non-compact.
\end{abstract}


{\bf ACM Categories and subject descriptors:}
\category{F.2.2}{Analysis of Algorithms and Problem Complexity}{Nonnumerical Algorithms and Problems}
\category{F.1.2}{Computation by Abstract Devices}{Modes of Computation}[Online computation]

{\bf General Terms:} theory, algorithms.

{\bf Keywords:} online learning, multi-armed bandit problem, best-expert problem, regret minimization, metric spaces.
\newpage

\section{Introduction}

Multi-armed bandit (henceforth, MAB) problems have been studied for more than fifty years as a clean abstract setting for analyzing the exploration-exploitation tradeoffs that are common in sequential decision making.  In the \emph{stochastic MAB problem}, an algorithm must repeatedly choose from a fixed set of strategies (a.k.a. ``arms"), each time receiving a random payoff whose distribution depends on the strategy selected.\footnote{More precisely, the payoff of each arm is an independent sample from a fixed time-invariant distribution with bounded support.} The performance of MAB algorithms is commonly evaluated in terms of \emph{regret}: the difference in expected payoff between the algorithm's choices and always playing one fixed strategy.  In addition to their many applications --- which range from experimental design to online auctions and web advertising --- another appealing feature of multi-armed bandit algorithms is that they are surprisingly efficient in terms of the growth rate of regret: for finite-armed bandit problems, algorithms whose regret at time $t$ scales as $O(\log t)$ have been known for more than two decades, beginning with the seminal work of Lai and Robbins~\cite{Lai-Robbins-85} and extended in subsequent work such as~\cite{bandits-ucb1}.

Many of the applications of MAB problems --- especially the computer science applications such as online auctions, web advertising, or adaptive routing --- require considering strategy sets which are very large or even infinite.  For infinite strategy sets the $O(\log t)$ bound does not apply, while for very large finite sets the $O(\cdot)$ notation masks a prohibitively large constant.  Indeed, without making any assumptions about the strategies and their payoffs, bandit problems with large strategy sets allow for no non-trivial solutions --- any MAB algorithm performs as badly, on some inputs, as random guessing. This motivates the study of bandit problems in which the strategy set is large but one is given \emph{side information} constraining the form of the payoffs. Such problems have  become the subject of quite intensive study in recent years, e.g.~\cite{sundaram-bandits-92, Agrawal-bandits-95, bandits-exp3, Bobby-nips04, Bobby-thesis, McMahan-colt04, Bobby-stoc04,FlaxmanKM-soda05, Hayes-soda06, Cope/06, AuerOS/07, DaniHK-nips07, KakadeKL/07, Hazan-colt07, Hazan-soda09}.

The \emph{Lipschitz MAB problem} is a version of the stochastic MAB problem in which the side information consists of \emph{a priori} upper bounds on the difference in expected payoff between certain pairs of strategies. This models situations where the decision maker has access to some similarity information about strategies which ensures that similar strategies obtain similar payoffs.  Abstractly, the similarity information may be modeled as defining a \emph{metric space} structure on the strategy set, and the side constraints imply that the expected payoff function $\mu$ is a Lipschitz function (with Lipschitz constant $1$) on this metric space. 

The Lipschitz MAB problem was introduced by Kleinberg et al.~\cite{LipschitzMAB-stoc08}.\footnote{Megiddo and Hazan~\cite{Hazan-colt07} consider a somewhat related (but technically very different) setting which combines full feedback, contextual ``hints", convex payoffs, and (essentially) a similarity metric space on the contexts.}  Preceding work~\cite{Agrawal-bandits-95,AuerOS/07,Cope/06,Bobby-nips04,yahoo-bandits07} has
studied the problem  in a few specific metric spaces such as a one-dimensional real interval. The prior work considered regret $R(t)$ as a function of time $t$, and focused on the asymptotic dependence of $R(t)$ on, loosely speaking, the dimensionality of the metric space. Various upper and lower bounds of the form $R(t) = \tilde{\Theta}(t^\gamma)$ were proved, where the exponent $\gamma<1$ depends on the metric space. In particular, if the metric space is the interval $[0,1]$ with the standard metric $d(x,y) = |x-y|$, then there exists an algorithm with regret $R(t) = \tilde{O}(t^{2/3})$, and this bound is tight up to polylog factors~\cite{Bobby-nips04}. More generally, for an arbitrary infinite metric space $(X,d)$ one can define an isometry invariant $\gamma = \gamma(X,d) \in [\tfrac12, 1]$ such that there exists an algorithm with regret $R(t) = \tilde{O}(t^\gamma)$, which is tight up to polylog factors if $\gamma>\tfrac12$; see~\cite{LipschitzMAB-stoc08}.

The following picture emerges.  Although algorithms with regret
$R(t) =  O(t^{\gamma}), \, \gamma < 1$ are known for most metric
spaces, existing work unfortunately provides no examples of
infinite metric spaces admitting bandit algorithms satisfying
the Lai-Robbins regret bound $R(t) = O(\log t)$, although this
bound holds for
all finite metrics.  In fact, for most metric spaces that have
been studied (such as the unit interval) this possibility is
\emph{excluded} by known lower bounds of the form $R(t) \not\in
o(t^{\gamma}),$ where $\gamma \geq \frac12.$
Therefore it is natural to ask,
\newpage
\begin{itemize}
\item Is   $\tilde{O}(\sqrt{t})$ regret the best possible for an infinite metric space? Alternatively, are there infinite metric spaces for which one can achieve regret $O(\log t)$? Is there any metric space for which the best possible regret is \emph{between} $O(\log t)$ and $\tilde{O}(\sqrt{t})$?
\end{itemize}

\xhdr{Our contributions.}
To make the above issue more concrete, let us put forward the following definition.
\begin{definition}\label{def:tractability}
Consider the Lipschitz MAB problem on a fixed metric space. A bandit algorithm is \emph{$f(t)$-tractable} if for any problem instance $\mathcal{I}$ the algorithm's regret is
	$R(t) = O_{\mathcal{I}}(f(t))$.~\footnote{The notation $O_{\mathcal{I}}()$ means that the constant in $O()$ can depend on $\mathcal{I}$.}
The problem is \emph{$f(t)$-tractable} if such an algorithm exists.
\end{definition}
We settle the questions listed above by proving the following dichotomy.

\begin{theorem}\label{thm:main-MAB}
Consider the Lipschitz MAB problem on a fixed metric space $(X,d)$. Then the following dichotomy holds: either the problem is $f(t)$-tractable for every $f\in \omega(\log t)$, or it is not $g(t)$-tractable for any $g\in o(\sqrt{t})$.  In fact, the former occurs if and only if the completion of $X$ is a compact metric space with countably many points.
\end{theorem}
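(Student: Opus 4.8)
The plan is to prove Theorem \ref{thm:main-MAB} by establishing a sharp dichotomy controlled by the topological structure of the completion $\bar{X}$ of $(X,d)$. The two halves are naturally split: (1) the \emph{positive} direction, showing that if $\bar{X}$ is compact and countable then the problem is $f(t)$-tractable for every $f\in\omega(\log t)$; and (2) the \emph{negative} direction, showing that if $\bar{X}$ is \emph{not} compact-and-countable (i.e.\ either non-compact, or compact but uncountable) then the problem fails to be $g(t)$-tractable for any $g\in o(\sqrt{t})$.

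\medskip\noindent\textbf{Negative direction.} The key observation is that a compact metric space is countable if and only if it contains no nonempty \emph{perfect} set (a closed set with no isolated points), by the Cantor--Bendixson theorem: any closed subset of a Polish space decomposes as a countable scattered part plus a perfect kernel, and the kernel is nonempty precisely when the set is uncountable. So I would first reduce to two cases. If $\bar{X}$ is uncountable and compact, extract a nonempty perfect subset $P\subseteq\bar{X}$; a perfect set in a complete metric space supports a Cantor-like embedding, giving enough ``room'' to plant a hard family of Lipschitz payoff functions. The standard continuum-armed-bandit lower bound (as in~\cite{Bobby-nips04,LipschitzMAB-stoc08}) builds a binary-tree adversary that hides a bump of height $\epsilon$ in one of $\sim 1/\epsilon$ well-separated balls; a perfect set guarantees an infinite nested sequence of such separated balls at every scale, forcing $\Omega(\epsilon^{-2})$ samples to locate the bump and hence regret $\Omega(\sqrt t)$, i.e.\ no $o(\sqrt t)$ bound. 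If instead $\bar X$ is non-compact, it fails to be totally bounded, so for some $\epsilon>0$ there are infinitely many $\epsilon$-separated points; this again yields an unbounded supply of separated balls at a single scale, which suffices to defeat any $o(\sqrt t)$-tractable algorithm by the same bump-hiding argument. The main technical care here is that lower bounds must be transferred from $\bar X$ back to $X$, using that $X$ is dense in its completion and payoffs are $1$-Lipschitz, so a hard instance on $\bar X$ can be approximated arbitrarily well by one supported near points of $X$.

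\medskip\noindent\textbf{Positive direction.} Suppose $\bar X$ is compact and countable. Here I would exploit Cantor--Bendixson more quantitatively: a countable compact metric space is \emph{scattered}, and its Cantor--Bendixson rank is a countable ordinal, giving a transfinite stratification of $\bar X$ by iterated removal of isolated points. The intuition is that a scattered compact space behaves, at each fixed scale $\epsilon$, like a \emph{finite} set: total boundedness gives a finite $\epsilon$-net, but more importantly the scattered structure means the ``effective number of arms'' grows sub-polynomially, indeed only poly-logarithmically, in $1/\epsilon$. The plan is to run a zooming/UCB-style algorithm (in the spirit of~\cite{LipschitzMAB-stoc08}) that adaptively refines its active set, and to bound its regret by charging exploration cost against the covering numbers at successive scales. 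The scattered/countable structure should force these covering numbers to be controlled tightly enough that the total regret is $O(\log t)$ times a space-dependent constant, or more precisely $f(t)$ for any $f\in\omega(\log t)$.

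\medskip I expect the \textbf{main obstacle} to be the positive direction: turning the qualitative topological fact ``countable and compact'' into a quantitative regret bound. Unlike a finite metric space, a countable compact space can have arbitrarily complex Cantor--Bendixson rank, so a naive induction on rank will not yield a uniform constant; instead the constant must be allowed to depend on the instance $\mathcal I$ (which is exactly what $f(t)$-tractability permits via the $O_{\mathcal I}(\cdot)$ convention). The delicate point is that for a \emph{fixed} instance the payoff function $\mu$, being Lipschitz on a scattered space, has an optimum that is either attained at an isolated point or approached along a convergent sequence; in either case, after finitely much exploration the algorithm can localize to a shrinking neighborhood whose covering number is bounded by an instance-dependent constant, after which it incurs only $O(\log t)$ regret in the classical finite-armed sense. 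Making this localization argument rigorous—showing the transient exploration cost before localization is finite, and that it does not reintroduce a $\sqrt t$ term—is the crux of the proof.
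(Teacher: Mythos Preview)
Your negative direction is essentially the paper's: Cantor--Bendixson yields a perfect subset when $\bar X$ is compact and uncountable, on which the paper builds a \emph{ball-tree} and an ensemble of hard instances forcing $\Omega(\sqrt t)$ regret infinitely often (Section~\ref{sec:lower-bound}); the non-compact case is handled by a separate construction on infinitely many disjoint balls of fixed radius (Section~\ref{sec:boundary-body}). The reduction between $X$ and $\bar X$ is Lemma~\ref{lm:reduction}. This part is fine.

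The positive direction has a genuine gap. First, the claim that the scattered structure forces covering numbers to grow ``only poly-logarithmically'' is false: already $X=\{0\}\cup\{1/n:n\in\N\}$ has $N_\delta(X)$ of order $\delta^{-1/2}$, and by taking finite products one gets countable compact spaces of arbitrarily large covering dimension. Second, and more importantly, a zooming/UCB-style analysis cannot yield $\omega(\log t)$-tractability here. Zooming at dimension $0$ gives only $\tilde O(\sqrt t)$; and the finite-armed bound $\sum_{x\neq x^*}(\log t)/\Delta(x)$ diverges whenever infinitely many suboptimal arms accumulate at the optimum with gaps $\Delta(x)\to 0$, which is precisely what happens at a limit point. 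Your fallback intuition (``localize, then play finitely many arms'') does not close this: you have not explained how the algorithm ever stops paying exploration cost on the infinitely many near-optimal arms.

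The paper's mechanism is the missing idea and is genuinely different from zooming. A countable compact space admits a \emph{topological well-ordering} $\prec$ in which every initial segment is open (Lemma~\ref{lm:topological-equivalence}). The key structural Lemma~\ref{lm:structural} then shows that the $\prec$-maximal optimal point $x^*$ satisfies $\sup(\mu, X\setminus S(x^*)) < \mu^*$ strictly. The algorithm exploits this via an \emph{ordering oracle}: after sampling each point of a fine $\delta$-cover $n$ times, it discards ``loser'' balls and queries the oracle for the $\prec$-maximal point covered by the survivors. For instance-dependent thresholds on $(\delta^{-1},n)$ this returns $x^*$ \emph{exactly}, after which the algorithm plays $x^*$ and incurs zero further regret; running this in doubly-exponential phases gives the bound. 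The algorithm does not recurse on Cantor--Bendixson rank; the well-ordering replaces that transfinite induction in a single step. (Section~\ref{sec:simpler-alg} does give a rank-based variant using only covering oracles, but it applies only for \emph{finite} rank.)
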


It is worth mentioning that the regret bound $R(t) = O_{\mathcal{I}}(\log t)$ is the best possible, even for two-armed bandit problems, by a lower bound of Lai and Robbins~\cite{Lai-Robbins-85}.  Thus our upper bound for Lipschitz MAB problems in compact, countable metric spaces is nearly the best possible bound for such spaces, modulo the gap between ``$f(t) = \log t$" and ``$\forall f\in\omega(\log t)$". Furthermore, we show that this gap is inevitable for infinite metric spaces:

\begin{theorem}\label{thm:logT}
For every infinite metric space $(X,d)$, the Lipschitz MAB problem on $(X,d)$ is not $(\log t)$-tractable.
\end{theorem}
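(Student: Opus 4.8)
The plan is to dispose of all metric spaces except the compact countable ones by invoking Theorem~\ref{thm:main-MAB}, and then to give a direct super-logarithmic lower bound in the remaining case. Indeed, if the completion of $X$ is \emph{not} a compact metric space with countably many points, then Theorem~\ref{thm:main-MAB} asserts that the problem is not $g(t)$-tractable for any $g\in o(\sqrt t)$; since $X$ is infinite and $\log t\in o(\sqrt t)$, it is in particular not $(\log t)$-tractable. So assume from now on that $\bar X$, the completion of $X$, is compact and countable while $X$ is infinite. Compactness forces $X$ to have an accumulation point $x_*\in\bar X$, so we may choose distinct points $x_1,x_2,\ldots\in X$ with $x_n\to x_*$, and after passing to a subsequence we may assume the radii $\rho_n:=d(x_n,x_*)$ satisfy $\rho_{n+1}<\rho_n/4$; this geometric separation yields $d(x_m,x_n)\ge \tfrac34\,\rho_m$ whenever $m<n$.

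Next I build a single fixed \emph{base} instance $\mathcal I$ on $X$ together with a family of perturbations $\mathcal I'_n$, and argue by contradiction: suppose some algorithm $\mathcal A$ were $(\log t)$-tractable, so that on \emph{every} instance its regret is $O(\log t)$ with an instance-dependent constant. Fix a decreasing sequence of gaps $\delta_n\downarrow 0$ with $\delta_n\le \tfrac14\,\rho_n$, and let the payoffs be Bernoulli with means $\mu(x_n)=\tfrac12-\delta_n$ and $\mu(x_*)=\tfrac12$. The geometric separation makes $\mu$ $1$-Lipschitz on $\{x_*\}\cup\{x_n\}$, and a McShane extension capped at $\tfrac12$ turns it into a valid Lipschitz instance on all of $X$ with $\sup_X\mu=\tfrac12$. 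For each $n$ I define $\mathcal I'_n$ by raising only the arm $x_n$: take the pointwise maximum of $\mu$ with the tent function $y\mapsto (\tfrac12+\delta_n)-d(y,x_n)$. The inequality $d(x_m,x_n)\ge\tfrac34\rho_{\min(m,n)}>\delta_n+\delta_m$ guarantees that this bump exceeds $\mu$ only at $x_n$, so $\mathcal I'_n$ is $1$-Lipschitz, agrees with $\mathcal I$ at every other arm, and makes $x_n$ the unique optimum with value $\tfrac12+\delta_n$ and gap at least $\delta_n$ to all other arms.

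The engine of the proof is the classical change-of-measure (Lai--Robbins) lower bound, applied one index $n$ at a time. Writing $N_n(t)$ for the number of pulls of $x_n$, the regret on $\mathcal I$ is at least $\delta_n\,\expect_{\mathcal I}[N_n(t)]$. Since $\mathcal A$ has regret $O(\log t)$ on both $\mathcal I$ and $\mathcal I'_n$, the former gives $\expect_{\mathcal I}[N_n(t)]=o(t)$ and the latter gives $\expect_{\mathcal I'_n}[t-N_n(t)]=o(t)$ (using that every non-$x_n$ arm has gap $\ge\delta_n$ under $\mathcal I'_n$). Because $\mathcal I$ and $\mathcal I'_n$ differ only in the law of arm $x_n$, the divergence-decomposition identity lets me bound the discrepancy in these two quantities by $\expect_{\mathcal I}[N_n(t)]\cdot \mathrm{KL}(\tfrac12-\delta_n\,\|\,\tfrac12+\delta_n)$, and the Bretagnolle--Huber inequality then yields $\expect_{\mathcal I}[N_n(t)]\ge (1-o(1))\,\log t\,/\,\mathrm{KL}(\tfrac12-\delta_n\,\|\,\tfrac12+\delta_n)$ as $t\to\infty$, for each fixed $n$. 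Since $\mathrm{KL}(\tfrac12-\delta_n\,\|\,\tfrac12+\delta_n)=O(\delta_n^2)$, this forces $\liminf_{t\to\infty} R(t)/\log t=\Omega(1/\delta_n)$. As this holds for \emph{every} fixed $n$ and $1/\delta_n\to\infty$, we conclude $\liminf_{t\to\infty}R(t)/\log t=\infty$, contradicting $R(t)=O_{\mathcal I}(\log t)$ and proving that no algorithm can be $(\log t)$-tractable.

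The main obstacle is the third paragraph: carefully establishing the per-arm logarithmic lower bound in our \emph{infinite}-armed, non-attained-supremum setting. The classical Lai--Robbins theorem is stated for finitely many arms and an attained optimum, so I must verify that the change-of-measure argument localizes correctly to the single arm $x_n$ (only its law changes between $\mathcal I$ and $\mathcal I'_n$, so the divergence decomposition involves $N_n(t)$ alone), and that measuring regret against the supremum $\tfrac12$ rather than against an optimal arm loses nothing, since each pull of $x_n$ costs exactly $\delta_n$. The remaining ingredients---the geometric subsequence, the Lipschitz extension, and the feasibility of the bumps $\mathcal I'_n$---are routine. The crucial subtlety is the quantifier order: I take $\liminf$ in $t$ for \emph{fixed} $n$ and only then let $n\to\infty$, which is exactly what avoids the spurious ``total pulls exceed $t$'' objection that would arise from trying to sum the per-arm bounds simultaneously.
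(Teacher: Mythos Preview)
Your overall strategy matches the paper's: reduce to a compact completion, pick a convergent sequence $x_n\to x_*$, set up a base instance with the $x_n$ slightly suboptimal, use change-of-measure against perturbations making $x_n$ optimal to force $\Omega(\delta_n^{-2}\log t)$ pulls near $x_n$, and conclude the base-instance regret is $\Omega(\delta_n^{-1}\log t)$ for every $n$. (The paper reduces only to compact via Theorem~\ref{thm:boundary-of-tractability}; your stronger reduction via Theorem~\ref{thm:main-MAB} is also fine since that theorem does not depend on Theorem~\ref{thm:logT}.)

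There is, however, a real gap in the localization step. You claim the tent bump ``exceeds $\mu$ only at $x_n$'' and that $\mathcal I'_n$ has ``gap at least $\delta_n$ to all other arms,'' but the inequality $d(x_m,x_n)>\delta_n+\delta_m$ only checks the other \emph{sequence points} $x_m$, not arbitrary points of $X$. Nothing in your setup rules out points $y\in X$ with $0<d(y,x_n)<\delta_n$; at any such $y$ the tent does exceed $\mu$, so $\mathcal I'_n$ differs from $\mathcal I$ on a full neighborhood of $x_n$, not at a single arm. This breaks both the single-arm divergence decomposition (the KL now involves all pulls in that neighborhood, not just $N_n(t)$) and the gap claim under $\mathcal I'_n$ (points near $x_n$ have payoff approaching $\tfrac12+\delta_n$, so pulling them costs almost nothing, and the Lai--Robbins step ``$\mathbb E_{\mathcal I'_n}[t-N_n(t)]=o(t)$'' fails).

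The paper avoids this by working with balls from the start: it counts $N_i(t)$ as the number of pulls inside $B_i=B(x_i,r_i/3)$, bounds the per-pull KL by $O(r_i^2)$ uniformly on $B_i$, and uses that every point of $B_i$ is $\Omega(r_i)$-suboptimal under the base instance $\mu_0$. You can repair your argument the same way. Alternatively, since you have already reduced to a compact \emph{countable} completion, isolated points of $X$ are dense, so you may first pass to a subsequence of isolated $x_n$'s and then take $\delta_n$ small enough that $B(x_n,2\delta_n)\cap X=\{x_n\}$; with that choice your single-arm argument goes through as written.
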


We turn our attention to the \emph{full-feedback} version of the Lipschitz MAB problem. For any MAB problem there exists a corresponding \emph{full-feedback} problem in which after each round, the payoffs from all strategies are revealed.\footnote{Formally, an algorithm can query an arbitrary finite number of strategies.} Such settings have been extensively studied in the online learning literature under the name \emph{best experts problems}~\cite{experts-jacm97,CesaBL-book,vovk98}. In particular, for a finite set of strategies one can achieve a \emph{constant} regret~\cite{sleeping-colt08} when payoffs are i.i.d. over time.

In addition to the full feedback, one could also consider a version in which the payoffs are revealed for some but not all strategies. Specifically, we define the \emph{double feedback}, where in each round the algorithm selects two strategies: the ``bet" for which it receives the payoff, and the ``free peek". After the round, the payoffs are revealed for both strategies. By abuse of notation, we will treat the bandit setting as a special case of the experts setting.

The experts version of the Lipschitz MAB problem, called the \emph{Lipschitz experts problem}, is defined in the obvious way: a problem instance is specified by a triple $(X,d,\prob)$, where $(X,d)$ is a metric space and $\prob$ is a Borel probability measure on the set $[0,1]^X$ of payoff functions on $X$ (with the Borel $\sigma$-algebra
induced by the product topology on $[0,1]^X$)
such that the expected payoff function
    $x \mapsto E_{f \in \prob}[f(x)]$
is a Lipschitz function on $(X,d)$. In each round an algorithm is presented with an i.i.d. sample from $\prob$. The metric structure of $(X,d)$ is known to the algorithm, the measure $\prob$ is not.  We show that the Lipschitz experts problem exhibits a dichotomy similar to the one in Theorem~\ref{thm:main-MAB}. We formulate the upper bound for the double feedback, and the lower bound for the full feedback, thus avoiding the issue of what it means for an algorithm to receive feedback for infinitely many strategies.

\begin{theorem}\label{thm:main-experts}
The Lipschitz experts problem on a fixed metric space $(X,d)$ is either $1$-tractable, even with double feedback, or it is not $g(t)$-tractable for any $g\in o(\sqrt{t})$, even with full feedback. The former occurs if and only if the completion of $X$ is a compact metric space with countably many points.
\end{theorem}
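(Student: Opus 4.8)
The plan is to prove the two directions of the dichotomy separately, anchored by the observation that the classifying condition is exactly the one already in play for Theorem~\ref{thm:main-MAB}. By the Cantor--Bendixson theorem the completion $\bar X$ of $(X,d)$ is compact and countable if and only if its perfect kernel is empty, in which case $\bar X$ has a countable Cantor--Bendixson rank; otherwise $\bar X$ is either non-compact (hence not totally bounded) or compact but containing a nonempty perfect set. First I would record this trichotomy and reuse the structural lemmas on $\epsilon$-nets, packings, and perfect sets developed for the bandit dichotomy, so that only the learning-theoretic content specific to the experts model has to be redone.

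For the positive direction---completion compact and countable implies $1$-tractability with double feedback---I would induct transfinitely on the Cantor--Bendixson rank of $\bar X$. The base case is a finite strategy set, where constant regret is attainable: bet on the empirically best strategy while cycling the free peek round-robin through all strategies, so that after an instance-dependent constant number of rounds every mean gap is resolved and all subsequent bets are optimal; this is the double-feedback counterpart of the finite i.i.d.\ best-experts bound~\cite{sleeping-colt08}. For the inductive step, compactness supplies a finite $\epsilon$-net at every scale, and emptiness of the perfect kernel forces the near-optimal points to concentrate, as $\epsilon \to 0$, onto finitely many points of strictly smaller rank. The free peek lets the algorithm fold each net ball into a single monitored representative, maintain a shrinking active set, and commit to the truly optimal region after only constantly many bet-mistakes. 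The delicate accounting is at limit ordinals of the rank hierarchy, where I must guarantee that the total number of peeks and mistakes stays bounded by a single instance-dependent constant rather than growing with $t$.

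For the negative direction---completion not compact-and-countable implies no $g\in o(\sqrt t)$ is tractable, even under full feedback---the crucial modeling point is that full feedback still permits only finitely many strategy queries per round, so the algorithm gathers finite information per step and cannot instantly evaluate an infinite strategy set. I would then build a single fixed instance $\prob$ by superimposing independent hard sub-instances with mean gaps $\Delta_k = 2^{-k}$, each hiding a $\Delta_k$-advantaged strategy among enough lookalikes. In the non-compact case these sub-instances occupy disjoint groups drawn from an infinite uniformly separated set at a common spatial scale; in the compact-uncountable case they are laid out along a branching tree of nested balls inside the perfect kernel, at shrinking spatial scales, with $\Delta_k$ no larger than the corresponding ball radii so that the mean function stays $1$-Lipschitz. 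The building block is the estimation bound that telling a $\Delta$-gap Bernoulli from its neighbors needs $\Omega(1/\Delta^2)$ samples; since the algorithm pays $\Omega(\Delta_k)$ per round until it localizes gap $k$, sub-instance $k$ forces $\Omega(\Delta_k \cdot \Delta_k^{-2}) = \Omega(2^k)$ regret around time $t_k \asymp 4^k$, i.e.\ $\Omega(\sqrt{t_k})$. Scheduling the scales so that some sub-instance is always freshly active converts this into a persistent $\Omega(\sqrt t)$ lower bound.

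I expect the upper bound to be the main obstacle. The lower bound closely follows the bandit construction behind Theorem~\ref{thm:main-MAB}, its only genuinely new ingredients being the per-round finite-information argument that survives the stronger full-feedback model and the scheduling that keeps the $\sqrt t$ growth from decaying into a transient. The upper bound, by contrast, must squeeze a true constant---not merely $\omega(\log t)$ as in the bandit case---out of the double-feedback advantage, and the transfinite induction has to be organized so that the instance-dependent constant does not blow up across limit stages of the Cantor--Bendixson rank. Obtaining a clean, rank-uniform bound on the peeks and mistakes spent while the active set stabilizes is, I believe, the crux of the whole argument.
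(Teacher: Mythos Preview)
Your lower bound has a genuine gap: a \emph{single fixed} problem instance $\prob$ cannot defeat every algorithm, because for any fixed $\prob$ the expected payoff function $\mu$ is determined and the algorithm that always plays an optimal point achieves zero regret. The paper's lower bound (Theorem~\ref{thm:lower-bound}) instead builds a ball-tree from the perfect subspace and uses it to define a \emph{distribution} $\mathcal{P}_T$ over instances $\prob_\lambda$; for any algorithm $\A$ it then shows that a random instance drawn from $\mathcal{P}_T$ almost surely witnesses $R_{(\A,\prob)}(t)/g(t)\to\infty$, which yields for each $\A$ the existence of a bad instance. Relatedly, your framing of full feedback as ``finitely many strategy queries per round'' is not the operative mechanism: in the paper's model the observation history is a sequence in $[0,1]^X$, and the lower bound works because the competing \emph{measures} on $[0,1]^X$ are close---the $(\eps,\delta,k)$-ensemble condition (Definition~\ref{def:ensemble}) bounds the likelihood ratio on every Borel event, hence the KL-divergence between $\prob_0$ and each $\prob_i$, and Theorem~\ref{thm:LB-technique} converts this into a regret lower bound regardless of how much of each $\pi_t$ the algorithm inspects.

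Your upper bound takes a genuinely different route. The paper does not induct on Cantor--Bendixson rank; instead it equips the compact countable space with a topological well-ordering $\prec$ (Definition~\ref{def:topology}, existence via Lemma~\ref{lm:topological-equivalence}) and proves the structural Lemma~\ref{lm:structural}: there is an optimal $x^*$ which is $\prec$-maximal among optima, and $\sup(\mu,\, X\setminus S(x^*)) < \mu^*$ strictly. The double-feedback algorithm then runs the exploration subroutine $\explSub$ on the free peeks---sample every point of a fine covering, discard losers, and call an \emph{ordering oracle} that returns the $\prec$-maximal point among the survivors' balls---while betting on the previous phase's output. Once the covering radius and confidence radius drop below the instance-dependent gap, the oracle returns exactly $x^*$ and all subsequent bets are optimal, giving constant regret with no transfinite bookkeeping at all. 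Your limit-ordinal worry is thus sidestepped rather than solved: the well-ordering absorbs the entire ordinal structure into a single oracle call. The paper does present a rank-based variant in Section~\ref{sec:simpler-alg}, but only for \emph{finite} Cantor--Bendixson rank, precisely to trade the ordering oracle for a finite collection of covering oracles; pushing that variant through arbitrary countable rank via the transfinite induction you outline is not attempted there and would face exactly the accounting problem you identify as the crux.
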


Theorems~\ref{thm:main-MAB} and~\ref{thm:main-experts} assert a dichotomy between metric spaces on which the Lipschitz MAB/experts problem is very tractable, and those on which it is somewhat tractable. Let us consider the opposite end of the ``tractability spectrum" and ask for which metric spaces the problem becomes completely intractable. We obtain a precise characterization: the problem is completely intractable if and only if the metric space is not pre-compact. Moreover, our upper bound is for the bandit setting, whereas the lower bound is for full feedback.

\begin{theorem}\label{thm:boundary-of-tractability}
The Lipschitz experts problem on a fixed metric space $(X,d)$ is either $f(t)$-tractable for some $f\in o(t)$, even in the bandit setting, or it is not $g(t)$-tractable for any $g\in o(t)$, even with full feedback. The former occurs if and only if the completion of $X$ is a compact metric space.
\end{theorem}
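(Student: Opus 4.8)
The plan is to recast the stated dichotomy as a statement about \emph{total boundedness}, using the standard fact that the completion of $(X,d)$ is compact if and only if $(X,d)$ is totally bounded (pre-compact), i.e.\ for every $\epsilon>0$ there is a finite $\epsilon$-net. The two alternatives in the theorem then correspond exactly to the two sides of this property, and I would prove each side separately: total boundedness yields a \emph{uniform} sublinear regret bound (established in the weakest feedback model, the bandit setting), while its failure rules out any uniform sublinear rate (established in the strongest feedback model, full feedback). I would first isolate this topological equivalence as a lemma so that the rest of the argument is purely about $\epsilon$-nets and $\epsilon$-separated sets.

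For the upper bound, assume $(X,d)$ is totally bounded and let $N(\epsilon)<\infty$ be the size of a minimal $\epsilon$-net $S_\epsilon$. The idea is a coarse ``uniform discretization'': run a finite-armed stochastic bandit algorithm (e.g.\ UCB1) on $S_\epsilon$. The Lipschitz hypothesis bounds the discretization bias by $\epsilon$ per round and the finite-armed guarantee contributes $\tilde O(\sqrt{N(\epsilon)\,t})$, so the regret is at most $\epsilon\, t + \tilde O(\sqrt{N(\epsilon)\,t})$. The key observation is that $N(\epsilon)$ depends only on the metric space, not on the instance; hence optimizing over a slowly shrinking schedule $\epsilon=\epsilon(t)\to 0$ produces a single function $f(t)=o(t)$ that bounds the regret of one algorithm on \emph{every} instance. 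I would realize the schedule by a doubling/epoch construction, restarting UCB1 on $S_{\epsilon_k}$ with $\epsilon_k\downarrow 0$ chosen so that $N(\epsilon_k)\log t/t\to 0$; this also removes any dependence on a known horizon and keeps the bound in the bandit model.

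For the lower bound, assume $(X,d)$ is not totally bounded and extract an infinite $\epsilon$-separated sequence $y_1,y_2,\dots$; on this set the Lipschitz constraint is vacuous, and any payoff assignment there extends to a $1$-Lipschitz function on $X$ by the McShane--Whitney formula, so it suffices to defeat full-feedback algorithms on the ``unstructured'' countable-arm experts problem. The subtle point I would stress is that we are \emph{not} claiming linear regret on any single instance---indeed a ``best empirical arm among the first $N(t)$ arms'' rule already achieves instance-dependent sublinear regret on every fixed instance, since the union bound keeps all estimation errors $O(\sqrt{\log t / t})$ while $\max_{i\le t}\mu_i\to\mu^\ast$---but rather that \emph{no uniform sublinear rate exists}: for every algorithm $A$ and every $g\in o(t)$ I must produce an instance with $\limsup_t R_A(t)/g(t)=\infty$. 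To force the regret to grow without plateauing, I would use an instance whose optimal value $\mu^\ast=\sup_x\mu(x)$ is \emph{approached but not attained}, organized into levels: level $k$ consists of arms of mean $\mu^\ast-\Delta_k$ placed only at indices beyond some rapidly growing $M_k$, with $\Delta_k\downarrow 0$. Achieving average payoff within $\Delta_k$ of $\mu^\ast$ then requires localizing a level-$k$ arm, and the parameters $(\Delta_k,M_k)$---chosen with knowledge of both $A$ and $g$---can be tuned so that the regret accrued before level $k$ is reached exceeds $g$ at a sequence of times $t_k\to\infty$.

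The main obstacle is precisely this lower bound for \emph{arbitrary} full-feedback algorithms: I must show information-theoretically that no algorithm can localize a level-$k$ arm substantially faster than the naive estimator, i.e.\ that near-optimal arms can be ``hidden'' at every scale simultaneously. The plan is a Bayesian/minimax argument in which the identity of the good arms within each level is randomized, and the amount learnable from $t$ i.i.d.\ samples is controlled by a Fano-type $\mathtt{KL}$-divergence bound of order $t\,\Delta_k^2$; since a level hides its good arms among $\gg \exp(t\,\Delta_k^2)$ candidates, the posterior cannot concentrate and the per-round regret stays $\Omega(\Delta_k)$ throughout the relevant window. The remaining care is bookkeeping: verifying that the level construction yields a genuine i.i.d.\ Borel measure on $[0,1]^X$ with Lipschitz mean, and that the windows can be stitched together so that the $\limsup$ of $R_A(t)/g(t)$ is infinite for the single constructed instance rather than merely unbounded across instances.
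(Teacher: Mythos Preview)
Your plan is correct and follows essentially the same route as the paper. The upper bound---run a finite-armed bandit algorithm (UCB1) on an $\epsilon$-net, in phases with $\epsilon\to 0$ and phase lengths tuned to the covering numbers---is exactly what the paper does; the paper is just more explicit about the phase-duration schedule (it sets $t_i$ as a maximum of $t_i^*$, $t_{i+1}^*$, and $2\sum_{j<i}t_j$, where $t_k^*$ is the time at which the UCB regret term balances $\epsilon_k t$), which is the piece you defer to ``choosing $\epsilon_k$ so that $N(\epsilon_k)\log t/t\to 0$.''

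For the lower bound your ``hide a near-optimal arm in each level among exponentially many decoys, then stitch levels together so regret is large at a sequence of times $t_k\to\infty$'' is exactly the paper's construction. Two small differences are worth noting. First, the paper's decoys use random $\pm 1$ signs on wedge functions (the good ball always shows $+1$), so the analysis is combinatorial---after $t$ rounds the good arm is uniform among the $\sim 2^{-t}$ fraction of decoys that have never shown $-1$---rather than the Fano/$\mathtt{KL}$ argument you propose with gap $\Delta_k$; both work, and your version is arguably more standard. Second, the paper's construction depends only on $g$, not on the algorithm $A$: it defines a single distribution $\mathcal{P}$ over instances and shows via Borel--Cantelli that almost surely $R_{(\A,\prob)}(t_k)\ge (k/2)g(t_k)$ for all but finitely many $k$. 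This is slightly stronger than your plan of tailoring $(\Delta_k,M_k)$ to $A$, and it is what allows the paper to turn ``expected regret over the prior is large at each $t_k$'' into ``a single instance has $\limsup R/g=\infty$''---the step you flag as ``stitching the windows together.''
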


Consider the \FFproblem. In view of the $\sqrt{t}$ lower bound from Theorems~\ref{thm:main-experts}, we are interested in matching upper bounds. Gupta et al.~\cite{Anupam-experts07} observed that such bounds hold for every metric space $(X,d)$ of finite covering dimension: namely, the Lipschitz experts problem on $(X,d)$ is $\sqrt{t}$-tractable. (Their algorithm is a version of the ``naive algorithm'' from~\cite{Bobby-nips04, LipschitzMAB-stoc08}.) Therefore it is natural to ask whether there exist metric spaces for which the optimal regret in the Lipschitz experts problem is \emph{between} $\sqrt{t}$ and $t$. We settle this question by proving a characterization with nearly matching upper and lower bounds in terms of a novel dimensionality notion tailored to the experts problem.

\begin{theorem}\label{thm:experts-MaxMinLCD}
For any metric space $(X,d)$, there exist an isometry invariant $b= b(X,d)$ such that the \FFproblem\ on $(X,d)$ is $(t^\gamma)$-tractable for any $\gamma> \tfrac{b+1}{b+2}$, and not $(t^\gamma)$-tractable for any $\gamma < \tfrac{b-1}{b}$. Depending on the metric space, $b(X,d)$ can take any value on a dense subset of
    $[0, \infty)$.
\end{theorem}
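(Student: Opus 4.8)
\emph{Overview of the plan.} The invariant $b$ must measure how fast the $\epsilon$-covering number of the space blows up. Write $\Cov_\epsilon(S)$ for the least number of sets of diameter $\le\epsilon$ needed to cover $S$, and define the \emph{Lipschitz covering dimension}
\[
  \LCD(S)\;=\;\limsup_{\epsilon\to 0}\;\frac{\log\log \Cov_\epsilon(S)}{\log(1/\epsilon)},
\]
so that $\log\Cov_\epsilon(S)\approx\epsilon^{-\LCD(S)}$. This is the right normalization because, for the full-feedback i.i.d.\ experts problem on $N$ strategies, the regret of an optimal stochastic best-expert algorithm is $\tilde O(\sqrt{t\log N})$ with a matching minimax lower bound; hence it is $\log\Cov_\epsilon$, not $\Cov_\epsilon$ itself, that governs the cost of discretizing at scale $\epsilon$. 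A space of finite covering dimension has $\Cov_\epsilon=\epsilon^{-O(1)}$, so $\LCD=0$ and we recover the $\sqrt t$ regime of Gupta et al. To get an isometry invariant that also controls the lower bound I take the localized max--min version
\[
  b\;=\;\MaxMinLCD(X,d)\;=\;\sup_{Y}\;\inf_{x\in Y,\,r>0}\;\LCD\bigl(Y\cap B(x,r)\bigr),
\]
the $\sup$ ranging over nonempty closed $Y\subseteq X$; the inner $\inf$ forces the witness $Y$ to be \emph{uniformly} high-dimensional, which is exactly what the recursive multiscale construction below will require.

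\emph{Upper bound.} For $\gamma>\tfrac{b+1}{b+2}$ I run the ``naive'' algorithm: fix a scale $\epsilon$, pick one representative per cell of an $\epsilon$-cover, and run an optimal stochastic best-expert algorithm on the $\Cov_\epsilon$ representatives. Since $\expect_{f\sim\prob}[f]$ is Lipschitz, restricting to representatives costs at most $\epsilon$ per round, so the regret is $O\bigl(\epsilon\,t+\sqrt{t\log\Cov_\epsilon}\bigr)$. With $\log\Cov_\epsilon\approx\epsilon^{-b}$, taking $\epsilon\approx t^{-1/(b+2)}$ balances the two terms at $t^{(b+1)/(b+2)}$. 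Because $b=\MaxMinLCD$ is a \emph{local} quantity that may undershoot the global $\LCD(X)$, running this on all of $X$ is not enough; instead I would first peel $X$ into countably many pieces of covering growth at most $b+o(1)$ — a Cantor--Bendixson-style transfinite decomposition driven by the definition of $\MaxMinLCD$ — and schedule the naive algorithm over them. Full feedback makes combining countably many pieces harmless, and the strict inequality $\gamma>\tfrac{b+1}{b+2}$ leaves room to absorb the decomposition overhead and the polylogarithmic factors.

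\emph{Lower bound (the crux).} For $\gamma<\tfrac{b-1}{b}$ I must exhibit, for every algorithm, a single instance $\prob$ whose regret is not $O(t^\gamma)$; since tractability permits instance-dependent constants, one fixed measure must defeat the algorithm at arbitrarily large time scales. Take the witness $Y$ realizing the $\sup$ in $\MaxMinLCD$, so every ball of $Y$ has covering growth $\approx b$, and build a nested, multiscale ``needle in a haystack'': at scales $\epsilon_k\to0$, embed a large $\epsilon_k$-packing of $Y$ and hide inside the previous needle's neighborhood a finer needle whose expected payoff is larger by $\Theta(\epsilon_k)$ (keeping $\expect_{f\sim\prob}[f]$ $1$-Lipschitz). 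The key design choice is to reveal the needle's ``address'' through the feedback at a rate of only $O(1)$ essentially noiseless bits per round, so that locating one of $\Cov_{\epsilon_k}\approx\exp(\epsilon_k^{-b})$ candidates is information-theoretically impossible before $\Omega(\log\Cov_{\epsilon_k})=\Omega(\epsilon_k^{-b})$ rounds elapse, each charged the Lipschitz-limited gap $\Theta(\epsilon_k)$; the regret thus reaches $\Theta(\epsilon_k^{-(b-1)})=\Theta(t^{(b-1)/b})$ by time $t\approx\epsilon_k^{-b}$. The hardest part is engineering one Borel measure $\prob$ that simultaneously keeps the mean payoff Lipschitz, throttles full feedback to $O(1)$ bits per round by correlating payoffs, and nests consistently through infinitely many scales on a merely uniformly-high-dimensional (not self-similar) set $Y$. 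Note where the slack comes from: the upper bound pays for \emph{noisy} statistical estimation ($\sqrt{t\log\Cov_\epsilon}$, i.e.\ $\epsilon^{-2}$ samples per comparison), whereas this construction only forces \emph{noiseless} information gathering ($\log\Cov_\epsilon$ rounds). That mismatch is precisely the origin of the gap $\bigl(\tfrac{b-1}{b},\tfrac{b+1}{b+2}\bigr)$, and I do not expect to close it within this framework.

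\emph{Realizing $b$ densely.} Finally, to show $b(X,d)$ is dense in $[0,\infty)$ I would construct explicit spaces with prescribed covering growth: for a target value, take a nested or disjoint union of finite uniform nets whose sizes are tuned so that $\log\Cov_\epsilon\asymp\epsilon^{-b}$, then check that the max--min localization does not lower the value (each ball still carries the full growth). Varying the construction parameters sweeps out a dense set of exponents in $[0,\infty)$.
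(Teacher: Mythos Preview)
Your proposal is correct and follows essentially the same route as the paper: the invariant is $\MaxMinLCD$ built from the log-covering dimension $\LCD$; the upper bound runs the naive discretize-and-play-experts algorithm at scale $\delta=T^{-1/(b+2)}$ and globalizes it via a transfinite (Cantor--Bendixson-style) decomposition; the lower bound embeds, at each scale $r_i$, roughly $2^{r_i^{-b}}$ disjoint balls carrying independently-signed bump functions with one biased coordinate, so that the per-round feedback leaks only a constant amount of information and the $(\eps,\delta,k)$-ensemble lemma forces regret $\Omega(r_i^{1-b})$ by time $t_i\approx r_i^{-b}$. Your diagnosis of the $\bigl(\tfrac{b-1}{b},\tfrac{b+1}{b+2}\bigr)$ gap---noisy estimation upstairs versus constant-rate information revelation downstairs---and your plan to realize $b$ via uniform tree-like constructions both match the paper as well.
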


The lower bound in Theorem~\ref{thm:experts-MaxMinLCD} holds for a restricted version of \FFproblem\ in which a problem instance $(X,d,\prob)$ satisfies a further property that each function $f\in \mathtt{support}(\mathbb{P})$ is itself a Lipschitz function on $(X,d)$. We term this version the \emph{\ULproblem} (with full feedback). In fact, for this version we obtain a matching upper bound.

\begin{theorem}\label{thm:experts-MaxMinLCD-unif}
Consider the \ULproblem\ with full feedback. Fix an uncountable metric space $(X,d)$. Let $b= b(X,d)$ the isometry invariant from Theorem~\ref{thm:experts-MaxMinLCD}. Then the problem on $(X,d)$ is $(t^\gamma)$-tractable for any
    $\gamma> \max(\tfrac{b-1}{b}, \tfrac12)$,
and not $(t^\gamma)$-tractable for any
    $\gamma < \max(\tfrac{b-1}{b}, \tfrac12)$.
\end{theorem}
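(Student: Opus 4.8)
The plan is to prove Theorem~\ref{thm:experts-MaxMinLCD-unif} by combining the general bounds of Theorem~\ref{thm:experts-MaxMinLCD} with a dedicated argument that closes the gap precisely in the uniformly Lipschitz setting. Note first that the lower bound needs no new work: Theorem~\ref{thm:experts-MaxMinLCD} already asserts that its lower bound holds for the uniformly Lipschitz version, giving non-$(t^\gamma)$-tractability for $\gamma < \tfrac{b-1}{b}$, and the generic $\sqrt{t}$ lower bound from Theorem~\ref{thm:main-experts} (which applies to any uncountable metric space, since such a space has non-compact-or-uncountable completion) rules out $\gamma<\tfrac12$. Combining these two yields non-$(t^\gamma)$-tractability for every $\gamma < \max(\tfrac{b-1}{b},\tfrac12)$, matching the claimed threshold. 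So the entire content of the theorem is the improved \emph{upper} bound: achieving regret $\tilde O(t^\gamma)$ for every $\gamma > \max(\tfrac{b-1}{b},\tfrac12)$, which beats the $\tfrac{b+1}{b+2}$ exponent of Theorem~\ref{thm:experts-MaxMinLCD} whenever $b$ is small.

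First I would isolate the two regimes. When $b\ge 2$ we have $\tfrac{b-1}{b}\ge\tfrac12$, so the target exponent is $\tfrac{b-1}{b}$; when $b<2$ the target is $\tfrac12$. The plan is to build a single algorithm whose analysis is driven by the dimensionality invariant $b$ and whose exponent interpolates correctly across both regimes. The natural scheme is a discretization (``naive'') algorithm: at a target resolution $\delta$, lay down a $\delta$-net $S_\delta$ of the space, run a finite-arm full-information experts subroutine over $S_\delta$, and balance the discretization error (of order $\delta t$, by the uniform Lipschitz property) against the estimation/learning error incurred over $|S_\delta|$ experts. The crucial leverage afforded by the uniformly Lipschitz assumption is that \emph{every} sampled payoff function is itself $1$-Lipschitz, so observing payoffs on the net $S_\delta$ gives genuine information about nearby points and, more importantly, the learning cost scales with a covering-type quantity rather than the raw cardinality of the net. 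I would quantify the per-round cost of the experts subroutine so that, over $|S_\delta|$ experts and $t$ rounds with i.i.d. payoffs, the excess regret relative to the best net point is controlled by a term polynomial in $\log|S_\delta|$ times a power of $t$ governed by how the optimal LCD-type growth rate enters the invariant $b$.

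The heart of the matter is relating the covering numbers that control the learning cost to the invariant $b=b(X,d)$ from Theorem~\ref{thm:experts-MaxMinLCD}. I would recall the definition of $b$ in terms of the \MaxMinLCD\ quantity and show that, at resolution $\delta$, the relevant complexity term grows like $\delta^{-b}$ up to subpolynomial factors. Substituting this into the error decomposition gives a bound of the form $\delta t + (\text{poly-log})\,\delta^{-b/2}\sqrt{t}$ (the $\sqrt{t}$ and the exponent $b/2$ reflecting that the i.i.d. full-information experts cost over $N$ arms scales like $\sqrt{t}$ with only logarithmic, not polynomial, dependence on $N$, combined with the $\delta^{-b}$ net growth feeding into a variance/union term). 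Optimizing over $\delta$ then yields an exponent of exactly $\max(\tfrac{b-1}{b},\tfrac12)$ up to an arbitrarily small slack, which accounts for the ``$\gamma>$'' rather than ``$\gamma=$'' in the statement.

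The main obstacle I anticipate is the second step of the previous paragraph: correctly identifying which covering/packing growth rate governs the full-information experts cost and proving that it is captured by $b$ rather than by the larger covering dimension that drives Theorem~\ref{thm:experts-MaxMinLCD}. The gap between the two theorems' exponents (namely $\tfrac{b+1}{b+2}$ versus $\tfrac{b-1}{b}$) strongly suggests that the uniform Lipschitz hypothesis lets one replace a $\delta^{-(b+?)}$ estimation term by a sharper $\delta^{-b}$ term feeding into a $\sqrt{t}$ (rather than $t^{2/3}$-type) rate; making this rigorous will require a careful concentration argument exploiting that all payoff functions are Lipschitz, so that estimating payoffs on the net simultaneously controls payoffs everywhere, and that the i.i.d. structure yields $\sqrt{t}$-style deviation bounds uniform over the net via a union bound that costs only $\log|S_\delta|$. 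I would also need to verify that $b$ is well-defined and isometry-invariant as an infimum/supremum of the \LCD-type growth exponents, and that the adaptive doubling over resolutions $\delta$ (so the algorithm need not know $t$ or $b$ in advance) preserves the exponent up to the arbitrarily small $\gamma$-slack.
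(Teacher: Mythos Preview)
Your lower-bound reduction is correct, but the upper-bound plan has a genuine gap: the error decomposition you propose,
\[
\delta t \;+\; (\text{poly-log})\,\delta^{-b/2}\sqrt{t},
\]
does \emph{not} optimize to the exponent $\tfrac{b-1}{b}$. Setting the two terms equal gives $\delta^{1+b/2}=t^{-1/2}$, hence $\delta=t^{-1/(b+2)}$ and total regret $t^{(b+1)/(b+2)}$. That is exactly the bound of Theorem~\ref{thm:experts-MaxMinLCD}, not the sharper bound you are trying to establish. The sentence ``Optimizing over $\delta$ then yields an exponent of exactly $\max(\tfrac{b-1}{b},\tfrac12)$'' is an arithmetic error, and the whole scheme downstream of it collapses to the old bound.

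What you are missing is the mechanism by which the uniformly Lipschitz hypothesis buys more than a cheaper union bound. In the paper's argument (Theorem~\ref{thm:ulproblem-naive}), one applies Chernoff not to $f(x)$ but to the \emph{difference} $f(x)-f(y)$; since every $f\in\mathtt{support}(\prob)$ is $1$-Lipschitz, this random variable has range bounded by $d(x,y)$, so its deviation after $T$ samples is $d(x,y)\sqrt{(\log N)/T}$ rather than $\sqrt{(\log N)/T}$. Chaining these pairwise bounds along a covering tree (nodes at level $j$ have diameter $2^{-j}$ and multiplicity governed by the $\LCD$ bound $|\mathcal{T}_j|\le 2^{O(2^{jb})}$) makes the scale-$j$ contribution roughly $2^{-j}\sqrt{2^{jb}/T}$, which stays $O(\delta)$ at every level once $b\ge 2$ and $\delta=T^{-1/b}$. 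The upshot is that the estimation error is $O(\delta\log T)$ \emph{with no $\delta^{-b/2}$ penalty at all}, so both terms in the regret are $O(\delta T)=O(T^{(b-1)/b})$. Your ``$\sqrt{t}$ with logarithmic dependence on $N$'' intuition is too coarse to see this.

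A second omission: the invariant $b$ is $\MaxMinLCD(X)$, which can be strictly smaller than $\LCD(X)$. Running the naive algorithm on all of $X$ only gives exponent governed by $\LCD(X)$. The paper closes this gap via a transfinite $\LCD$ decomposition (Definition~\ref{def:fatness-transfinite}, Lemma~\ref{prop:fatness-dim}) together with depth and covering oracles, localizing the search to a piece of the decomposition whose $\LCD$ is at most $b$; only then does the covering-tree analysis above apply with the correct $b$. Your proposal does not address this localization step.
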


\OMIT{ 
To this end, we consider ``very high-dimensional" metric spaces such as exponentially branching edge-weighted trees and the space of all probability distributions on $[0,1]^d$ under the Earthmover distance. We introduce a novel dimensionality notion which better captures the complexity of such spaces and characterizes the regret of the ``naive" experts algorithm from~\cite{Anupam-experts07}. (Interestingly, the \ULproblem{} allows for a very non-trivial improvement in the analysis.)
} 

\OMIT{ 
For any metric space $(X,d)$, there exist an isometry-invariant parameters
$\gamma_*(X,d)$ and $\gamma^*(X,d)$ such that the \FFproblem\ on $(X,d)$ is $(t^\gamma)$-tractable for any $\gamma>\gamma^*(X,d)$, and not $(t^\gamma)$-tractable for any $\gamma<\gamma_*(X,d)$. Depending on the metric space, $\gamma^*(X,d)$ can take any value in $[\tfrac12, 1)$.  There exist metric spaces for which
$\tfrac12 < \gamma_*(X,d) < \gamma^*(X,d) < 1.$
} 

\OMIT{For any metric space $(X,d)$, there exists an isometry invariant $\gamma = \gamma(X,d)$ such that the full-feedback Lipschitz experts problem on $(X,d)$ is $f(t)$-tractable for any $f\in \omega(t^\gamma)$, and not $g(t)$-tractable for any $g\in o(t^\gamma)$.}

\OMIT{
\begin{theorem}\label{thm:experts-dim}
Consider the Lipschitz experts problem on a fixed metric space $(X,d)$. Then either the problem is $\sqrt{t}$-tractable, even with single feedback, or it is not $\sqrt{t}$-tractable, even with full feedback. The former occurs if and only if the max-min-covering dimension is finite.
\end{theorem}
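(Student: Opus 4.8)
The plan is to establish the two directions of the dichotomy separately and to arrange them as the ``even with'' sandwich demanded by the statement: the tractability bound realized in the weaker single-feedback (bandit) model, and the intractability bound proven against the stronger full-feedback model. Throughout, write $\mu(x)=\expect_{f\sim\prob}[f(x)]$ for the $1$-Lipschitz expected-payoff function, and let $d_\ast=d_\ast(X,d)$ denote the max-min-covering dimension. The guiding observation is that the raw covering numbers $N_r(X)$ overcount for our purposes: since $\mu$ is Lipschitz, any two arms inside a common ball of radius $r$ have expected payoffs within $O(r)$, so a region of large covering dimension but small diameter contributes only a handful of \emph{payoff-distinguishable} near-optimal arms. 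The invariant $d_\ast$ is precisely the isometry-invariant that measures this pruned, Lipschitz-aware complexity, formed as a $\sup$ over the adversary's choice of a subspace $Y$ on which to concentrate near-optimal payoff, of an $\inf$ over the learner's choice of a local point and scale at which to zoom into $Y$; thus $d_\ast<\infty$ if and only if every subspace admits a local view of bounded effective covering dimension.

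For the upper bound ($d_\ast<\infty \Rightarrow \sqrt t$-tractable) I would run an adaptive, zooming refinement of the naive algorithm of Gupta et al.~\cite{Anupam-experts07}. The learner maintains an active region together with a working scale $r$; it covers the active region by radius-$r$ balls, collapses each ball to a single representative, and bets on representatives, using the observed payoffs to form empirical means. Lipschitzness makes the collapse cost only $O(r)$ per round in discretization bias, while finiteness of $d_\ast$ caps the number of near-optimal representatives that must actually be \emph{separated} at scale $r$, rather than the far larger $N_r$. Feeding the surviving representatives to an i.i.d.\ experts subroutine with constant stochastic regret (in the spirit of~\cite{sleeping-colt08}) and shrinking $r$ on a doubling schedule, the discretization term $O(rt)$ balances the estimation term at the $\sqrt t$ scale; it is exactly the bounded per-scale count guaranteed by $d_\ast<\infty$ that keeps the single-feedback exploration cost at the $\sqrt t$ level and upgrades a $\tilde O(\sqrt t)$ fixed-net analysis to the clean $O_{\mathcal I}(\sqrt t)$ bound required for $\sqrt t$-tractability.

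For the lower bound ($d_\ast=\infty \Rightarrow$ not $\sqrt t$-tractable, even with full feedback) I would first use $d_\ast=\infty$ to extract a closed subspace $Y$ \emph{every} local neighborhood of which has infinite covering dimension, so that inside $Y$ there are scales $\epsilon_k\to 0$ and nested balls $B_0\supseteq B_1\supseteq\cdots$ with $\log N_{\epsilon_k}(B_{k-1})/\log(1/\epsilon_k)\to\infty$. On this skeleton I would build one fixed instance $(X,d,\prob)$ of nested needle-in-haystack type: at level $k$, exactly one of $N_k$ candidate sub-balls of $B_{k-1}$ enjoys a payoff advantage of order $\epsilon_k$, and the global optimum runs down the distinguished chain. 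The engine is the full-feedback estimation fact that separating one favored arm from $N$ equal-mean competitors to precision $\epsilon$ needs $\Omega(\log N/\epsilon^2)$ rounds, because the maximum of $N$ mean-$\tfrac12$ empirical averages sits at $\tfrac12+\Theta(\sqrt{\log N/s})$ after $s$ rounds. Since $\log N_{\epsilon_k}$ is super-logarithmic in $1/\epsilon_k$, taking $\epsilon_k\approx t^{-1/2}h(t)$ with $h(t)\to\infty$ forces a per-level regret of order $\sqrt t\,h(t)=\omega(\sqrt t)$; nesting the levels so that for every large $t$ some level's resolution time straddles $t$ yields $\limsup_t R(t)/\sqrt t=\infty$ for this single instance against any algorithm.

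The main obstacle is the lower bound, in two respects. First, the argument must defeat \emph{full} feedback: it cannot invoke the cost of exploration, since the learner observes every payoff, and must instead quantify the intrinsic hardness of estimating the maximum of many near-tied i.i.d.\ means and then propagate this hardness across infinitely many nested scales so that the regret exceeds $c\sqrt t$ infinitely often, for every constant $c$, for one fixed instance rather than merely at a single horizon. Second, the threshold must match exactly: one must verify that the same max-min quantity $d_\ast$ which makes the zooming upper bound succeed is the one whose infinitude makes a persistently, unavoidably high-dimensional subspace $Y$ available, so that the ``$\inf$ over local regions'' in the definition of $d_\ast$ is simultaneously strong enough to hand the learner a bounded-dimension neighborhood to zoom into when $d_\ast<\infty$, and to deny any such neighborhood at every scale when $d_\ast=\infty$. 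Reconciling the single-feedback upper bound with the full-feedback lower bound across this exact boundary is where I expect the bulk of the technical work to lie.
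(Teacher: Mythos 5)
There is a genuine gap here, and it is worth flagging at the outset that the paper itself contains \emph{no} proof of this statement: in the source it survives only inside an \texttt{\textbackslash OMIT} block, i.e.\ it is a discarded draft claim, superseded by Theorems~\ref{thm:experts-MaxMinLCD} and~\ref{thm:experts-MaxMinLCD-unif}, which deliberately replace the covering dimension by the \emph{log}-covering dimension (Section~\ref{sec:FFproblem} says explicitly that metrics of bounded covering dimension are ``too easy'' for the experts problem). Your proposal inherits the flaw that forced this replacement, and it sits in your lower-bound engine. You assert that separating one favored arm from $N$ near-tied competitors at gap $\epsilon$ costs $\Omega(\log N/\epsilon^2)$ rounds, which presupposes bumps of height $\Theta(1)$ whose signs are biased by $\epsilon$. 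But you cannot sustain that across infinitely many nested scales: payoffs live in $[0,1]$, so the heights $H_k$ of the nested bumps must be summable, and if instead you tie heights to scales (the uniformly Lipschitz regime your construction naturally lives in, and the one the paper's own ensembles use), the gap at scale $r$ is only $\Theta(r\delta)$ while the indistinguishability horizon from Theorem~\ref{thm:LB-technique} is $t \asymp \log N_r/\delta^2$. Redoing your balance, the extractable regret is $\Theta(r\delta t) = \Theta\bigl(r\sqrt{t\log N_r}\bigr)$, which exceeds $C\sqrt{t}$ for every constant $C$ only when $r\sqrt{\log N_r}\to\infty$, i.e.\ $\log N_r = \omega(r^{-2})$ --- log-covering dimension at least $2$. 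Infinite covering dimension only gives $\log N_r = \omega(\log\tfrac1r)$, which is far weaker. The mismatch is not cosmetic: a space with $\log N_r \asymp r^{-1}$ has infinite max-min-covering dimension but $\LCD = 1$, and Theorem~\ref{thm:ulproblem-naive} (applied with $b=2$) shows the uniformly Lipschitz experts problem on it has regret $O(\sqrt t)$, so no construction of your type can certify ``not $\sqrt t$-tractable'' there. Your claimed threshold (finite vs.\ infinite max-min-covering dimension) is therefore provably in the wrong place for the class of instances your construction produces; the paper's final characterization, Theorem~\ref{thm:experts-MaxMinLCD-unif}, puts the crossover at $\MaxMinLCD(X)=2$ via the exponent $\max\bigl(\tfrac{b-1}{b},\tfrac12\bigr)$.

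The upper-bound direction fails independently. If ``single feedback'' means bandit feedback (bet $=$ observation), then $[0,1]$ already refutes the claim: it has max-min-covering dimension $1$, yet the introduction cites the tight $\tilde\Theta(t^{2/3})$ bound for Lipschitz MAB on $[0,1]$, so finite dimension cannot yield $\sqrt t$-tractability under single feedback. Your zooming sketch never pays this exploration bill --- collapsing each ball to a representative and feeding survivors to a constant-regret experts subroutine assumes the learner can sample representatives without betting on them, which is exactly what single feedback forbids and why the paper's own algorithmic results are stated for double feedback (bet plus free peek, Theorem~\ref{thm:main-experts}) or full feedback. Note also that even with full feedback the fixed-net analysis gives $\sqrt{t\log t}$, not $O(\sqrt t)$, so the ``clean $O_{\mathcal I}(\sqrt t)$'' upgrade you gesture at is itself a nontrivial step that the proposal does not supply. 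In short: both halves of the dichotomy are anchored to the wrong invariant, and the quantitative accounting (gap versus per-round signal under the Lipschitz and boundedness constraints) is the specific point where the argument breaks.
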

} 


\OMIT{
We introduce several new techniques, the most important of which appear in the (joint) proof of the two main results -- Theorem~\ref{thm:main-MAB} and Theorem~\ref{thm:main-experts}.
} 

\xhdr{Connection to point-set topology.}
The main technical contribution of this paper is an interplay of online learning and point-set topology, which requires novel algorithmic and lower-bounding techniques. In particular, the connection to topology is essential in the (joint) proof of the two main results (Theorem~\ref{thm:main-MAB} and Theorem~\ref{thm:main-experts}). There, we identify a simple topological property (\emph{well-orderability}) which entails the algorithmic result, and another topological property (\emph{perfectness}) which entails the lower bound.

\begin{definition}\label{def:topology}
Consider a topological space $X$.
$X$ is called \emph{perfect} if it contains no isolated points.
A \emph{topological well-ordering} of $X$ is a well-ordering
$(X,\prec)$ such that every initial segment thereof is an open
set.  If such $\prec$ exists, $X$ is called \emph{well-orderable}.
A metric space $(X,d)$ is called well-orderable if and only if
its metric topology is well-orderable.
\end{definition}

Perfect spaces are a classical notion in point-set topology.
Topological well-orderings are implicit in
the work of Cantor~\cite{Cantor83}, but the particular definition
given here is new, to the best of our knowledge.

The proof of Theorems~\ref{thm:main-MAB} and~\ref{thm:main-experts} (for compact metric spaces) consists of three parts: the algorithmic result for a compact well-orderable metric space, the lower bound for a metric space with a perfect subspace, and the following lemma that ties together the two topological properties.

\begin{lemma}\label{lm:topological-equivalence}
For any compact metric space $(X,d)$, the following are equivalent: (i) $X$ is a countable set, (ii) $(X,d)$ is well-orderable, (iii) no subspace of $(X,d)$ is perfect.\footnote{For arbitrary metric spaces we have (ii)$\iff$(iii) and (i)$\Rightarrow$(ii), but not (ii)$\Rightarrow$(i). }
\end{lemma}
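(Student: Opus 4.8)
The plan is to establish the general equivalence $(ii)\iff(iii)$ first, for \emph{arbitrary} metric spaces, and only afterward to tie in $(i)$ using compactness (equivalently, completeness together with separability). The organizing tool throughout is the Cantor--Bendixson derivative: set $X^{(0)}=X$, let $X^{(\alpha+1)}$ be the set of limit points of $X^{(\alpha)}$ (that is, $X^{(\alpha)}$ with its isolated points deleted), and $X^{(\lambda)}=\bigcap_{\alpha<\lambda}X^{(\alpha)}$ at limit stages. These are nested closed sets, so the sequence stabilizes at a \emph{perfect kernel} $K$, which is itself perfect. A short transfinite induction shows that any nonempty perfect subspace $P$ satisfies $P\subseteq X^{(\alpha)}$ for every $\alpha$ (no $p\in P$ can be isolated in $X^{(\alpha)}$ without being isolated in $P$), hence $P\subseteq K$; conversely $K$, if nonempty, is such a subspace. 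Thus $(iii)$ is \emph{exactly} the assertion that $X$ is scattered, i.e.\ the derivative reaches $\emptyset$ and every $x$ acquires a well-defined rank $\rho(x)$, the unique $\alpha$ with $x\in X^{(\alpha)}\setminus X^{(\alpha+1)}$.

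For $(ii)\Rightarrow(iii)$ I would show that a topological well-ordering forces $X$ to be scattered. Given any nonempty $S\subseteq X$, let $p=\min_\prec S$. If $p$ is not the $\prec$-maximum of $X$ it has an immediate successor $p^{+}$, so $\{x:x\preceq p\}=\{x:x\prec p^{+}\}$ is an initial segment, hence open; and if $p$ is the maximum then $\{x:x\preceq p\}=X$ is open too. Either way $\{x:x\preceq p\}$ is an open neighborhood of $p$ meeting $S$ only in $p$, so $p$ is isolated in $S$. Since every nonempty subset has an isolated point, no subspace is perfect.

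The heart of the argument is $(iii)\Rightarrow(ii)$, and the open-initial-segment check is the step I expect to be the main obstacle. Assuming $X$ scattered, I would fix an arbitrary well-ordering of each level $L_\alpha=X^{(\alpha)}\setminus X^{(\alpha+1)}$, record each point's within-level position $\sigma(x)$, and let $\prec$ be the lexicographic order on the pair $(\rho(x),\sigma(x))$; since ranks are ordinals and levels are well-ordered, $\prec$ is a well-ordering of $X$. It remains to verify that every proper initial segment $I_x=\{y:y\prec x\}$ is open. Writing $\alpha=\rho(x)$, one has $I_x=(X\setminus X^{(\alpha)})\cup\{y\in L_\alpha:\sigma(y)<\sigma(x)\}$, where $X\setminus X^{(\alpha)}=\bigcup_{\beta<\alpha}L_\beta$ is open because $X^{(\alpha)}$ is closed. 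For each $y$ in the second piece, $y$ is isolated in $X^{(\alpha)}$, so there is an open $U\ni y$ with $U\cap X^{(\alpha)}=\{y\}$; then $U\setminus\{y\}\subseteq X\setminus X^{(\alpha)}\subseteq I_x$, whence $U\subseteq I_x$ and $y$ is interior. So every point of $I_x$ is interior and $I_x$ is open, \emph{regardless} of the chosen within-level orders. This establishes $(ii)\iff(iii)$ for all metric spaces.

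Finally I would bring in $(i)$, which is where compactness becomes indispensable: the rationals are countable yet perfect (so $(iii)$ genuinely uses more than countability), while an uncountable discrete space is scattered yet uncountable (so $(iii)$ alone cannot yield $(i)$). For $(i)\Rightarrow(iii)$: were $P$ a nonempty perfect subspace, its closure $\overline{P}$ would be closed in the compact space $X$, hence compact and complete, and still perfect (an isolated point of $\overline{P}$ forces one of $P$); a nonempty complete perfect metric space contains a copy of the Cantor set and is therefore uncountable, contradicting $|X|\le\aleph_0$. For $(iii)\Rightarrow(i)$: a compact metric space is separable, hence second countable, so the strictly decreasing sequence $X^{(\alpha)}$ has countable length and each $L_\alpha$, being a discrete subspace, is countable; thus $X=\bigsqcup_{\alpha}L_\alpha$ is a countable union of countable sets. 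Combined with the general $(ii)\iff(iii)$, this gives $(i)\iff(ii)\iff(iii)$ for compact $X$, with the verification of open initial segments in $(iii)\Rightarrow(ii)$ as the one delicate point.
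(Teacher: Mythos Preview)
Your proof is correct and takes a genuinely different route from the paper. You make the Cantor--Bendixson derivative the organizing tool: $(iii)$ is recast as ``$X$ is scattered,'' the well-ordering in $(iii)\Rightarrow(ii)$ is built lexicographically from CB rank plus an arbitrary within-level order (with the key open-initial-segment check carried out exactly as you describe), and for compact $X$ the link to $(i)$ comes from standard facts --- a nonempty complete perfect metric space contains a Cantor set, and second countability bounds both the number of CB levels and the size of each discrete level. The paper instead argues cyclically with hands-on constructions: $(i)\Rightarrow(iii)$ via the ball-tree of Section~\ref{sec:lower-bound}, using compactness to realize one distinct point per leaf; $(iii)\Rightarrow(ii)$ by transfinitely peeling off a \emph{single} isolated point of the remainder at each step (rather than all of them at once as in your CB-level approach); and $(ii)\Rightarrow(i)$ by showing that $S(n)=\{x : B(x,\tfrac{1}{n})\subseteq\{y:y\preceq x\}\}$ is $\tfrac{1}{n}$-separated, hence finite by compactness, with $X=\bigcup_n S(n)$. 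Your approach leans on a bit more off-the-shelf descriptive set theory but is cleaner and modular; the paper's is more self-contained and recycles its own ball-tree machinery.

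One incidental point worth flagging: your parenthetical that $\mathbb{Q}$ is countable yet perfect actually \emph{refutes} the footnote's claim that $(i)\Rightarrow(ii)$ holds for arbitrary metric spaces --- since you establish $(ii)\Rightarrow(iii)$ in full generality, a topological well-ordering of $\mathbb{Q}$ would force it to be scattered. The paper's appendix asserts that its $(i)\Rightarrow(ii)$ argument ``does not in fact use compactness,'' but that argument routes through $(i)\Rightarrow(iii)$, which does use it. So this is a slip in the paper's footnote, not a gap in your proof.
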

Lemma~\ref{lm:topological-equivalence} follows from classical theorems of Cantor-Bendixson~\cite{Cantor83} and Mazurkiewicz-Sierpinski~\cite{MazSier}.  We provide a proof in Appendix~\ref{sec:topological}
for the sake of making our exposition self-contained.

To reduce the Lipschitz MAB problem to complete metric spaces we show that the problem is $f(t)$-tractable on a given metric space if and only if it is $f(t)$-tractable on the completion thereof. Same is true for the double-feedback Lipschitz experts problem, and the ``only if" direction holds for the \FFproblem. Then the main dichotomy results follow from the lower bound in Theorem~\ref{thm:boundary-of-tractability}.

\xhdr{Accessing the metric space.}
We define a bandit algorithm as a (possibly randomized) Borel
measurable function that
maps a history of past observations $(x_i, r_i) \in X\times [0,1]$ to
a strategy $x\in X$ to be played in the current period.
An experts algorithm is similarly defined as a (possibly
randomized) Borel measurable function mapping the observation
history to a strategy $x \in X$ to be played in the current
period.  (Or, in the case of the double feedback model,
a pair of strategies representing the ``bet'' and ``free peek''.)
The observation history is either a sequence
of elements of $[0,1]^X$ in the full feedback model,
or a sequence of quadruples
         $(x_i, r_i, x'_i, r'_i) \in (X \times [0,1])^2$
in the double feedback model.

These definitions abstract away a potentially thorny issue of representing
and accessing an infinite metric space. For our algorithmic results,
we handle this issue as follows: the metric space is accessed via
well-defined calls to a suitable \emph{oracle}.
Moreover, the main algorithmic result in Theorems~\ref{thm:main-MAB} and~\ref{thm:main-experts} requires an oracle which represents the well-ordering. We also provide an extension in Section~\ref{sec:simpler-alg}: an $\omega(\log t)$-tractability result for a wide family of metric spaces -- including, for example, compact metric spaces with a finite number of limit points -- for which a more intuitive oracle access suffices. These are the metric spaces with a finite \emph{Cantor-Bendixson rank}, a classic notion from point-set topology.

\xhdr{Related work and discussion.}
Algorithms for the stochastic MAB problem admit regret guarantees of the form
	$R(t) = O(f(t))$,
which are of two types -- \emph{instance-specific} and \emph{instance-independent} -- depending on whether the constant in $O()$ is allowed to depend on the problem instance. For instance, {\sc ucb1}~\cite{bandits-ucb1} admits an instance-specific guarantee $R(t) = O(\log t)$, whereas the best-known instance-independent guarantee for this algorithm is only $R(t) = O(\sqrt{kt \log t})$, where $k$ is the number of arms. Accordingly, a lower bound for the instance-independent version has to show that for any algorithm and a given time $t$, there exists a problem instance whose regret is large at this time, whereas for the instance-specific version one needs a much more ambitious argument: for any algorithm there exists a problem instance whose regret is large \emph{infinitely often}. In this paper, we focus on instance-specific guarantees.

\OMIT{ 
Apart from the stochastic MAB problem considered in this paper, several other  formulations have been studied in the literature on multi-armed bandits, which spans Operations Research, Economics and Computer Science (see~\cite{CesaBL-book} for background).\footnote{We only mention the major MAB formulations. A proper survey of the numerous extensions is beyond the scope of this paper.}
} 

Apart from the stochastic MAB problem considered in this paper, several other MAB formulations have been studied in the literature (see~\cite{CesaBL-book} for background). Early work~\cite{Gittins-index,Gittins-book} has focused on Bayesian formulations in which Bayesian priors on payoffs are known, and goal is to maximize the payoff in expectation over these priors. In these formulations, an MAB instance is a \emph{Markov Decision Process} (MDP) in which each arm is represented by a Markov Chain with rewards on states, and the transition happens whenever the arm is played.  In the more ``difficult" \emph{restless bandits} \cite{Whittle-88, BertsimasMora-2000, NinoMora-01} formulations, the state also changes when the arm is passive, according to another transition matrix. In the theoretical computer science literature, recent work in this vein includes~\cite{GuhaMunagala-focs07, GuhaMunagala-soda09}. Interestingly, these Bayesian formulations have an offline flavor: given the MDP, one needs to efficiently compute a (nearly) optimal mapping from states to actions. Contrasting the Bayesian formulations in which the probabilistic model is fully specified, the \emph{adversarial MAB problem}~\cite{bandits-exp3,Auer-focs00,Hazan-soda09} makes no stochastic assumptions whatsoever. Instead, it makes a very pessimistic assumption that payoffs are chosen by an adversary that has access to the algorithm's code but not to its random seed. As in the stochastic MAB problem, the goal is to minimize regret. For any fixed (finite) number of arms, the best possible regret in this setting is $R(t) = O(\sqrt{t})$~\cite{bandits-exp3}. For infinite strategy sets, one often considers the \emph{linear MAB problem} in which strategies lie in a convex subset of $\R^d$, and in each round the payoffs form a linear function~\cite{McMahan-colt04, Bobby-stoc04, DaniHK-nips07, AbernethyHR-colt08, Hazan-soda09}.

\OMIT{, or more generally, a convex function~\cite{Bobby-nips04, FlaxmanKM-soda05}.}

It is an open question whether the ideas from the Lipschitz MAB problem extend to the above formulations. The adversarial version of the Lipschitz MAB problem is well-defined, but to the best of our knowledge, the only known result is the ``naive" algorithm from~\cite{Bobby-nips04}. One could define the stochastic version of the linear MAB problem (in which the expected payoffs form a fixed time-invariant linear function), which can be viewed as a special case of the Lipschitz MAB problem. However, this view is not likely to be fruitful because in the Lipschitz MAB problem measuring a payoff of one arm is useless for estimating the payoffs of distant arms, whereas in prior work on the linear MAB problem inferences about distant arms are crucial. For Bayesian MAB problems with limited similarity information, it is not clear how to model this information, mainly because in the Bayesian setting similarity between arms is naturally represented via correlated priors rather than a metric space.

\OMIT{An \emph{oblivious adversary} must fix all payoffs in advance before the first round; an \emph{adaptive adversary} sees the choices made by the algorithm in all previous rounds.}

\xhdr{Organization of the paper.}
Preliminaries are in Section~\ref{sec:prelims}. We present a joint proof for the two main results (Theorems~\ref{thm:main-MAB} and~\ref{thm:main-experts}).  The lower bound is proved in Section~\ref{sec:lower-bound} and the algorithmic results are in Section~\ref{sec:tractability}. Coupled with the topological equivalence (Lemma~\ref{lm:topological-equivalence}), this gives the proof for compact metric spaces. A complementary $(\log t)$-intractability result for infinite metric spaces (Theorem~\ref{thm:logT}) is in Section~\ref{sec:logT}.  The $\omega(\log t)$-tractability result via simpler oracle access (for metric spaces of finite Cantor-Bendixson rank) is in Section~\ref{sec:simpler-alg}.  The boundary-of-tractability result  (Theorems~\ref{thm:boundary-of-tractability}) is in Section~\ref{sec:boundary-body}. The \FFproblem\ in a (very) high dimension (including Theorems~\ref{thm:experts-MaxMinLCD} and~\ref{thm:experts-MaxMinLCD-unif}) is discussed in Sections~\ref{sec:FFproblem} and~\ref{sec:FFproblem-characterization}.

Some of the proofs are moved to appendices. In Appendix~\ref{sec:reduction} we reduce the problem to that on complete metric spaces.  All KL-divergence arguments (which underlie our lower bounds) are gathered in Appendix~\ref{sec:KL-divergence}.
We provide a self-contained proof of the topological lemma  (Lemma~\ref{lm:topological-equivalence}) in Appendix~\ref{sec:topological}.

\section{Preliminaries.}
\label{sec:prelims}

This section contains various definitions which make the paper essentially self-contained (the only exception being \emph{ordinal numbers} which are used in Section~\ref{sec:MaxMinLCD-UB}). In particular, the paper uses notions from General Topology which are typically covered in any introductory text or course on the subject.

 \xhdr{Lipschitz MAB problem.} Consider the Lipschitz MAB problem on a metric space $(X,d)$ with payoff function $\mu$. The payoff from each arm $x\in X$ is an independent sample from a fixed (time-invariant) distribution with support in $[0,1]$ and expectation $\mu(x)$ such that $|\mu(x)-\mu(y)| \leq d(x,y)$ for all $x,y\in X$. For $S\subset X$ denote
	$\sup(\mu,S) = \sup_{x\in S} \mu(x)$
and similarly
	$\argmax(\mu,S) = \argmax_{x\in S} \mu(x)$.
Given a bandit algorithm \A, let $P_{(\A,\mu)}(t)$ be the expected reward collected by the algorithm in the first $t$ rounds on the problem instance $(X,d,\mu)$. The \emph{regret} of algorithm \A\ in $t$ rounds is
	$R_{(\A,\mu)}(t) = \sup(\mu, X)\,t -  P_{(\A,\mu)}(t)$.
Given a Lipschitz experts algorithm \A\ and a problem instance
$(X,d,\prob)$, the notations $P_{(\A, \, \prob)}(t)$ and
$R_{(\A, \, \prob)}(t)$ --- denoting expected reward and
regret --- are defined analogously.

\xhdr{Metric topology.} Let $(X,d)$ be a metric space. An open ball in $(X,d)$ is denoted
	$B(x_0,r) = \{x\in X:\, d(x,x_0) <r\}$,
where $x_0\in X$ is the center, and $r\geq 0$ is the radius.
A \emph{Cauchy sequence} in $(X,d)$
is a sequence such that for every $\delta>0$,
there is an open ball of radius $\delta$ containing all
but finitely many points of the sequence.  We say
$X$ is \emph{complete} if every Cauchy sequence has a
limit point in $X$.  For two Cauchy sequences $\mathbf{x}=x_1,x_2,\ldots$
and $\mathbf{y}=y_1,y_2,\ldots$ the \emph{distance}
$d(\mathbf{x},\mathbf{y}) = \lim_{i \rightarrow \infty} d(x_i,y_i)$
is well-defined.  Two Cauchy sequences are declared to be
equivalent if their distance is $0$.  The equivalence
classes of Cauchy sequences form a metric space $(X^*,d)$
called the \emph{completion} of $(X,d)$. The subspace of all constant sequences
is identified with $(X,d)$: formally, it is a dense subspace of $(X^*,d)$
which is isometric to $(X,d)$.
A metric space $(X,d)$ is \emph{compact} if every collection
of open balls  covering $(X,d)$ has a finite subcollection
that also covers $(X,d)$.  Every compact metric space
is complete, but not vice-versa.

Let $X$ be a set. A family $\F$ of subsets of $X$ is called a \emph{topology} if it contains $\emptyset$ and $X$ and is closed under arbitrary unions and finite intersections. When a specific topology is fixed and clear from the context, the elements of $\F$ are called \emph{open sets}, and their complements are called \emph{closed sets}.  Throughout this paper, these terms will refer to the \emph{metric topology} of the underlying metric space, the smallest topology that contains all open balls (namely, the intersection of all such topologies). A point $x$ is called \emph{isolated} if the singleton set $\{x\}$ is open. A function between topological spaces is \emph{continuous} if the inverse image of every open set is open.

\xhdr{Set theory.} Let $S$ be a set. A \emph{well-ordering} on a set $S$ is a total order on $S$ with the property that every non-empty subset of $S$ has a least element in this order. Each set can be well-ordered. (More precisely, this statement is equivalent to the Axiom of Choice.)

In Section~\ref{sec:MaxMinLCD-UB} use \emph{ordinals}, a.k.a. \emph{ordinal numbers}, are a classical concept in set theory that, in some sense, extend natural numbers beyond infinity. Understanding this paper requires only the basic notions about ordinals, namely the standard (von Neumann) definition of ordinals, successor and limit ordinals, and transfinite induction. The necessary material can be found in any introductory text on Mathematical Logic and Set Theory, and also on \emph{Wikipedia}.

\section{Lower bounds via a perfect subspace}
\label{sec:lower-bound}

In this section we prove the following lower bound:

\begin{theorem}\label{thm:lower-bound}
Consider the Lipschitz experts problem on a metric space $(X,d)$ which has a perfect subspace. Then the problem is not $g$-tractable for any $g\in o(\sqrt{t})$. In fact, a much stronger result holds: there exist a distribution $\mathcal{P}$ over problem instances $\mu$ such that for any experts algorithm \A\ we have
\begin{align}\label{eq:lower-bound}
 (\forall g\in o(\sqrt{t}))\quad
\Pr_{\mu\in\mathcal{P}}
	\left[ R_{(\A,\,\mu)}(t) = O_{\mu}(g(t)) \right] = 0.
\end{align}
\end{theorem}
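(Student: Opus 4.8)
The plan is to exploit a perfect subspace $Y\subseteq X$ to embed a binary tree of nested balls, place a random ``peak'' along a uniformly random infinite branch, and argue that no algorithm can track the peak fast enough. \textbf{Tree extraction and the hard instances.} First I fix a nonempty perfect subspace $Y$ and recursively build open balls $\{B_\sigma\}$ of $X$, indexed by finite binary strings $\sigma$, with centers in $Y$, so that $\overline{B_{\sigma 0}},\overline{B_{\sigma 1}}\subseteq B_\sigma$ are disjoint, with radius and mutual separation of order $\delta_k$ at level $k=|\sigma|$, where $\delta_k\to 0$ geometrically. This is possible because an open subset of a perfect space is again perfect (every point of $B_\sigma\cap Y$ is a limit of other points of $B_\sigma\cap Y$), so there are always two points at positive distance to split on. A uniformly random branch $\xi\in\{0,1\}^\omega$ of i.i.d.\ fair bits then selects a nested sequence of balls, and I set the expected payoff $\mu_\xi(x)=\tfrac12+\sum_{j=1}^{\ell(x,\xi)}\epsilon_j$, where $\ell(x,\xi)$ is the deepest level with $x\in B_{\xi|\ell}$ and $\epsilon_j\sim\delta_{j-1}$. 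The geometric choice keeps $\mu_\xi$ $1$-Lipschitz and bounded, with $\sup\mu_\xi=\tfrac12+\sum_j\epsilon_j$ approached along the branch centers; a point that first leaves the branch at level $k$ is suboptimal by $\Theta(\epsilon_k)$.

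\textbf{Neutralizing full feedback.} The crucial design point is that the lower bound must hold even with full feedback, so I define the instance $\prob_\xi$ (a distribution over payoff functions realizing $\mu_\xi$) so that the payoffs are perfectly correlated within each cell of the level-$k$ partition: the realized payoff is constant on $B_{\sigma 0}$ and on $B_{\sigma 1}$, driven by a single Bernoulli per cell. Then a full-feedback sample conveys only finitely many effective bits per round --- exactly the information a bandit would extract --- so that distinguishing the good child $B_{\sigma\xi_k}$ from its sibling, whose means differ by $\Theta(\epsilon_k)$, still requires of order $\epsilon_k^{-2}$ rounds. This reduction to a two-point comparison at each level is what lets the KL-divergence machinery apply; the core estimate, that after $t$ rounds any full-feedback algorithm guesses $\xi_k$ correctly with probability at most $\tfrac12+O(\sqrt{t}\,\epsilon_k)$, I would package as in Appendix~\ref{sec:KL-divergence}.

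\textbf{From per-level confusion to $\sqrt{t}$ regret, infinitely often.} Set $t_k\sim\epsilon_k^{-2}$ and let epoch $k$ be the rounds in $(t_{k-1},t_k]$; during this epoch level $k$ is the first unresolved level, so whatever arm the algorithm bets on, it branches off the true path at level at most $k$ with constant probability, incurring per-round regret of order $\epsilon_k\sim t^{-1/2}$. Summing $\sum_{s\le t}\epsilon_{k(s)}\sim\sum_{s\le t}s^{-1/2}\sim\sqrt{t}$ already yields the in-expectation bound. For the almost-sure, infinitely-often statement I let $E_k$ be the event that the regret accrued in epoch $k$ exceeds $c\sqrt{t_k}$, and show $\Pr[E_k\mid\mathcal{F}_{k-1}]\ge p>0$, where $\mathcal{F}_{k-1}$ is the history through the start of epoch $k$; this uses that $\xi_k$ is independent of $\mathcal{F}_{k-1}$ and that the at most $t_k$ samples available cannot resolve it, via the KL estimate. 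Since then $\sum_k\Pr[E_k\mid\mathcal{F}_{k-1}]=\infty$ almost surely, the conditional (second) Borel--Cantelli lemma gives $\Pr[E_k\text{ i.o.}]=1$. On this probability-one event $R_{(\A,\mu_\xi)}(t_k)\ge c\sqrt{t_k}$ for infinitely many $k$, so $R(t_k)/g(t_k)\to\infty$ for every $g\in o(\sqrt{t})$, which is exactly \eqref{eq:lower-bound}; note that this single event works simultaneously for all such $g$.

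\textbf{Main obstacle.} The tree extraction and the Lipschitz bookkeeping are routine. The delicate part is twofold and is the heart of the argument: first, making the construction robust to full feedback through the correlated-within-cell design, so that an uncountable strategy set does not leak $\xi_k$ in $O(1)$ rounds; and second, upgrading the in-expectation $\sqrt{t}$ bound to an almost-sure, infinitely-often bound. The latter requires a \emph{conditional} KL estimate that survives the algorithm's adaptivity and yields the uniform lower bound $\Pr[E_k\mid\mathcal{F}_{k-1}]\ge p$, after which Borel--Cantelli finishes. I expect this conditional estimate, together with the choice of epochs that makes the $\{E_k\}$ contribute independently enough, to be the principal technical hurdle.
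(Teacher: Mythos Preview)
Your architecture---ball-tree from the perfect subspace, random optimal branch, per-level indistinguishability via KL, then a conditional Borel--Cantelli for the infinitely-often statement---is exactly the paper's. Two places where your sketch does not yet work as written, and where the paper's choices are doing real work.

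\textbf{The payoff construction.} Your expected payoff $\mu_\xi(x)=\tfrac12+\sum_{j\le\ell(x,\xi)}\epsilon_j$ is a step function and hence not Lipschitz, and ``realized payoff constant on each level-$k$ cell, one Bernoulli per cell'' cannot hold simultaneously across nested levels. More importantly, if only the on-branch cells carry a signal, then under full feedback the learner sees exactly where the bumps sit and the branch is revealed in $O(1)$ rounds. The paper's fix is to attach a Lipschitz bump $F_w$ to \emph{every} node $w$ and give each an independent random sign---uniform for off-lineage nodes, $(1+\delta_i)/2$-biased for lineage nodes at depth $i$---so that $\pi=\tfrac12+\sum_w \mathrm{sign}(w)F_w$ is Lipschitz and full feedback reveals only one fair-vs-slightly-biased coin per node per round. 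That is the precise realization of your ``correlated within cells'' intuition; without the unbiased noise at off-branch nodes the full-feedback neutralization fails.

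\textbf{Lineage versus branch, and the conditioning.} You parameterize instances by a single branch $\xi$; flipping $\xi_k$ perturbs the instance at every depth $\ge k$, and your filtration $\mathcal{F}_{k-1}$ (``history through the start of epoch $k$'') is ambiguous---the regret $R_{(\A,\mu_\xi)}$ is a deterministic function of $\xi$, so the only meaningful filtration lives in the $\xi$-space. The paper instead uses a \emph{complete lineage}: an independent uniform child choice at every node, not only along the leaf path. Conditioning on the lineage at all nodes except one fixed depth-$(i{-}1)$ node on the leaf yields two fully specified instances differing only in which child is biased there---a clean $(\eps,\delta,2)$-ensemble to which Theorem~\ref{thm:LB-technique} applies directly. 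This also makes the Borel--Cantelli step exact: the event $\mathcal{E}_i=\{\text{lineage chose the bad child at depth }i\}$ has conditional probability exactly $\tfrac12$ given the rest of the lineage, so $\Pr[\cap_{i\ge n}\overline{\mathcal{E}_i}]=0$ without any appeal to a conditional KL bound. Your single-branch version can be salvaged (the extra KL from deeper levels is geometrically summable), but the lineage device is what removes the hurdle you flagged.
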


Let us construct the desired distribution over problem instances. First, we use the existence of a perfect subspace to construct a useful system of balls.

\begin{definition}
A \emph{ball-tree} on a metric space $(X,d)$ is a complete infinite binary tree whose nodes are pairs $(x, r)$, where $x\in X$ is the ``center'' and $r\in (0,1]$ is the ``radius'', such that:
\begin{OneLiners}
\item if $(x,r)$ is a parent of $(x',r')$ then $d(x,x') + r' < r/2$,
\item if $(x, r_x)$ and $(y,r_y)$ are siblings, then
	$r_x + r_y < d(x,y)$.
\end{OneLiners}
\end{definition}

In a ball-tree, each tree node $(x,r)$ corresponds to a ball $B(x,r)$ so that each child is a subset of its parent and any two siblings are disjoint.\footnote{Defining internal nodes as \emph{balls} rather than $(x,r)$ pairs could lead to confusion later in the construction because a ball in a metric space a set of points, and as such does not necessarily have a unique center or radius.}

\begin{lemma}\label{lm:ball-tree}
For any metric space with a perfect subspace there exists a ball-tree.
\end{lemma}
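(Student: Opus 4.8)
The plan is to construct the ball-tree recursively, one level at a time, while maintaining the invariant that the center of every node lies in the perfect subspace $P\subseteq X$. The single property of $P$ that I will use is the following consequence of perfectness: if $x\in P$ is not isolated in $P$, then for every $\varepsilon>0$ the set $B(x,\varepsilon)\cap P$ is infinite. Indeed, were it finite, the minimum of $d(x,y)$ over the finitely many $y\in B(x,\varepsilon)\cap P$ with $y\neq x$ (or $\varepsilon$ itself, if there are none) would be a positive number $\delta$ with $B(x,\delta)\cap P=\{x\}$, making $x$ isolated in $P$ and contradicting perfectness. In particular, every ball centered at a point of $P$ contains at least two distinct points of $P$, which is exactly what the branching step needs. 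Note that only this density property is required; I will not need $P$ to be closed in $X$ or complete.

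Next I describe the inductive step, which produces the two children of a node $(x,r)$ under the assumption that $x\in P$ (so $x$ is non-isolated in $P$) and $r\in(0,1]$. Using the density property I select two \emph{distinct} points $x_1,x_2\in P\cap B(x,r/4)$, and I set $D=d(x_1,x_2)>0$. I then choose radii $r_1,r_2$ with $0<r_i<\min(r/4,\ \tfrac12 D)$, which is possible since $D>0$. I claim both ball-tree conditions hold for the parent $(x,r)$ and its children $(x_i,r_i)$. For the parent–child condition, $d(x,x_i)+r_i< r/4 + r/4 = r/2$, using $d(x,x_i)<r/4$ and $r_i<r/4$. For the sibling condition, $r_1+r_2 < \tfrac12 D + \tfrac12 D = D = d(x_1,x_2)$. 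Finally $r_i<r/4\le 1/4$ guarantees $r_i\in(0,1]$, and each $x_i\in P$, so the invariant is preserved and the recursion can continue from both children.

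To assemble the whole tree, I index its nodes by finite binary strings and define their labels by induction on string length: fix any $x_0\in P$ and set the root label to $(x_0,1)$, and apply the branching step above to the node at each string $s$ to obtain the labels at $s0$ and $s1$. Since the two child labels are determined locally from the parent's label alone, this is an ordinary recursion on depth (no transfinite machinery is needed), and it produces a label for every node of the complete infinite binary tree with all required inequalities satisfied; this yields the desired ball-tree.

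I expect the construction itself to be routine; the only real content is identifying the density property of perfect spaces and pinning down radius thresholds that simultaneously enforce nesting within the parent and disjointness of siblings. The mild point to get right is that the branching step needs two \emph{distinct} points of $P$ inside a small ball, together with the invariant that keeps every chosen center inside $P$ so that perfectness remains available at each subsequent step.
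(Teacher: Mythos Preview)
Your proof is correct and follows essentially the same approach as the paper: a recursive construction maintaining the invariant that every node's center lies in the perfect subspace, using perfectness at each step to find a nearby point of $P$. The only cosmetic difference is that the paper reuses the parent's center $y$ as one child's center (taking the other child's center to be a single new point $y'\in P\cap B(y,r/4)$ with common radius $r'=d(y,y')/2$), whereas you pick two fresh points; your version has the minor advantage that the strict sibling inequality $r_1+r_2<d(x_1,x_2)$ is cleanly enforced.
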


\begin{proof}
Consider a metric space $(X,d)$ with a perfect subspace $(Y,d)$. Let us construct the ball-tree recursively, maintaining the invariant that for each tree node $(y,r)$ we have $y\in Y$. Pick an arbitrary $y\in Y$ and let the root be $(y,1)$. Suppose we have constructed a tree node $(y,r)$, $y\in Y$.  Since $Y$ is perfect, the ball $B(y,r/4)$ contains another point  $y'\in Y$. Let $r' = d(y,y')/2$ and define the two children of $(y,r)$ as $(y,r')$ and $B(y',r')$.
\end{proof}

Now let us use the ball-tree to construct the distribution on payoff functions. Consider a metric space $(X,d)$ with a fixed ball-tree $T$.  For each $i\geq 1$, let $D_i$ be the set of all depth-$i$ tree nodes, and let
	$r^*_i = \min \{r: (x,r)\in D_i   \}$
be the smallest radius among these nodes. Note that $r^*_i\leq 2^{-i}$.
Choose a number $n_i$ large enough that $g(n) < \tfrac{1}{8i} r^*_i \sqrt{n}$
for all $n > n_i$,
and let $\delta_i = n_i^{-1/2}$.
For each tree node $w = (x_0, r_0)$ 
define a function
	$F_{w}: X \rightarrow [0,1]$ by
\begin{align}\label{eq:needle}
 F_{w}(x) = \begin{cases}
	\min\{ r_0 - d(x,x_0),\, r_0/2  \} & \text{if $x\in B(x_0, r_0)$}, \\
	0	& \text{otherwise.}
\end{cases}
\end{align}

It is easy to see that $F_{w}$ is a Lipschitz function on $(X,d)$.
A \emph{leaf} in a ball-tree is an infinite path from the root:
	$\mathbf{w} = (w_0, w_1, w_2,\,\ldots)$, where $ w\in D_i$ for all $i$.
A \emph{lineage} in a ball-tree is a
set of tree nodes containing at most
one child of each node; if it contains
\emph{exactly} one child of each node
then we call it a \emph{complete lineage}.
For each complete lineage
$\lambda$ there is an associated leaf $\mathbf{w}(\lambda)$
defined by $\mathbf{w}=(w_0,w_1,\, \ldots)$ where
$w_0$ is the root and for $i>0$, $w_i$ is the unique
child of $w_{i-1}$ that belongs to $\lambda$.
Let us use a lineage in the ball-tree
to define a probability measure $\prob_{\lambda}$
on payoff functions via the following sampling rule.  First every
node $w$ independently
samples a random sign $\mathrm{sign}(w) \in \{+1,-1\}$,
assigning probability $(1+\delta_i)/2$ to $+1$ if $w \in \lambda \cap D_i$,
and choosing the sign
uniformly at random otherwise.  Now define a payoff function
$\payoff$ associated with this sign pattern as follows:
        $ \payoff = \tfrac12 + \sum_{w \in T \setminus D_0}
\mathrm{sign}(w) F_{w}.$
By construction, $\payoff$ is a Lipschitz function taking
values in $[0,1]$.
Let $\mathcal{P}_{T}$ be the distribution over problem instances
$\prob_{\lambda}$ in which  $\lambda$ is a complete lineage
sampled uniformly at random; that is, each node samples one of its children
independently and uniformly at random, and $\lambda$ is the set of
sampled children. This completes our construction.

\begin{note}{Remark.}
Let $\lambda$ be a complete lineage in the ball-tree, and let
$\mathbf{w}(\lambda) = (w_0, w_1, w_2,\,\ldots)$ be its associated leaf,
where $w_i = (x_i,r_i)$ for all $i$.
Suppose for some $i$ we have
	$x\in B(x_i, r_i/2)$ and $x\not\in B(x_{i+1}, r_{i+1})$.
Then the expected payoff function $\mu_{\lambda} = \expect[\pi]$
associated to the measure $\prob_{\lambda}$ satisfies
	$\mu_{\lambda}(x) = \tfrac12 + \textstyle{\sum_{j=1}^i r^*_i \delta_i/4}$.
\end{note}

\begin{lemma}\label{lm:ball-tree-LB}
Consider a metric space $(X,d)$ with a ball-tree $T$. Then~\refeq{eq:lower-bound} holds with $\mathcal{P} = \mathcal{P}_T$.
\end{lemma}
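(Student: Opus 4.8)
The plan is to prove that for the distribution $\mathcal{P}_T$ over problem instances built from the ball-tree, every experts algorithm fails to achieve regret $O_\mu(g(t))$ for $g \in o(\sqrt t)$, with probability $1$ over the random lineage. The heart of the argument is an information-theoretic indistinguishability statement: at each depth $i$, the algorithm must decide which child of $w_{i-1}$ to ``follow,'' but the two children look nearly identical until the algorithm has gathered enough samples to detect the tiny bias $\delta_i$ in the sign distribution. Let me check the relevant scales. For a node $w \in D_i$, the ``needle'' function $F_w$ has height $\asymp r^*_i$, and the bias in $\mathrm{sign}(w)$ along the lineage contributes a payoff gap of order $r^*_i \delta_i$ between the correct child and its sibling (consistent with the Remark, where $\mu_\lambda(x) - \tfrac12 = \sum_{j\le i} r^*_j \delta_j/4$). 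Detecting a bias of $\delta_i = n_i^{-1/2}$ via KL-divergence requires $\Omega(\delta_i^{-2}) = \Omega(n_i)$ samples, and $n_i$ was chosen precisely so that $g(n) < \tfrac{1}{8i} r^*_i\sqrt n$ for $n > n_i$.

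**First** I would formalize the indistinguishability at a single level using KL-divergence. Fix a node $w_{i-1}$ in the sampled lineage and consider the two a priori equally likely choices of child; I would show (deferring the KL computation to Appendix~\ref{sec:KL-divergence}) that any algorithm observing the full-feedback history up to time $n_i$ cannot reliably identify the correct child, because the KL-divergence between the two induced observation distributions is $O(n_i \delta_i^2) = O(1)$. The key structural fact making this a \emph{per-level} statement is that the needle functions $F_w$ for nodes $w$ at depth $i$ are supported on disjoint balls (by the sibling-separation property of the ball-tree, $r_x + r_y < d(x,y)$), and each child-ball is contained in its parent-ball (by $d(x,x')+r' < r/2$). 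Thus the signal distinguishing the two children of $w_{i-1}$ is carried \emph{only} by the payoff contributions on those two balls, and the biases at different levels are independent.

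**Next** I would convert ``cannot identify the correct child'' into ``incurs regret.'' If at time $t \approx n_i$ the algorithm plays a point $x$ outside the ball $B(x_i, r_i)$ of the correct depth-$i$ node $w_i$, then by the Remark its per-round expected payoff is at most $\tfrac12 + \sum_{j<i} r^*_j\delta_j/4$, falling short of the near-optimal value $\sup(\mu_\lambda, X)$ by at least $\asymp r^*_i \delta_i = r^*_i/\sqrt{n_i}$. Accumulated over the $\asymp n_i$ rounds near time $n_i$ during which the algorithm has not yet resolved level $i$, this yields regret $\Omega(n_i \cdot r^*_i/\sqrt{n_i}) = \Omega(r^*_i \sqrt{n_i})$, which by the choice of $n_i$ exceeds $8i\, g(n_i) \gg g(n_i)$. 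Since this happens at \emph{every} level $i$, and the levels are reached at the increasing times $n_i \to \infty$, the regret exceeds $g(t)$ infinitely often, so $R_{(\A,\mu)}(t) = O_\mu(g(t))$ fails.

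**The main obstacle** is making the probabilistic quantifiers rigorous: the claim~\refeq{eq:lower-bound} asserts that the bad event holds with probability $1$ over $\mu \in \mathcal{P}_T$, for a \emph{fixed} algorithm and for \emph{all} $g \in o(\sqrt t)$ simultaneously. I would handle this with a Borel--Cantelli-type argument: show that for each level $i$, conditioned on the algorithm's behavior up to time $n_i$, the probability that the algorithm ``guesses'' the correct child of $w_{i-1}$ is bounded away from $1$ (roughly $\le \tfrac12 + O(\delta_i) + \sqrt{\mathrm{KL}}$, by Pinsker). The subtle point is the interaction between the \emph{adaptive} choice of lineage and the algorithm's adaptivity, plus the fact that an algorithm could, in principle, run far ahead and probe deep levels early; I would control this by conditioning on the full-feedback history and using that each level's bias is independent and is genuinely undetectable before time $n_i$. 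Establishing that the failures at distinct levels are sufficiently independent (or at least that their conditional failure probabilities do not summably shrink) is what drives the probability to $1$ and lets me pass the ``for all $g \in o(\sqrt t)$'' quantifier, since a single diagonal construction of the $n_i$ works against every such $g$.
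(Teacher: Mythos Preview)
Your plan is correct and follows the same high-level architecture as the paper: per-level indistinguishability via KL-divergence, conversion to a regret lower bound at times $t_i \asymp n_i$, and a second Borel--Cantelli argument over levels to get probability~$1$. The paper differs from you in one technical choice that cleans up exactly the ``main obstacle'' you flag. Rather than arguing via Pinsker that the algorithm cannot \emph{identify} the correct child with probability bounded away from~$1$, the paper packages the per-level step as an $(\eps,\delta,k)$-ensemble (Definition~\ref{def:ensemble}) with $k=2$ and invokes Theorem~\ref{thm:LB-technique}: for each node $w_{i-1}$ and each fixing of the rest of the lineage, \emph{at least one} of the two children yields regret $\geq \eps t_i/2$ against the fixed algorithm~$\A$. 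Call that child $\alpha(w_{i-1})$. Since the lineage selects each child uniformly and independently, the event $\mathcal{E}_i = \{w_i = \alpha(w_{i-1})\}$ has conditional probability exactly $\tfrac12$ given $\sigma(\mathcal{E}_1,\ldots,\mathcal{E}_{i-1})$, so $\prod_i \Pr[\overline{\mathcal{E}_i}\,|\,\cdot]=0$ and infinitely many $\mathcal{E}_i$ occur almost surely. This ``adversarial child'' framing buys you a clean constant $\tfrac12$ with no Pinsker constant-tracking, and it dissolves your worries about adaptive interaction and ``probing deep levels early'' (note also that the lineage is not adaptive---it is sampled once, independently at every node; and in full feedback there is no probing, the algorithm simply observes $\pi_t$). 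Your direct-Pinsker route would also go through, but you would have to check that the total-variation bound after $n_i$ rounds leaves the error probability bounded strictly below~$1$, and then verify that these conditional bounds are uniform in the conditioning so that the infinite product of survival probabilities still vanishes.
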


\OMIT{ 
Consider a set $X$ and two functions $\mu_1, \mu_{-1}: X\rightarrow [0,1]$. These functions are called \emph{\eps-alternatives}, $\eps>0$, if for some disjoint subsets $S_1, S_{-1}\subset X$, some $\mu_0 \in [\tfrac13, \tfrac23]$ and each  $i\in \{1, -1\} $ the following holds:
(i) $\mu_1 \equiv \mu_{-1} \leq \mu_0$ on $X \setminus (S_1 \cup S_2)$,
(ii) $\mu_i \equiv \mu_0 $ on $S_{-i}$, and
(iii) $\sup_{x\in S_i} \mu_i(x) = \mu_0+\eps$.
} 

\OMIT{ 
To prove this lemma, we use the lower-bounding techniques of~\cite{bandits-exp3,Bobby-nips04,LipschitzMAB-stoc08}. In fact, we present a framework that clarifies and generalizes these techniques, in a way that makes them applicable to the Lipschitz experts problem. To this end, we consider a more general setting than the one in the Lipschitz experts problem. In the latter, the underlying metric space $(X,d)$ implicitly defines the set \F\ of all feasible payoff functions. In  \emph{the feasible experts problem}, instead of the metric space one is given an arbitrary \F. A problem instance thus consists of a triple $(X,\F,\mu)$, where $X$ and \F\ are revealed to the algorithm, and $\mu\in\F$ is not.
}

\OMIT{ 
We use the lower-bounding technique for the basic $k$-armed bandit problem, due to Auer et al.~\cite{bandits-exp3}. Here we apply this technique to the experts problem. For a cleaner exposition, we encapsulate the usage of this technique in a theorem. The setting in this theorem is considerably  more general than the one in the original lower bound in~\cite{bandits-exp3}, although the underlying ideas are similar. For the bandits version, a similar extension is implicitly used, but not explicitly formulated, in~\cite{LipschitzMAB-stoc08}.\ASnote{can we put a more positive spin on this?}

It is more natural to formulate this result in a more general setting than the Lipschitz experts problem.
} 

To prove this lemma, we define a notion called an
$(\eps,\delta,k)$-ensemble, which is a collection of
$k$ payoff distributions that are nearly
indistinguishable from the standpoint of an
online learning algorithm.  To this end, we
consider a more general setting than the one in
the Lipschitz experts problem.
In the \emph{feasible experts problem}, one
is given a set $X$ (not necessarily a metric
space) along with a collection $\distrib$ of Borel
probability measures on the set $[0,1]^X$ of
functions $\payoff : X \rightarrow [0,1].$  A problem
instance of the feasible experts problem
consists of a triple $(X,\distrib,\prob)$
where $X$ and $\distrib$ are known to the
algorithm, and $\prob \in \distrib$ is not.

\OMIT{ 
In the following definition,
$\vec{\prob} = (\prob_0,\prob_1,\ldots,\prob_k)$
is a $(k+1)$-tuple of Borel probability measures on $[0,1]^X$,
$\payoff \,:\, X \rightarrow [0,1]$ denotes a random sample
from $[0,1]^X$ under any of these probability measures,
and $\F$ denotes the Borel $\sigma$-field of $[0,1]^X$.
Finally, for $0 \leq i \leq k$ and $x \in X$,
$\mu_i(x)$ denotes the expectation of $\payoff(x)$
under measure $\prob_i$.
}

\begin{definition}\label{def:ensemble}
Consider a set $X$ and a $(k+1)$-tuple
$\vec{\prob} = (\prob_0,\prob_1,\ldots,\prob_k)$
of Borel probability measures on $[0,1]^X$, the
set of $[0,1]$-valued payoff functions $\payoff$
on $X$.  For $0 \leq i \leq k$ and $x \in X$, let
$\mu_i(x)$ denote the expectation of $\payoff(x)$
under measure $\prob_i$.
We say that $\vec{\prob}$ is an \emph{$(\eps,\delta,k)$-ensemble}
if there exist pairwise disjoint subsets $S_1,S_2,\ldots,S_k \subseteq X$
for which the following properties hold:
\begin{enumerate}
\item \label{ens:1}
for every $i$ and every event $\mathcal{E}$ in the Borel
$\sigma$-algebra of $[0,1]^X$, we have
    $1-\delta < \prob_0(\mathcal{E}) / \prob_i(\mathcal{E}) < 1+\delta,$
\item \label{ens:2}
for every $i > 0$, we have
    $\sup(\mu_i, S_i) - \sup(\mu_i,\, X \setminus S_i) \geq \eps.$
\end{enumerate}
\end{definition}

\OMIT{ 
\begin{definition}
 An \emph{$(\eps,k)$-ensemble} on $(X,\F)$ is
 a collection of subsets $\F_1, \ldots, \F_k \subset \F$ such that there exist mutually disjoint subsets
        $S_1, \ldots, S_k \subset X$ and a number $\mu_0 \in [\tfrac13, \tfrac23
]$
for which the following properties hold. Let
        $S=\cup_{i=1}^k S_i$
and pick any $\mu_i\in \F_i$ for each $i$. Then for each $i$ we have:
(i)  $\mu_i \equiv \mu_1 \leq \mu_0$ on $X \setminus S$,
(ii) $\mu_i \equiv \mu_0 $ on $S\setminus S_i$, and
(iii)  $\sup(\mu_i, S_i) = \mu_0+\eps$.
\end{definition}
}

\begin{theorem}\label{thm:LB-technique}
Consider the feasible experts problem on $(X,\distrib)$. Let  $\vec{\prob}$ be an $(\eps,\delta,k)$-ensemble with $\{\prob_1,\ldots,\prob_k\} \subseteq
\distrib$ and $0<\eps,\delta<1/2$. Then for any
    $t < \ln(17k)/(2 \delta^2)$
and any experts algorithm \A, at least half of the measures $\prob_i$ have the property that 
	$R_{(\A,\,\prob_i)}(t) \geq \eps t/2$.
\end{theorem}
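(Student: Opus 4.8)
The plan is to convert the regret bound into a statement about how often the algorithm can concentrate its plays in the ``good'' regions $S_i$, and then to argue that identifying which region is good requires strictly more than $t$ samples once $t<\ln(17k)/(2\delta^2)$.

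First I would reduce to a counting problem using Property~\ref{ens:2}. Fix \A\ and let $y_1,\dots,y_t\in X$ be the strategies it plays; in each round $y_s$ depends only on the past observations and \A's internal randomness, hence is independent of the fresh sample drawn in round $s$, so on instance $\prob_i$ the expected reward in round $s$ is $\expect_i[\mu_i(y_s)]$, where $\expect_i$ denotes expectation when the samples are i.i.d.\ from $\prob_i$. Property~\ref{ens:2} gives $\sup(\mu_i,X)=\sup(\mu_i,S_i)$ together with $\mu_i(y)\le\sup(\mu_i,X)-\eps$ for every $y\notin S_i$. Writing $N_i=\#\{s\le t:\ y_s\in S_i\}$, it follows that $R_{(\A,\prob_i)}(t)\ge \eps\,\expect_i[\,t-N_i\,]$. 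Hence it suffices to show that $\expect_i[N_i]\le t/2$ for at least $k/2$ of the indices $i$; equivalently, since all terms are nonnegative, it suffices to show $\tfrac1k\sum_{i=1}^k \expect_i[N_i]/t\le \tfrac14$, as then fewer than $k/2$ indices can have $\expect_i[N_i]>t/2$ by Markov's inequality.

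The heart of the argument is an information-theoretic change of measure controlled by Property~\ref{ens:1}. I would place a uniform prior $I$ on $\{1,\dots,k\}$, generate the $t$ samples from $\prob_I$, and use the randomized estimator $\hat I$ that outputs $i$ with probability $N_i/t$ (which is a legitimate sub-distribution precisely because the $S_i$ are pairwise disjoint, so $\sum_i N_i\le t$). Then $\Pr[\hat I=I]=\tfrac1k\sum_i \expect_i[N_i]/t$, exactly the quantity we must bound, and Fano's inequality (the mutual-information method) gives $\Pr[\hat I=I]\le (I(I;\text{obs})+\ln 2)/\ln k$. Property~\ref{ens:1} forces the single-round density $d\prob_i/d\prob_0$ into $[\tfrac1{1+\delta},\tfrac1{1-\delta}]$, from which a second-order expansion yields the sharp estimate $\mathrm{KL}(\prob_i\,\|\,\prob_0)\le \tfrac12\delta^2(1+O(\delta))$; tensorizing over the $t$ i.i.d.\ rounds and using that the mixture minimizes average divergence gives $I(I;\text{obs})\le t\max_i\mathrm{KL}(\prob_i\|\prob_0)=O(t\delta^2)$. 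Substituting $t<\ln(17k)/(2\delta^2)$ makes this $O(\ln k)$, which is what pins $\Pr[\hat I=I]$ below the required $\tfrac14$. I would isolate the divergence estimate and the Fano step as a self-contained lemma, deferring the KL computations to Appendix~\ref{sec:KL-divergence}.

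The main obstacle is entirely quantitative, and it is exactly why the threshold carries a $\ln k$ rather than an absolute constant. In the regime $t\approx\ln(k)/\delta^2$ the total-variation distance between $\prob_0^{\otimes t}$ and a single $\prob_i^{\otimes t}$ is $\Omega(\sqrt{\ln k})$, i.e.\ not small, so any attempt to transfer $\expect_i[N_i]$ to the reference measure one index at a time (bounding $|\expect_i[N_i]-\expect_0[N_i]|$ by $t\,\|\prob_0^{\otimes t}-\prob_i^{\otimes t}\|_{TV}$) is hopeless; the argument must instead use all $k$ alternatives \emph{jointly}, which is what the Fano/mutual-information step accomplishes through the \emph{average} divergence. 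The delicate point is that the constants just barely close: one genuinely needs the sharp per-round bound $\mathrm{KL}(\prob_i\|\prob_0)\le\tfrac12\delta^2(1+o(1))$ rather than a crude first-order estimate, and the specific constants (the $\tfrac12$ in the exponent and the factor $17$, as well as the $t/2$ cutoff feeding the Markov step) are tuned precisely to drive the Fano bound below $\tfrac14$ for all $k$. Verifying this inequality carefully, and confirming that the algorithm's external randomness does not inflate the divergence between the two data laws, is where I expect the real work to lie.
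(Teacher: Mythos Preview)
Your overall plan is sound and the reduction in the first paragraph is exactly right: bounding $\tfrac{1}{k}\sum_i \expect_i[N_i]/t$ below $\tfrac14$ would indeed finish the job via Markov. The paper, however, takes a different information-theoretic route, and the reason matters: your Fano step does not close with the stated constants.

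Work the numbers. Lemma~\ref{lem:reverse-pinsker} (which is what one can actually prove for \emph{all} $\delta\in(0,\tfrac12)$, not just $\delta\to 0$) gives $KL(\prob_i\Vert\prob_0)<\delta^2$, so after $t$ rounds the mutual information is at most $t\delta^2<\tfrac12\ln(17k)$. Fano then yields
\[
\Pr[\hat I=I]\;\le\;\frac{\tfrac12\ln(17k)+\ln 2}{\ln k},
\]
which exceeds $\tfrac12$ for every $k$---nowhere near the $\tfrac14$ you need. Even granting your sharper estimate $KL\le\tfrac12\delta^2(1+O(\delta))$ (which is only asymptotic and degrades badly as $\delta\uparrow\tfrac12$), the bound becomes $\tfrac14+(\tfrac14\ln 17+\ln 2)/\ln k>\tfrac14$ for all $k$. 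The constants do not ``just barely close''; they miss outright, because Fano carries a fixed additive $\ln 2$ and does not exploit the disjointness of the $S_i$ sharply enough.

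The paper instead argues \emph{per round}. Fix a round $s$ and call it ``satisfactory for $i$'' if $\prob_i[y_s\in S_i]\ge\tfrac12$. A direct change-of-measure bound (Lemma~\ref{lem:kl-distinguishing}, which lower-bounds $q(\mathcal{E})$ multiplicatively in terms of $p(\mathcal{E})$ and $KL(p;q)$) shows that whenever $s$ is satisfactory for $i$, one also has $\prob_0[y_s\in S_i]>4/k$; disjointness of the $S_i$ under the single reference measure $\prob_0$ then forces fewer than $k/4$ indices to be satisfactory in any given round. Averaging over rounds and applying Markov over $i$ yields at least $k/2$ indices with fewer than $t/2$ satisfactory rounds, and each unsatisfactory round contributes $\Omega(\eps)$ to the regret under $\prob_i$. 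The crucial difference from your approach is that this per-round argument converts disjointness directly into a counting constraint under $\prob_0$, sidestepping the additive slack inherent in Fano.
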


\OMIT{ 
\begin{theorem}\label{thm:LB-technique}
Let  $\F_1, \ldots, \F_k$ be an $(\eps,k)$-ensemble on $(X,\F)$, where $0<\eps<1/12$.
\begin{itemize}
\item[(a)] Consider the feasible MAB problem on $(X,\F)$.
Then for any $t \leq k/32\,\eps^2$ and any bandit algorithm \A\ there exists $\F_i$ such that for each payoff function $\mu_i\in \F_i$ we have
	$R_{(\A,\,\mu_i)}(t) \geq \eps t/60$.
\item[(b)] Consider the feasible experts problem on $(X,\F)$. Then for any
$t \leq \ln(k)/32\,\eps^2$
and any experts algorithm \A\ there exists $\F_i$ such that for each payoff function $\mu_i\in \F_i$ we have
	$R_{(\A,\,\mu_i)}(t) \geq \eps t/60$.
\end{itemize}
\end{theorem}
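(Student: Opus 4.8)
The plan is to reduce the statement to a standard information-theoretic argument comparing the base measure $\prob_0$ against each of the alternatives $\prob_1,\ldots,\prob_k$. The key intuition is that property~\ref{ens:1} of the ensemble makes all $k+1$ measures statistically close, so that an algorithm running for few rounds cannot reliably tell which of the $\prob_i$ it is facing; meanwhile property~\ref{ens:2} guarantees that for each $i$, an algorithm that fails to concentrate its plays inside $S_i$ must incur regret growing linearly in $t$ with rate at least $\eps$. First I would quantify the ``statistical closeness'' in terms of KL-divergence: property~\ref{ens:1}, which bounds the likelihood ratio $\prob_0(\mathcal{E})/\prob_i(\mathcal{E})$ within $1\pm\delta$ for \emph{every} Borel event, should translate into a per-round KL-divergence bound of order $O(\delta^2)$ between $\prob_0$ and $\prob_i$. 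Since the algorithm observes an i.i.d. sample from the measure in each of $t$ rounds, the KL-divergence between the $t$-round observation distributions is at most $t\cdot O(\delta^2)$, which is $o(1)$ precisely because of the hypothesis $t < \ln(17k)/(2\delta^2)$.

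The next step is to convert this KL bound into a statement about the algorithm's behavior. For a fixed experts algorithm \A, let $Q_i$ denote the probability, under measure $\prob_i$, that the algorithm plays inside $S_i$ for at least half of the first $t$ rounds (in expectation). Under $\prob_0$, since the sets $S_1,\ldots,S_k$ are pairwise disjoint, the algorithm can spend a majority of its rounds in at most one of them, so the average over $i$ of the analogous quantity $Q_i^{(0)}$ (measured under $\prob_0$) is at most $1/k$. By Pinsker's inequality together with the $t$-round KL bound, the total-variation distance between \A's trajectory distribution under $\prob_0$ and under $\prob_i$ is bounded by something like $\sqrt{\tfrac12 t\cdot O(\delta^2)}$, which the choice of $t$ keeps below a small constant. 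This lets me transfer the ``at most $1/k$ on average'' guarantee from $\prob_0$ to the $\prob_i$, concluding that for at least half of the indices $i$, the algorithm plays inside $S_i$ for fewer than half the rounds with high probability. For each such $i$, property~\ref{ens:2} forces the algorithm to forgo the payoff gap $\eps$ on a constant fraction of rounds, yielding $R_{(\A,\,\prob_i)}(t) \geq \eps t/2$.

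The main obstacle I anticipate is the passage from the multiplicative likelihood-ratio condition in property~\ref{ens:1} to a clean additive KL-divergence bound, and then correctly aggregating the single-round divergences into the $t$-round divergence while tracking all constants so that the threshold $\ln(17k)/(2\delta^2)$ and the conclusion $\eps t/2$ come out exactly as stated. The factor $17$ and the specific constant $2$ suggest a careful calibration between the Pinsker-type total-variation bound and the averaging argument over the $k$ disjoint sets; getting these to line up (rather than merely up to unspecified constants) is the delicate part. I would expect this constant-tracking to rely on an auxiliary KL-divergence lemma, and indeed the paper signals in its organization that ``all KL-divergence arguments\ldots are gathered in Appendix~\ref{sec:KL-divergence},'' so I would isolate the per-round divergence estimate there and keep the proof of Theorem~\ref{thm:LB-technique} at the level of the averaging-and-transfer argument sketched above.
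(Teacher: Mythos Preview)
Your overall architecture---bound the per-round KL via property~\ref{ens:1}, aggregate over $t$ rounds, then use disjointness of the $S_i$ together with a change-of-measure argument---matches the paper's. But there is a genuine gap in the change-of-measure step, and it is exactly where you flag the ``main obstacle.''

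You write that the $t$-round KL-divergence is ``at most $t\cdot O(\delta^2)$, which is $o(1)$ precisely because of the hypothesis $t < \ln(17k)/(2\delta^2)$.'' This is not correct: the hypothesis gives $t\,\delta^2 < \tfrac12 \ln(17k)$, which grows logarithmically in $k$ rather than tending to zero. Consequently Pinsker's inequality only yields a total-variation bound of order $\sqrt{\ln k}$, which is vacuous already for moderate $k$. The ``transfer'' step you describe---moving the ``at most $1/k$ on average'' statement from $\prob_0$ to $\prob_i$ via a small-TV perturbation---therefore does not go through. This is not a constant-tracking issue; Pinsker is simply the wrong tool in this regime.

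The paper circumvents this by using a different change-of-measure lemma (Lemma~\ref{lem:kl-distinguishing}): if $KL(p;q)=\kappa$ then $q(\mathcal{E}) \geq p(\mathcal{E})\exp\bigl(-(\kappa+1/e)/p(\mathcal{E})\bigr)$. This remains informative even when $\kappa\sim\ln k$. Applied per round with $p=p_i^s$, $q=p_0^s$, and $\mathcal{E}_i^s=\{\A$ plays in $S_i$ at time $s\}$, it shows that whenever $p_i^s(\mathcal{E}_i^s)\geq\tfrac12$ one already has $p_0^s(\mathcal{E}_i^s)>4/k$. Disjointness of the $S_i$ under $\prob_0$ then forces fewer than $k/4$ indices to be ``satisfactory'' at any fixed round $s$; a Markov-style averaging over rounds and over $i$ gives the $\eps t/2$ conclusion. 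Note also that the paper works with the per-round events $\mathcal{E}_i^s$ rather than your aggregate event $Q_i$; the per-round formulation is what lets the multiplicative KL lemma and the disjointness constraint interact cleanly.
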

} 

\begin{note}{Remarks.}
For space reasons, the proof of this theorem has been moved to the
appendix.
The proof of Theorem~\ref{thm:lower-bound} uses Theorem~\ref{thm:LB-technique} for $k=2$, and  the proof of Theorem~\ref{thm:experts-MaxMinLCD}  will use it again for large $k$.
\end{note}

\begin{proofof}{Lemma~\ref{lm:ball-tree-LB}}
Consider the ball-tree $T$. For each $i\geq 1$, recall that $D_i$
is the set of all depth-$i$ nodes in $T$, and that
	$r^*_i = \min \{r: (x,r)\in D_i   \}$
is the smallest radius among these nodes. Let $\mathbf{P}$ be the set of all probability measures induced by the lineages of $T$.  For each complete lineage
$\lambda$ and tree node $w$ in $T$, let $w_1,w_2$ denote
the children of $w$,  let $i$ denote their depth, and
let $w'$ denote the unique element of
$\{w_1,w_2\} \cap \lambda$.  The three lineages
    $\lambda_0 = \lambda \setminus \{w'\}, \,
     \lambda_1 = \lambda_0 \cup \{w_1\}, \,
     \lambda_2 = \lambda_0 \cup \{w_2\}$
define a triple of probability measures
    $\vec{\prob} = (\prob_{\lambda_0}, \prob_{\lambda_1}, \prob_{\lambda_2})$
that constitute a $(\eps,\delta_i,2)$-ensemble where
$\eps = r^*_i \delta_i / 4$.

Let us fix an experts algorithm \A.  By Theorem~\ref{thm:LB-technique}
there exists $\alpha(w) \in \{\prob_{\lambda_1},\prob_{\lambda_2}\}$
such that for any $t_i$ satisfying $1/\delta_i^2 < t < \ln(34)/(2 \delta_i^2)$,
	$$R_{(\A,\, \alpha(w))}(t_i) \geq \eps t_i / 2
           = r^*_i \delta_i t_i / 8 > \tfrac18 r^*_i \sqrt{t},$$
Recalling the definition of $n_i = \delta_i^{-2}$, we see that
        $i \cdot g(t_i) < \tfrac{1}{8} r^*_i \sqrt{t_i} <
R_{(\A, \, \alpha(w))}(t_i).$

For each $i$, let us define $\mathcal{E}_i$ to be the set of
input distributions $\prob_{\lambda}$ such that $\lambda$ is
a complete lineage whose associated leaf
$\mathbf{w}(\lambda) = (w_0,w_1,\ldots)$ satisfies
$w_i = \alpha(w_{i-1})$.  Interpreting these sets as random
events under the probability distribution $\mathcal{P}_T$,
we have proved the following: there exists a sequence of
events $\mathcal{E}_i$, $i\in \N$ and a sequence of times
$t_i \rightarrow \infty$ such that for each $i$ we have
(i) $\Pr[\mathcal{E}_i |\,
	\sigma(\mathcal{E}_1,\, \ldots,\, \mathcal{E}_{i-1})] = \tfrac12$
and (ii)	
	$R_{(\A,\,\prob)}(t_i) > i \cdot g(t_i)$
for any $\prob \in \mathcal{E}_i$.

Now, let us fix an experts algorithm \A. 
For each complete lineage  $\lambda$, define
$ C_\lambda := \inf \{ C \leq \infty:\,
        R_{(\A,\,\prob_{\lambda})}(t) \leq C\, g(t) \text{~for all $t$} \} $.
Note that
        $R_{(\A,\,\prob_{\lambda})}(t) = O_\mu(g(t)) $
if and only if $C_\lambda < \infty$. We claim that
    $\Pr [C_\lambda < \infty] = 0$
where the probability is over the random choice of complete
lineage $\lambda$.  Indeed,
if infinitely many events $\mathcal{E}_i$ happen,
then event $\{C_\mu<C\}$ does not.  But the
probability that infinitely many events $\mathcal{E}_i$
happen is 1, because for every positive integer $n$,
   $\Pr \left[ \cap_{i=n}^{\infty}
      \overline{\mathcal{E}_i} \right] =
     \prod_{i=n}^{\infty} \Pr \left[ \overline{\mathcal{E}_i}
     \,\left|\, \cap_{j=n}^{i-1} \overline{\mathcal{E}_j} \right. \right]
     = 0.$
\end{proofof}

\section{Tractability for compact well-orderable metric spaces}
\label{sec:tractability}

\newcommand{\UCB}{\ensuremath{\text{{\sc ucb1}}}}

In this section we prove the main algorithmic result.

\begin{theorem}\label{thm:main-alg}
Consider a compact well-orderable metric space $(X,d)$. Then:
\begin{OneLiners}
\item[(a)] the Lipschitz MAB problem on $(X,d)$ is $f$-tractable for every $f\in\omega(\log t)$;
\item[(b)] the Lipschitz experts problem on $(X,d)$ is 1-tractable, even with a double feedback.
\end{OneLiners}
\end{theorem}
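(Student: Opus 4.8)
The plan rests on a single structural consequence of the definition of a topological well-ordering. Fix a topological well-order $\prec$ on $(X,d)$. For every $x\in X$ the set $\{y:\, y\preceq x\}$ is open: it equals the initial segment $\{y:\, y\prec x^{+}\}$ when $x$ has a $\prec$-successor $x^{+}$, and equals $X$ when $x$ is the $\prec$-maximum. Hence this open set contains a ball $B(x,\epsilon_x)$, and $x$ is the $\prec$-largest element of $B(x,\epsilon_x)$. I would record this as a lemma, and then combine it with compactness: for every scale $\delta>0$ the balls $\{B(x,\min(\epsilon_x,\delta))\}_{x\in X}$ cover $X$, so a finite subcover exists, giving a finite set of centers $F_\delta$ such that $X=\bigcup_{c\in F_\delta}B(c,\epsilon_c)$ with each $\epsilon_c\le\delta$ and each $c$ being $\prec$-maximal in its own ball. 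The payoff of this setup is that $\sup(\mu,X)$ is attained at some $x^{*}$ (compactness plus continuity of $\mu$), and if $x^{*}\in B(c,\epsilon_c)$ then $d(c,x^{*})<\delta$, so the Lipschitz property gives $\mu(c)>\sup(\mu,X)-\delta$. Thus \emph{every} cover $F_\delta$ contains a representative whose suboptimality gap is below $\delta$, and crucially the algorithm can \emph{construct} $F_\delta$ by querying a well-ordering oracle to select the $\prec$-dominating centers; this is exactly where the well-ordering (rather than mere countability) is used.

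For part (a) I would run \UCB\ over a slowly growing finite \emph{active set} $A_t$, assembled from the covers $F_{\delta_j}$ at a geometric sequence of scales $\delta_j=2^{-j}$, activating finer representatives according to a schedule $m(t)=|A_t|\to\infty$ that is tuned to the target $f\in\omega(\log t)$ and to the instance. The regret then splits into a \emph{resolution bias} term (at most $\delta_j\, t$ while the finest active scale is $\delta_j$, controlled because a near-optimal representative is guaranteed present) and an \emph{exploration} term. For the latter I would use the refined bound in which, for any threshold $\theta$, arms of gap at least $\theta$ contribute $O(|A_t|\,(\log t)/\theta)$ and arms of gap below $\theta$ contribute at most $\theta\, t$. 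The main obstacle of part (a) is precisely the balancing of these terms: one must let $m(t)$ and the finest scale grow/shrink slowly enough that the exploration stays $O(f(t))$ yet fast enough that the bias vanishes, exploiting that the constant may depend on the instance's gap structure. Since for any $f\in\omega(\log t)$ one can choose $m(t)\to\infty$ with $m(t)\log t=O(f(t))$, a sufficiently lazy activation schedule yields $R(t)=O_{\mathcal I}(f(t))$; the $\log t$ factor per active arm is the unavoidable Lai--Robbins cost of bandit feedback, which is why one cannot reach $O(\log t)$ itself.

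For part (b) the double feedback removes that bottleneck, and I expect a \emph{constant} regret. The idea is that a ``free peek'' lets the algorithm sample any arm without betting on it, so exploration no longer incurs a gap. I would devote the peeks, in a round-robin over the countably many representatives drawn from the covers $F_{\delta_j}$, to progressively finer scales, while betting on the arm with the best empirical mean seen so far. Within each fixed scale this reduces to the finite full-feedback problem with i.i.d.\ payoffs, where constant regret is achievable~\cite{sleeping-colt08}: once a near-optimal representative has been peeked enough times, concentration makes the probability of betting on an $\epsilon$-suboptimal arm decay fast, so the per-scale regret is summable. The remaining obstacle is to show that the contributions of the successive scales themselves form a convergent series, so that the total is $O(1)$ rather than $\omega(1)$; this follows by geometrically decreasing the bias $\delta_j$ and charging each scale only a convergent share of the peeks, again using that a near-optimal representative lives in each finite $F_{\delta_j}$.

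Finally, I would assemble the pieces: the structural lemma and compactness reduce both problems to running these schemes over the well-ordering oracle, and the regret accountings above establish $\omega(\log t)$-tractability in the bandit model and $1$-tractability in the double-feedback experts model. The hardest part throughout is the quantitative tuning of the activation/peeking schedules against the instance-dependent gap structure; the topological input (dominating centers from open initial segments, finite covers from compactness, near-optimal representatives from the Lipschitz property) is what makes such schedules possible at all.
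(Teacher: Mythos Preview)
Your structural observation is correct --- every $x$ is $\prec$-maximal in some ball $B(x,\epsilon_x)$, and compactness yields finite covers $F_\delta$ by such balls --- but your analysis never uses it. The inequality $\mu(c)>\mu^*-\delta$ for the center $c$ whose ball contains $x^*$ follows from the Lipschitz condition alone and holds for \emph{any} $\delta$-cover of \emph{any} compact metric space; the $\prec$-maximality of $c$ plays no role there. Consequently your scheme, as written, does not distinguish well-orderable spaces from arbitrary compact ones, yet the latter admit the $\sqrt{t}$ lower bound of Theorem~\ref{thm:lower-bound}. The concrete failure is that you only ever locate a $\delta$-near-optimal arm, so you carry a bias $\delta_j\cdot t$; driving $\delta_j\to 0$ fast enough to kill this while keeping $|A_t|\log t=O(f(t))$ is impossible without quantitative control on covering numbers, which you do not have. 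Even if $x^*$ happened to enter $A_t$, running \UCB\ on a set that keeps acquiring arms arbitrarily close to $x^*$ pays $\Omega((\log t)/\Delta_i)$ per such arm with $\Delta_i\to 0$, and the sum is not $O(f(t))$. The same issue defeats part (b): constant regret requires that you eventually bet on $x^*$ \emph{exactly}, and nothing in your construction forces $x^*$ to appear.

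The paper's proof supplies the missing structural fact and exploits it quite differently. Lemma~\ref{lm:structural} shows there is a $\prec$-maximal optimal point $x^*$ with a fixed instance-dependent margin $\eta>0$: every $y\succ x^*$ has $\mu(y)\le \mu^*-\eta$. (Your $\prec$-maximality observation would become useful \emph{here}: the center $c$ covering $x^*$ satisfies $c\succeq x^*$ and $\mu(c)>\mu^*-\delta$, so $\delta<\eta$ forces $c=x^*$ --- but you neither state the margin nor draw this conclusion.) The exploration subroutine then samples a $\delta$-cover, discards demonstrably suboptimal balls, and asks the ordering oracle for the $\prec$-maximal point in the union of the surviving closed balls; once the parameters are below a constant multiple of $\eta$, this point is \emph{exactly} $x^*$ (Lemma~\ref{lm:tractability}). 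The algorithm plays only $x^*$ for the remainder of a doubly-exponential phase, incurring zero further regret; the exploration cost per phase is tuned to be $O(f(T))$. For part (b), the free peek runs the same subroutine while the bet is the previous phase's output, so after finitely many phases every bet is $x^*$ and the total regret is $O(1)$.
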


We present a joint exposition for both the bandit and the experts version. Let us consider the Lipschitz MAB/experts problem on a compact metric space $(X,d)$ with a topological well-ordering $\prec$ and a payoff function $\mu$. For each strategy $x\in X$, let
	$S(x) = \{y\preceq x: y\in X\}$
be the corresponding initial segment of the well-ordering $(X,\prec)$. Let
	$\mu^* = \sup(\mu, X)$
denote the maximal payoff. Call a strategy $x\in X$ \emph{optimal} if $\mu(x) = \mu^*$.
We rely on the following structural lemma:

\begin{lemma}\label{lm:structural}
There exists an optimal strategy $x^*\in X$ such that
	$\sup(\mu, X\setminus S(x^*)) < \mu^*$.
\end{lemma}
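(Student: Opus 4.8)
The plan is to reduce the statement to a purely order-theoretic fact about the set of optimal strategies and then extract that fact from compactness. Let $O = \{x \in X : \mu(x) = \mu^*\}$ be the set of optimal strategies. Since $\mu$ is Lipschitz it is continuous, and since $(X,d)$ is compact the supremum $\mu^* = \sup(\mu,X)$ is attained; hence $O$ is nonempty, and being the preimage $\mu^{-1}(\mu^*)$ of a closed singleton it is closed, hence compact. I claim that finding the desired $x^*$ is exactly the same as exhibiting a $\prec$-largest element of $O$, and I will prove that such an element exists.

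Two observations drive the reduction. First, for every $x$ the set $S(x) = \{y \preceq x\}$ is open: if $x$ is not the $\prec$-maximum of $X$ it has an immediate successor $x^+ = \min\{y : y \succ x\}$, and then $S(x) = \{y \prec x^+\}$ is an initial segment, hence open by the definition of a topological well-ordering; if $x$ is the maximum then $S(x) = X$. Consequently the tail $X \setminus S(x) = \{y : y \succ x\}$ is closed, hence compact, so $\mu$ attains its maximum there. This yields the second observation: for an optimal $x^*$ the inequality $\sup(\mu, X \setminus S(x^*)) < \mu^*$ holds if and only if the tail contains no optimal point (the \emph{strict} inequality being available precisely because the max over the compact tail is attained while $\mu \le \mu^*$ everywhere), i.e.\ if and only if $O \subseteq S(x^*)$. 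Together with $x^* \in O$ this says exactly that $x^*$ is the $\prec$-maximum of $O$. So it remains to show that $O$ has a $\prec$-maximum.

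This last step is the heart of the argument, and it is where compactness and the order interact. Suppose toward a contradiction that $O$ has no $\prec$-maximum. For each $x \in O$ set $T_x = \{y \in O : y \succ x\}$; each $T_x$ is then nonempty, and each is closed in $O$ since $T_x = O \cap (X \setminus S(x))$ is the intersection of $O$ with a closed set. The family $\{T_x\}_{x \in O}$ has the finite intersection property: given $x_1,\ldots,x_n \in O$, letting $x = \max(x_1,\ldots,x_n)$ gives $\emptyset \neq T_x \subseteq \bigcap_i T_{x_i}$. By compactness of $O$ the total intersection is nonempty, so there is $w \in \bigcap_{x \in O} T_x$; but then $w \in O$ and $w \succ x$ for every $x \in O$, in particular $w \succ w$, a contradiction. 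Hence $\max O$ exists, and taking $x^* = \max O$ finishes the proof. I expect this compactness / finite-intersection step --- specifically, using that tails are closed (equivalently, that initial segments are open) to turn ``$O$ has no maximum'' into a violated finite-intersection property --- to be the only genuinely delicate point; the rest is bookkeeping around the successor argument for $S(x)$ and attainment of the supremum on compact tails.
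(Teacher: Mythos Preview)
Your proof is correct and follows essentially the same approach as the paper: both reduce to showing that the (compact) set of optimal points has a $\prec$-maximum, then use that the complementary tail is compact to get the strict inequality. The only cosmetic difference is that the paper invokes compactness via the open cover $\{S(x) : x \in O\}$ of $O$ and extracts a finite subcover whose largest element is $\max O$, whereas you use the dual finite-intersection-property formulation on the closed tails $T_x$; these are two phrasings of the same argument.
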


\begin{proof}
Let $X^*$ be the set of all optimal strategies. Since $\mu$ is a continuous real-valued function on a compact space $X$, it attains its maximum, i.e. $X^*$ is non-empty, and furthermore $X^*$ is closed. Note that $\{S(x): x\in X^*\}$ is an open cover for $X^*$. Since $X^*$ is compact (as a closed subset of a compact set) this cover contains a finite subcover, call it $\{S(x): x\in Y^*\}$. Then the $\prec$-maximal element of $Y^*$  is the $\prec$-maximal element of $X^*$. The initial segment $S(x^*)$ is open, so its complement $ Y = X\setminus S(x^*)$ is closed and therefore compact. It follows that $\mu$ attains its maximum on $Y$, say at a point $y^*\in Y$. By the choice of $y^*$ we have $x^*\prec y^* $, so by the choice of $x^*$ we have $\mu(x^*)> \mu(y^*)$.
\end{proof}

In the rest of this section we let $x^*$ be the strategy from Lemma~\ref{lm:structural}. Our algorithm is geared towards finding $x^*$ eventually, and playing it from then on. The idea is that if we cover $X$ with balls of a sufficiently small radius, any strategy in a ball containing  $x^*$ has a significantly larger payoff than any strategy in a ball that overlaps with $X\setminus S(x^*)$.

The algorithm accesses the metric space and the well-ordering via the following two oracles.

\begin{definition}\label{def:covering-oracle}
A \emph{$\delta$-covering} of a metric space $(X,d)$ is a subset $S\subset X$ such that each point in $X$ lies within distance $\delta$ from some point in $S$. An oracle $\mathcal{O} = \mathcal{O}(k)$ is a \emph{covering oracle} for $(X,d)$ if it inputs $k\in\N$ and outputs a pair $(\delta, S)$ where $\delta = \delta_\mathcal{O}(k)$ is a positive number and  $S$ is a $\delta$-covering of $X$ consisting of at most $k$ points. Here $\delta_\mathcal{O}(\cdot)$ is any function such that
	$\delta_\mathcal{O}(k)\rightarrow 0$ as $k\rightarrow\infty$.
\end{definition}

\begin{definition}\label{def:ordering-oracle}
Given a metric space $(X,d)$ and a total order $(X,\prec)$, the \emph{ordering oracle} inputs a finite collection of balls (given by the centers and the radii), and returns the $\prec$-maximal element covered by the closure of these balls, if such element exists, and an arbitrary point in $X$ otherwise.
\end{definition}

\newcommand{\oracleX}{\ensuremath{x_{\mathrm{or}}}}
\newcommand{\muAv}{\ensuremath{\mu_{\mathrm{av}}}}
\newcommand{\explSub}{\ensuremath{\mathtt{EXPL}}}

Our algorithm is based on the following \emph{exploration subroutine} $\explSub()$.

\begin{algorithm}\label{alg:PMO-explore}
Subroutine $\explSub(k,n,r)$: inputs $k, n \in \N$ and $r\in (0,1)$, outputs a point in $X$.

First it calls the covering oracle $\mathcal{O}(k)$ and receives a $\delta$-covering $S$ of $X$ consisting of at most $k$ points. Then it plays each strategy $x\in S$ exactly $n$ times; let $\muAv(x)$ be the sample average. Let us say that $x$ a \emph{loser} if
	$\muAv(y) - \muAv(x)> 2r  + \delta$
for some $y\in S$. Finally, it calls the ordering oracle with the collection of all closed balls 	 $\Bar{B}(x,\delta)$ such that $x$ is not a loser, and outputs the point $\oracleX\in X$ returned by this oracle call.
\end{algorithm}

Clearly, $\explSub(k,n,r)$ takes at most $kn$ rounds to complete. We show that for sufficiently large $k,n$ and sufficiently small $r$ it returns $x^*$ with high probability.

\begin{lemma}\label{lm:tractability}
Fix a problem instance and let $x^*$ be the optimal strategy from Lemma~\ref{lm:structural}.
Consider increasing functions $k,n,T: \N\to \N$ such that
    $r(t) := 4\sqrt{ (\log T(t))\, / n(t)}  \to 0$.
Then for any sufficiently large $t$, with probability at least $1-T^{-2}(t)$, the subroutine
    $\explSub(k(t),\,n(t),\, r(t))$ returns $x^*$.
\end{lemma}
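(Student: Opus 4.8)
The plan is to confine all randomness to a single Hoeffding concentration step and then argue deterministically that, on the resulting ``clean'' event, the subroutine is forced to return $x^*$; the deterministic part is driven entirely by the payoff gap supplied by Lemma~\ref{lm:structural}. Write $\mu^* = \mu(x^*)$ and let $\Delta := \mu^* - \sup(\mu,\, X\setminus S(x^*)) > 0$ be the gap from Lemma~\ref{lm:structural}, a fixed positive constant depending only on the instance. Abbreviate $k=k(t)$, $n=n(t)$, $r=r(t)$, and let $\delta = \delta_\mathcal{O}(k)$ be the covering radius returned by $\mathcal{O}(k)$, noting $\delta\to 0$ and $r\to 0$ as $t\to\infty$.

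First I would set up the clean event. Let $S$ be the $\delta$-covering returned by $\mathcal{O}(k)$, with $|S|\le k$, and let $\mathcal{G}$ be the event that $|\muAv(x)-\mu(x)|\le r$ for every $x\in S$. Since each $x\in S$ is played $n$ times with payoffs in $[0,1]$, Hoeffding's inequality gives $\Pr[|\muAv(x)-\mu(x)|>r]\le 2e^{-2nr^2}$; the choice $r=4\sqrt{(\log T)/n}$ makes $2nr^2 = 32\log T$, so a union bound over the at most $k$ sampled points yields $\Pr[\overline{\mathcal{G}}]\le 2k\,T^{-32}$, which is below $T^{-2}(t)$ for all large $t$ (the constant $4$ in $r$ is tuned precisely to leave this slack, using that $k$ grows only polynomially in $T$ in our regime). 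It therefore suffices to show that $\mathcal{G}$ forces $\explSub(k,n,r)$ to output $x^*$.

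The deterministic core has two halves. \emph{(i) $x^*$ survives and is covered.} As $S$ is a $\delta$-covering, pick $s^*\in S$ with $d(s^*,x^*)\le\delta$; by the Lipschitz property $\mu(s^*)\ge \mu^*-\delta$, so on $\mathcal{G}$ we have $\muAv(s^*)\ge \mu^*-\delta-r$. For every $y\in S$, $\muAv(y)\le\mu(y)+r\le\mu^*+r$, hence $\muAv(y)-\muAv(s^*)\le 2r+\delta$ and $s^*$ is \emph{not} a loser; since $x^*\in\Bar{B}(s^*,\delta)$, the point $x^*$ is covered by the closure of the non-loser balls. \emph{(ii) Nothing beyond $x^*$ is covered.} Suppose some non-loser ball $\Bar{B}(x,\delta)$, $x\in S$, contained a point $z\succ x^*$, i.e.\ $z\in X\setminus S(x^*)$. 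Then $d(x,z)\le\delta$ gives $\mu(x)\le\mu(z)+\delta\le\mu^*-\Delta+\delta$, so on $\mathcal{G}$ we get $\muAv(x)\le\mu^*-\Delta+\delta+r$ and therefore $\muAv(s^*)-\muAv(x)\ge\Delta-2\delta-2r$. For large $t$ we have $\Delta-2\delta-2r>2r+\delta$, so $s^*$ witnesses $x$ as a loser, a contradiction. Hence every non-loser ball lies inside $S(x^*)=\{y\preceq x^*\}$, so the closed (hence compact) union of non-loser balls is a subset of $S(x^*)$ containing $x^*$; its $\prec$-maximal element is exactly $x^*$, which is what the ordering oracle returns.

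I expect the main obstacle to be the simultaneous calibration of the three tolerances --- the covering radius $\delta$, the confidence half-width $r$, and the loser threshold $2r+\delta$: the \emph{same} threshold must be loose enough that the near-optimal representative $s^*$ is never falsely eliminated (part (i)) yet tight enough that every ball reaching into $X\setminus S(x^*)$ is eliminated once $\delta,r$ are small relative to the fixed gap $\Delta$ (part (ii)). Everything else is bookkeeping: the probabilistic step is routine Hoeffding plus a union bound, and the topological inputs --- that $S(x^*)$ is open with a strict payoff gap $\Delta$ to its complement, and that the compact union of finitely many closed balls attains a $\prec$-maximum --- are furnished by Lemma~\ref{lm:structural} and compactness of $X$.
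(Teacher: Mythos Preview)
Your proof is correct and follows essentially the same approach as the paper: define a ``clean'' event via concentration bounds, then show on that event that (i) a point $s^*\in S$ near $x^*$ cannot be a loser (so $x^*$ lies in the union of non-loser closed balls), and (ii) any ball around $x\in S$ touching $X\setminus S(x^*)$ is eliminated once $r,\delta$ are small relative to the fixed gap $\Delta$; hence the oracle returns $x^*$. Your parenthetical about needing $k$ polynomial in $T$ for the union bound is a point the paper likewise glosses over (it simply asserts the $1-T^{-2}$ bound), and is harmless in the applications where $k\cdot n\le T$.
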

\begin{proof}
Let us use the notation from Algorithm~\ref{alg:PMO-explore}. Fix $t$ and consider a run of
    $\explSub(k(t),\,n(t),\, r(t))$.
Call this run \emph{clean} if for each $x\in S$ we have $|\muAv(x)-\mu(x)| \leq r(t)$. By Chernoff Bounds, this happens with probability at least $1-T^{-2}(t)$. In the rest of the proof, let us assume that the run is clean.

Let $\Bar{B}$ be the union of the closed balls $\Bar{B}(x,\delta)$, $x\in S^*$. Then the ordering oracle returns the $\prec\mbox{-maximal}$ point in $\Bar{B}$ if such point exists. We will show that
	$x^*\in \Bar{B}\subset S(x^*)$
for any sufficiently large $t$, which will imply the lemma.

We claim that $x^* \in \Bar{B}$. Since $S$ is a $\delta$-covering, there exists $y^*\in S$ such that
	$d(x^*, y^*)\leq \delta$.
Let us fix one such $y^*$. It suffices to prove that $y^*$ is not a loser. Indeed, if 	
	$\muAv(y) - \muAv(y^*)> 2\,r(t) + \delta $
for some $y\in S$ then
	$\mu(y) > \mu(y^*) + \delta \geq \mu^*$,
contradiction. Claim proved.

Let
	$\mu_0 = \sup(\mu, X\setminus S(x^*))$
and let $r_0 = (\mu^*-\mu_0)/7$. Let us assume that $t$ is sufficiently large so that $r(t) < r_0$ and
	$\delta = \delta_{\mathcal{O}}(k(t))<r_0$,
where $\delta_{\mathcal{O}}(\cdot)$ is from the definition of the covering oracle.

We claim that $\Bar{B}\subset S(x^*)$. Indeed, consider
	$x\in S$ and $y\in X\setminus S(x^*)$ such that $d(x, y) \leq \delta$.
It suffices to prove that $x$ is a loser. Consider some $y^*\in S$ such that $d(x^*,y^*)\leq \delta$. Then by the Lipschitz condition
\begin{align*}
\muAv(y^*)
	& \geq \mu(y^*) - r_0 \geq \mu^* -  2 r_0, \\
\muAv(x)
	&\leq \mu(x) + r_0 \leq \mu(y) + r_0
	\leq \mu_0 + 2r_0 \leq \mu^* - 5 r_0 \\
\muAv(y^*) - \muAv(x)
	& \geq 3 r_0 > 2 r(t) + \delta. \qedhere
\end{align*}
\end{proof}

\begin{proofof}{Theorem~\ref{thm:main-alg}}
Let us fix a function $f\in \omega(\log t)$. Then $f(t) = \alpha(t) \log(t)$ where $\alpha(t)\to\infty$. Without loss of generality, assume that $\alpha(t)$ is non-decreasing. (If not, then instead of $f(t)$ use
    $g(t) = \beta(t) \log(t)$,
where
    $\beta(t) = \inf \{ \alpha(t'):\, t'\geq t \}$.)

For part (a), define
    $k_t = \flr{\sqrt{g(t)/ \log t}} $,
    $n_t = \flr{k_t \log t}$,
and
    $r_t = 4\sqrt{(\log t)/ n_t}$.
Note that $r_t \to 0$.

The algorithm proceeds in phases of a doubly exponential length\footnote{The doubly exponential phase length is necessary in order to get $f$-tractability.  If we employed the more familiar \emph{doubling trick} of using phase length $2^i$ (as in~\cite{bandits-exp3,Bobby-nips04,LipschitzMAB-stoc08} for example) then the algorithm would only be $f(t)\, \log t$-tractable.}. A given phase  $i=1,2,3,\ldots$ lasts for $T = 2^{2^i}$ rounds. In this phase, first we call the exploration subroutine
    $\explSub(k_T,\,n_T,\, r_T)$.
Let $\oracleX\in X$ be the point returned by this subroutine. Then we play \oracleX\ till the end of the phase. This completes the description of the algorithm.

Fix a problem instance $\mathcal{I}$. Let $W_i$ be the total reward accumulated by the algorithm in phase $i$, and let
	$R_i = 2^{2^i}\,\mu^* - W_i$
be the corresponding share of regret.
By Lemma~\ref{lm:tractability} there exists $i_0 = i_0(\mathcal{I})$ such that for any phase $i\geq i_0$  we have, letting $T=2^{2^i}$ be the phase duration, that $R_i \leq k_T\, n_T \leq g(T)$ with probability at least $1-T^{-2}$, and therefore
	$E[R_i] \leq g(T) + T^{-1}$.
For any $t > t_0 = 2^{2^{i_0}}$ it follows by summing over
$i \in \{i_0,i_0+1,\ldots,\lceil \log \log t \rceil\}$
that
        $R_{\A,\,\mathcal{I}}(t) = O(t_0 + g(t)).$
Note that we have used the fact that $\alpha(t)$ is non-decreasing.

For part (b), we separate exploration and exploitation. For exploration, we run  $\explSub()$ on the \emph{free peeks}. For exploitation, we use the point returned by $\explSub()$ in the previous phase. Specifically, define
    $k_t = n_t = \flr{\sqrt{t}} $,
and
    $r_t = 4\sqrt{(t^{1/4})/ n_t}$.
The algorithm proceeds in phases of exponential length. A given phase $i=1,2,3,\ldots$ lasts for $T = 2^i$ rounds. In this phase, we run the exploration subroutine
    $\explSub(k_T,\,n_T,\, r_T)$
on the \emph{free peeks}. In each round, we \emph{bet} on the point returned by $\explSub()$ in the previous phase. This completes the description of the algorithm.

By Lemma~\ref{lm:tractability} there exists $i_0 = i_0(\mathcal{I})$ such that in any phase $i\geq i_0$ the algorithm incurs zero regret with probability at least $1-e^{\Omega(i)}$. Thus the total regret after $t>2^{i_0}$ rounds is at most $t_0 + O(1)$.
\end{proofof}

\section{The $(\log t)$-intractability for infinite metric spaces: proof of Theorem~\ref{thm:logT}}
\label{sec:logT}

Consider an infinite metric space $(X,d)$. In view of Theorem~\ref{thm:boundary-of-tractability}, we can assume that the completion $X^*$ of $X$ is compact. It follows that there exists $x^*\in X^*$ such that $x_i\to x^*$ for some sequence $x_1, x_2,\, \ldots\, \in X$. Let $r_i = d(x_i, x^*)$. Without loss of generality, assume that
    $r_{i+1} < \tfrac12\, r_i$
for each $i$, and that the diameter of $X$ is $1$.

Let us define an ensemble of payoff functions
    $\mu_i :X\to[0,1]$, $i\in\N$,
where $\mu_0$ is the ``baseline" function, and for each $i\geq 1$ function $\mu_i$ is the ``counterexample" in which a neighborhood of $x_i$ has slightly higher payoffs.
The ``baseline" is defined by
    $\mu_0(x) = \tfrac12 - \tfrac{d(x,x^*)}{8}$,
and the ``counterexamples" are given by
$$ \mu_i(x) = \mu_0(x)+ \nu_i(x),
\text{~~where~~}
 \nu_i(x) = \tfrac{3}{4} \max \left(0, \tfrac{r_i}{3} - d(x,x^*) \right).
$$
Note that both $\mu_0$ and $\nu_i$ are $\tfrac18$-Lipschitz
and $\tfrac34$-Lipschitz w.r.t. $(X,d)$, respectively,
so $\mu_i$ is $\tfrac78$-Lipschitz w.r.t $(X,d)$.
Let us fix a MAB algorithm \A\ and assume that it is $(\log t)$-tractable. Then for each $i\geq 0$ there exists a constant $C_i$ such that
    $R_{(\A,\, \mu_i)}(t) < C_i \log t$
for all times $t$. We will show that this is not possible.

Intuitively, the ability of an algorithm to distinguish between payoff functions $\mu_0$ and $\mu_i$, $i\geq 1$  depends on the number of samples in the ball
    $B_i = B(x_i,\, r_i/3)$.
(This is because $\mu_0 = \mu_i$ outside $B_i$.) In particular, the number of samples itself cannot be too different under $\mu_0$ and under $\mu_i$, \emph{unless it is large}. To formalize this idea, let  $N_i(t)$ be the number of times algorithm \A\ selects a strategy in the ball $B_i$ during the first $t$ rounds, and let $\sigma(N_i(t))$ be the corresponding $\sigma$-algebra. Let $\prob_i[\cdot]$ and $\mathbb{E}_i[\cdot]$ be, respectively, the distribution and expectation induced by $\mu_i$. Then we can connect $\mathbb{E}_0[N_i(t)]$ with the probability of any event $S\in \sigma(N_i(t))$ as follows.

\begin{claim}\label{cl:logT-KLdiv}
For any $i\geq 1$ and any event $S\in \sigma(N_i(t))$ it is the case that
\begin{align}\label{eq:logT-KLdiv}
 \prob_i[S] < \tfrac13 \leq \prob_0[S] \quad \Rightarrow \quad
 - \ln(\prob_i[S]) - \tfrac{3}{e}
    \leq O(r_i^2)\; \mathbb{E}_0[N_i(t)].
\end{align}
\end{claim}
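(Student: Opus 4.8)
The plan is to prove Claim~\ref{cl:logT-KLdiv} via a standard KL-divergence (relative entropy) argument, of the kind gathered in Appendix~\ref{sec:KL-divergence}. The key observation is that the payoff functions $\mu_0$ and $\mu_i$ agree everywhere outside the ball $B_i = B(x_i, r_i/3)$, so the only way any algorithm can gather information distinguishing the two instances is by playing arms inside $B_i$. This suggests that the relative entropy between the distributions on observation histories induced by $\mu_0$ and $\mu_i$ is controlled by the expected number of samples $\mathbb{E}_0[N_i(t)]$ inside $B_i$, weighted by the per-sample KL-divergence, which in turn is governed by the gap $\nu_i(x) = O(r_i)$ between $\mu_0$ and $\mu_i$ on $B_i$.

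First I would bound the per-round contribution to relative entropy. On the ball $B_i$, the payoffs under $\mu_0$ and $\mu_i$ differ by at most $\tfrac34 \cdot \tfrac{r_i}{3} = O(r_i)$, both expectations living in a bounded range away from $0$ and $1$. For two payoff distributions on $[0,1]$ whose means differ by $\Delta = O(r_i)$, the KL-divergence between them is $O(\Delta^2) = O(r_i^2)$ (using, e.g., the bound $\KL(p\,\|\,q) = O((p-q)^2)$ for means bounded away from the endpoints). By the chain rule for relative entropy applied to the sequence of observations, together with the fact that $\mu_0 \equiv \mu_i$ off $B_i$ (so rounds spent outside $B_i$ contribute nothing), the total KL-divergence between the history distributions under $\prob_0$ and $\prob_i$ up to time $t$ is at most $O(r_i^2)\,\mathbb{E}_0[N_i(t)]$.

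Next I would convert this divergence bound into the probability estimate in~\refeq{eq:logT-KLdiv}. Since $N_i(t)$ is measurable with respect to the observation history, the data-processing inequality implies that the KL-divergence between the two laws of $N_i(t)$ (equivalently, the restriction to $\sigma(N_i(t))$) is also at most $O(r_i^2)\,\mathbb{E}_0[N_i(t)]$. Then, for any event $S\in\sigma(N_i(t))$ with $\prob_i[S] < \tfrac13 \leq \prob_0[S]$, a standard inequality relating relative entropy to the probabilities of a single event (Pinsker-type, or the elementary bound $\prob_0[S]\ln\tfrac{\prob_0[S]}{\prob_i[S]} + \prob_0[\bar S]\ln\tfrac{\prob_0[\bar S]}{\prob_i[\bar S]} \leq \KL$) yields $-\ln(\prob_i[S]) \leq \KL(\prob_0\|\prob_i)/\prob_0[S] + (\text{const}) = O(r_i^2)\,\mathbb{E}_0[N_i(t)] + O(1)$. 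The additive constant $\tfrac{3}{e}$ emerges from bounding the entropy-type remainder terms, using $\prob_0[S]\geq\tfrac13$ and the fact that $-x\ln x \leq 1/e$.

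The main obstacle, and the step requiring the most care, is making the divergence decomposition rigorous in the general measurable-function formulation of an algorithm: the observation history is a (possibly infinite-dimensional) random object, the algorithm is an arbitrary Borel-measurable map, and one must justify the chain rule for relative entropy in this setting and verify that rounds outside $B_i$ genuinely contribute zero to the divergence. I expect this to be handled cleanly by the general KL-divergence machinery of Appendix~\ref{sec:KL-divergence}, so here I would reduce the claim to the appropriate lemma there, specializing it to the two-instance comparison $(\mu_0,\mu_i)$ and the statistic $N_i(t)$; the remaining work is the elementary per-sample $O(r_i^2)$ bound and the deterministic manipulation producing the stated form with the explicit constant $\tfrac{3}{e}$.
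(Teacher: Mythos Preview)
Your proposal is correct and matches the paper's proof essentially step for step: bound the per-sample KL-divergence by $O(r_i^2)$ on $B_i$ (zero elsewhere) via Lemma~\ref{lem:kl-bernoulli}, sum over rounds with the chain rule (Lemma~\ref{lem:kl-chainrule}) to get $KL(p_0^t;p_i^t)\leq O(r_i^2)\,\mathbb{E}_0[N_i(t)]$, and then apply the distinguishing inequality (Lemma~\ref{lem:kl-distinguishing}) with $p=\prob_0$, $q=\prob_i$, $\mathcal{E}=S$, using $\prob_0[S]\geq \tfrac13$ and the bound $c\ln c\geq -1/e$ to extract the explicit $\tfrac{3}{e}$ constant. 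Your anticipated ``obstacle'' about measurability is indeed absorbed entirely by the appendix lemmas, exactly as you guessed.
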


\noindent
Claim~\ref{cl:logT-KLdiv} is proved using KL-divergence techniques, see Appendix~\ref{sec:KL-divergence} for details. To complete the proof of the theorem, we claim that for each $i\geq 1$ it is the case that
    $\mathbb{E}_0 [N_i(t)] \geq \Omega(r_i^{-2}\, \log t)$
for any sufficiently large $t$. Indeed, fix $i$ and let
    $S = \{N_i(t) < r_i^{-2} \log t\}$.
Since
    $$C_i \log t > R_{(\A,\; \mu_i)}(t) \geq
        \prob_i(S)\,(t - r_i^{-2} \log t) \tfrac{r_i}{8} ,$$
it follows that
    $\prob_i(S) < t^{-1/2} < \tfrac13$
for any sufficiently large $t$. Then by Claim~\ref{cl:logT-KLdiv} either
    $\prob_0(S) < \tfrac13$
or the consequent in~\refeq{eq:logT-KLdiv} holds. In both cases
        $\mathbb{E}_0 [N_i(t)] \geq \Omega(r_i^{-2}\, \log t)$.
Claim proved.

Finally, the fact that $\mu_0(x^*) - \mu_0(x) \geq r_i/12$
for every $x \in B_i$ implies that
     $R_{(\A, \, \mu_0)}(t) \geq \tfrac{r_i}{12} \mathbb{E}_0[N_i(t)]
      \geq \Omega(r_i^{-1} \, \log t)$
which establishes Theorem~\ref{thm:logT} since $r_i^{-1} \rightarrow
\infty$ as $i \rightarrow \infty$.

\section{Tractability via more intuitive oracle access}
\label{sec:simpler-alg}

\newcommand{\LimSet}{\ensuremath{\text{\sc lim}}}
\newcommand{\Decomposable}{Cantor-Bendixson}

In Theorem~\ref{thm:main-alg}, the algorithm accesses the metric space via two oracles: a very intuitive \emph{covering oracle}, and a less intuitive \emph{ordering oracle}. In this section we show that for a wide family of metric spaces --- including, for example, compact metric spaces with a finite number of limit points --- the ordering oracle is not needed: we provide an algorithm which accesses the metric space via a finite set of covering oracles. We will consider metric spaces of finite \emph{Cantor-Bendixson rank}, a classic notion from point topology.

\begin{definition}\label{def:CB-rank}
Fix a metric space $(X,d)$. If for some $x\in X$ there exists a sequence of points in $X\setminus \{x\} $ which converges to $x$, then $x$ is called a \emph{limit point}. For $S\subset X$ let $\LimSet(S)$ denote the \emph{limit set}: the set of all limit points of $S$. Let
	$\LimSet(S,0) = S$,
and
	$\LimSet(S,i) = \LimSet(\LimSet(\cdots \LimSet(S)))$,
where $\LimSet(\cdot)$ is applied $i$ times.
The \emph{Cantor-Bendixson rank} of $(X,d)$ is defined as
    $\sup \{n: \LimSet(X,n) \neq \emptyset\}$.
\end{definition}

Let us say that a \emph{\Decomposable\ metric space} is one with a finite Cantor-Bendixson rank.\ASnote{Defining a ``Cantor-Bendixson metric space" is very clever, isn't it? :)} In order to apply Theorem~\ref{thm:main-alg}, we show that any such metric space is well-orderable.

\begin{lemma}
Any \Decomposable\ metric space is well-orderable.
\end{lemma}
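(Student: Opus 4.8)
The plan is to prove that a Cantor-Bendixson metric space is well-orderable by induction on its Cantor-Bendixson rank $n$, using the iterated limit set $\LimSet(X,\cdot)$ to stratify $X$ into finitely many ``levels'' and then well-ordering the levels in order of \emph{decreasing} rank. The key observation is that if $(X,d)$ has finite Cantor-Bendixson rank $n$, then the sets $L_j := \LimSet(X,j)$ form a nested decreasing chain $X = L_0 \supseteq L_1 \supseteq \cdots \supseteq L_n \supseteq L_{n+1} = \emptyset$, and each difference $A_j := L_j \setminus L_{j+1}$ consists entirely of points that are isolated \emph{within} $L_j$. That is, every point of $A_j$ is an isolated point of the subspace $L_j$, since by definition $L_{j+1} = \LimSet(L_j)$ removes exactly the limit points of $L_j$. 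This is the structural fact I would establish first.

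Next I would construct the well-ordering $\prec$ so that points of higher-rank strata come \emph{earlier}: declare $x \prec y$ whenever $x \in A_j$, $y \in A_{j'}$ with $j > j'$, and within each stratum $A_j$ impose an arbitrary well-ordering (possible since every set can be well-ordered, by the Axiom of Choice). Ordering the finitely many strata this way yields a total order that is a well-ordering, because any nonempty subset meets some highest-rank stratum first and has a least element there. The crux is verifying the topological condition from Definition~\ref{def:topology}: every initial segment $S(x) = \{y : y \preceq x\}$ must be open in the metric topology of $X$. An initial segment has the form (union of all strata $A_{j'}$ with $j' > j$) together with an initial segment of $A_j$ itself. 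The first part is $L_{j+1}$, which is \emph{closed} (limit sets are closed), not obviously open; so I cannot argue openness stratum-by-stratum naively — this is the main obstacle, and it is why the ordering within strata and across strata must be chosen carefully.

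To overcome this, I would order the strata in the \emph{opposite} direction from what one might first guess, and combine this with the isolation property. The right setup is: higher Cantor-Bendixson rank means the point is ``more of a limit point,'' hence harder to isolate, so such points should come \emph{last}, not first. Reversing the earlier paragraph: put $x \prec y$ when $x \in A_j$, $y \in A_{j'}$ with $j < j'$ (lower rank earlier), so that an initial segment $S(x)$ for $x \in A_j$ equals $\bigl(\bigcup_{j' < j} A_{j'}\bigr) \cup (\text{initial segment of } A_j\text{ up to }x)$. The complement of $\bigcup_{j'<j} A_{j'}$ is $L_j$, which is closed, so that union is open; and within $A_j$, every point is isolated in $L_j$, which lets me enlarge the initial-segment-within-$A_j$ to an open set by intersecting small balls that isolate each included point from $L_j$. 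The technical content is thus: show that a set consisting of $X \setminus L_j$ plus an $\prec$-initial segment of the $L_j$-isolated points $A_j$ is open, by exhibiting for each included point a ball meeting $X$ only in already-included points. Here the well-ordering on $A_j$ must be compatible with the isolation radii — but since $A_j$ points are isolated in $L_j$ and $X \setminus L_j$ is already open, each such point $y \in A_j$ has a ball $B(y,\rho_y)$ whose intersection with $L_j$ is $\{y\}$, and this ball lies in $S(x)$ whenever $y \preceq x$ because $B(y,\rho_y) \cap L_j = \{y\} \subseteq S(x)$ and $B(y,\rho_y) \setminus L_j \subseteq X \setminus L_j \subseteq S(x)$.

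Finally, I would assemble these pieces into the induction: the base case $n=0$ means $X$ is discrete (every point isolated), where any well-ordering works and every set is open; the inductive step applies the stratification above, using that $L_1 = \LimSet(X)$ has Cantor-Bendixson rank $n-1$. In fact the explicit stratified ordering just described handles all ranks at once without needing a separate inductive invocation, so I would present it directly as a single construction and verify the open-initial-segment property as the one substantive check. The delicate point to get right in the write-up is ensuring the openness argument correctly treats both the ``across-strata'' part (handled by closedness of $L_j$) and the ``within-stratum'' part (handled by isolation of $A_j$ points in $L_j$), and confirming these combine consistently for the full initial segment.
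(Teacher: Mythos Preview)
Your proposal is correct and takes essentially the same approach as the paper: order the strata $A_j = \LimSet(X,j)\setminus \LimSet(X,j+1)$ so that lower-rank points come first, put an arbitrary well-ordering within each stratum, and verify openness of initial segments using that $L_j$ is closed and that each $A_j$-point is isolated in $L_j$. The only cosmetic difference is that the paper packages this as an induction (proving that $(X,d)$ is well-orderable whenever $(\LimSet(X),d)$ is, and iterating), whereas you unroll the induction into a single stratified construction; the resulting ordering and the openness verification are identical.
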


\begin{proof}
Any finite metric space is trivially well-orderable. To prove the lemma, it suffices to show the following: any metric space $(X,d)$ is well-orderable  if so is $(\LimSet(X),d)$.

Let $X_1 = X\setminus  \LimSet(X)$ and $X_2 = \LimSet(X)$. Suppose $(X_2,d)$ admits a topological well-ordering $\prec_2$. Define a binary relation $\prec$ on $X$ as follows. Fix an arbitrary well-ordering $\prec_1$ on $X_1$. For any $x,y\in X$ posit $x\prec y$ if either
(i) $x,y\in X_1$ and $x\prec_1 y$, or
(ii) $x,y\in X_2$ and $x\prec_2 y$, or
(iii) $x\in X_1$ and $y\in X_2$.
It is easy to see that $(X,\prec)$ is a well-ordering.

It remains to prove that an arbitrary initial segment
	$Y = \{ x\in X: x\prec y\}$
is open in $(X,d)$. We need to show that for each $x\in Y$ there is a ball $B(x,\eps)$, $\eps>0$ which is contained in $Y$. This is true if $x\in X_1$ since by definition each such $x$ is an isolated point in $X$. If $x\in X_2$ then
	$Y = X_1 \cup Y_2$
where $Y_2 = \{ x\in X_2: x \prec_2 y\}$
is the initial segment of $X_2$. Since $Y_2$ is open in $(X_2,d)$, there exists $\eps>0$ such that $B_{X_2}(x,\eps) \subset Y_2$. It follows that
	$B_X(x,\eps) \subset B_{X_2}(x,\eps) \cup X_1 \subset Y$.
\end{proof}

The structure of a \Decomposable\ metric space is revealed by a partition of $X$ into subsets
	$X_i = \LimSet(X,i) \setminus \LimSet(X,i+1)$, $0\leq i\leq n $.
For a point $x\in X_i$, we define the \emph{rank} to be $i$. The algorithm requires a covering oracle for each $X_i$.

\begin{theorem}\label{thm:lim-decomposable}
Consider the Lipschitz MAB/experts problem on a compact metric space $(X,d)$ such that
	$\LimSet_N(X)=\emptyset$
for some $N$.  Let $\mathcal{O}_i$ be the covering oracle for
	$X_i = \LimSet(X,i)\setminus \LimSet(X,i+1)$.
Assume that access to the metric space is provided only via the collection of oracles
	$\{\mathcal{O}_i \}_{i=0}^N$.
Then:
\begin{OneLiners}
\item[(a)] the Lipschitz MAB problem on $(X,d)$ is $f$-tractable for every $f\in\omega(\log t)$;
\item[(b)] the Lipschitz experts problem on $(X,d)$ is 1-tractable, even with a double feedback.
\end{OneLiners}
\end{theorem}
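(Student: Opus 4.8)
The plan is to reduce Theorem~\ref{thm:lim-decomposable} to the already-established Theorem~\ref{thm:main-alg} by exhibiting a \emph{covering oracle} and an \emph{ordering oracle} for $(X,d)$ that are simulated using only the collection $\{\mathcal{O}_i\}_{i=0}^N$. Since the preceding lemma shows that any Cantor-Bendixson metric space is well-orderable, and since $(X,d)$ is assumed compact, Theorem~\ref{thm:main-alg} already yields both the $\omega(\log t)$-tractability bound for bandits and the $1$-tractability bound (with double feedback) for experts --- \emph{provided} the algorithm has access to a covering oracle and an ordering oracle for the well-ordering $\prec$ constructed in that lemma. So the entire content of this theorem is the claim that these two oracles can be built from the $\mathcal{O}_i$. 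The well-ordering $\prec$ is the one built recursively in the previous lemma: rank-$i$ points precede rank-$(i{+}1)$ points, with an arbitrary fixed well-ordering within the (isolated-in-$\LimSet(X,i)$) stratum $X_i$.

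First I would build the covering oracle for $(X,d)$. Given a budget $k$, call each $\mathcal{O}_i(k_i)$ with budgets $k_0,\ldots,k_N$ summing to at most $k$ (say $k_i = \flr{k/(N{+}1)}$), obtaining a $\delta_i$-covering $S_i$ of the stratum $X_i$. Setting $S = \bigcup_i S_i$ and $\delta = \max_i \delta_i$, the union is a $\delta$-covering of $X = \bigcup_i X_i$ of size at most $k$, and $\delta \to 0$ as $k \to \infty$ because each $\delta_i \to 0$. That discharges Definition~\ref{def:covering-oracle}.

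The ordering oracle is the main obstacle, since it must return the $\prec$-maximal point covered by the closure of a given finite collection of balls $\mathcal{B}$. The key observation is that $\prec$ compares strata first: the $\prec$-maximal covered point, if any, lies in the \emph{highest-rank} stratum $X_j$ that meets the closed region $\Bar{B} = \bigcup_{B\in\mathcal{B}} \Bar{B}$, and among the finitely many points of $X_j \cap \Bar{B}$ it is the $\prec_1$-maximal one in the fixed intra-stratum order on $X_j$. The difficulty is detecting which points of a stratum fall inside $\Bar{B}$ using only covering oracles, since $X_j$ need not be finite for $j < N$ (only the top stratum $X_N = \LimSet(X,N)$ is guaranteed finite by compactness and $\LimSet(X,N{+}1)=\emptyset$). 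I would handle this by exploiting that $\Bar{B}$ is a finite union of \emph{closed} balls and that the algorithm only ever invokes the ordering oracle on balls $\Bar{B}(x,\delta)$ centered at covering points; by taking the covering fine enough (driving $\delta$ down via large $k$), membership of stratum points in $\Bar{B}$ can be decided up to a vanishing error, and the analysis in Lemma~\ref{lm:tractability} already tolerates exactly this kind of $\delta$-slack (it is why the ``loser'' test uses a $+\delta$ margin). Concretely, I would argue that for large enough $k$ the finitely many high-rank points near $\Bar{B}$ are correctly separated from the rest, so the oracle can report the correct $\prec$-maximal point, or --- when the relevant stratum is infinite --- fall back on the fact that lower strata consist of limit points whose presence is witnessed by the covering of the next stratum up, letting the recursion on rank terminate.

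Finally I would assemble the pieces: feed the simulated covering and ordering oracles into the algorithm of Theorem~\ref{thm:main-alg}, whose analysis (Lemma~\ref{lm:tractability} and the phase-based exploration/exploitation schedule) then applies verbatim, giving part~(a) for the bandit problem and part~(b) for the double-feedback experts problem. The expected snag is purely in verifying the ordering-oracle simulation for strata of rank below $N$; I anticipate that finiteness is genuinely needed only at the top level and that the recursion over the $N{+}1$ strata --- stripping off isolated points one limit-set at a time, exactly as in the well-orderability lemma --- is what makes the simulation terminate with finitely many oracle calls per query.
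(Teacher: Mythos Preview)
Your reduction to Theorem~\ref{thm:main-alg} via simulating the ordering oracle has a genuine gap. The well-ordering $\prec$ built in the preceding lemma places lower-rank strata before higher-rank ones, but \emph{within} each stratum $X_j$ it uses an arbitrary well-ordering $\prec_1$. That intra-stratum ordering is not determined by---and not accessible through---the covering oracles $\mathcal{O}_i$. So even after you correctly identify the highest-rank stratum $X_j$ meeting $\bar B$ and enumerate the finitely many points of $X_j\cap\bar B$, you have no way to pick the $\prec_1$-maximal one: the algorithm simply has no handle on $\prec_1$. Your remarks about ``tolerating $\delta$-slack'' concern a different and minor issue (approximate membership in $\bar B$); they do not address the missing ordering information.

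The paper sidesteps this entirely: it does \emph{not} simulate the ordering oracle. Instead it writes a new exploration subroutine $\explSub'(k,n,r)$ that calls each $\mathcal{O}_i(k)$ to obtain a covering $S_i$ of each stratum, plays every point of $S=\cup_i S_i$ $n$ times, declares $x$ \emph{dominated} if some $y$ beats it by more than $2r$ in sample average, and outputs any strategy of \emph{largest rank} among the non-dominated ones. A separate lemma (the analogue of Lemma~\ref{lm:tractability}) then shows directly that this returns an optimal strategy with high probability: letting $i^*$ be the largest rank of any optimal strategy and $X^*$ the set of rank-$i^*$ optimal strategies, one proves $\sup(\mu,\LimSet(X,i^*)\setminus X^*)<\mu^*$ (using that each point of $X^*$ is isolated in $\LimSet(X,i^*)$), so in a clean phase every rank-$\ge i^*$ non-optimal point is dominated while $x^*\in X^*$ is not. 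The point is that only the \emph{rank} stratification is used, never an intra-stratum order---which is exactly the information the $\mathcal{O}_i$ provide. Your plan would work if you replaced ``return the $\prec$-maximal covered point'' by ``return any highest-rank non-loser,'' but then you must re-prove the correctness lemma from scratch, which is what the paper does.
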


\OMIT{ 
\begin{theorem}\label{thm:lim-decomposable}
Consider a compact metric space $(X,d)$ such that
	$\LimSet_n(X)=\emptyset$
for some $n$.  Let $\mathcal{O}_i$ be the covering oracle for
	$X_i = \LimSet(X,i)\setminus \LimSet(X,i+1)$.
Then for each $f\in \omega(\log t)$ there exists an $f(t)$-tractable bandit algorithm which accesses $(X,d)$ via the collection of oracles
	$\{\mathcal{O}_i \}_{i=0}^n$.
\end{theorem}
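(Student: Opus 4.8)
The plan is to reduce to Theorem~\ref{thm:main-alg}: the preceding lemma already shows that any \Decomposable\ metric space is well-orderable, so all that is missing is to replace the \emph{ordering oracle} used inside \explSub\ with the stratified covering oracles $\{\mathcal O_i\}_{i=0}^N$. Recall that the ordering oracle is invoked solely to return the $\prec$-maximal point inside the union $\Bar{B}$ of the non-loser closed balls, and that by Lemma~\ref{lm:structural} this point is an optimal strategy; committing to it yields zero exploitation regret. Thus it suffices to exhibit a selection rule, computable from $\{\mathcal O_i\}$, that returns some exactly optimal strategy with high probability once the exploration scale is small enough.

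The key structural fact I would exploit is \emph{local isolation of high rank}: a point of rank $i$ is, by Definition~\ref{def:CB-rank}, isolated in $\LimSet(X,i)$ (the closed set of all points of rank $\geq i$), so a small enough ball around it contains no other point of rank $\geq i$. Since $\mathcal O_i$ returns covering points lying in $X_i$ (rank exactly $i$), at a fine enough scale the covering point nearest any fixed rank-$i$ point equals that point exactly; this is how the stratified oracles recover genuine points of $X$, mimicking the ordering oracle. I would then establish three compactness facts around the top optimal rank $j$ (the largest rank of an optimal strategy): (i) $\sup_{\LimSet(X,j+1)}\mu < \mu^*$, because $\LimSet(X,j+1)$ is compact and contains no optimal point, giving a uniform gap that turns every rank-$>j$ covering point into a loser at a fine scale; (ii) the optimal rank-$j$ strategies $O_j = X^\ast\cap X_j$ form a finite set, since any accumulation point of $O_j$ would be an optimal point of rank $>j$; and (iii) the non-loser rank-$j$ covering points stay at distance $\geq\eta>0$ from $\LimSet(X,j+1)$, hence lie in a compact region meeting $X_j$ in only finitely many points. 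Fact (iii) is the crux: it confines the rank-$j$ candidates to a \emph{fixed finite} set $G\supseteq O_j$ of genuine points, on which the non-optimal strategies have a strictly positive suboptimality gap $\gamma$.

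With these facts in hand, I would modify \explSub\ to call each $\mathcal O_i$, play every covering point $n$ times, compute the empirical maximum, mark losers, and then select as follows: scan ranks from the highest rank $N-1$ downward, take the highest rank admitting a non-loser covering point, and among those commit to the empirically best. The analog of Lemma~\ref{lm:tractability} then reads: for all sufficiently large $t$, with high probability the scan stops at rank $j$ (by (i)), the surviving candidates are exactly the finite genuine set $G$ recovered exactly by $\mathcal O_j$ (by (ii), (iii) and local isolation), and the empirical best among $G$ lies in $O_j$ once the estimation error beats $\gamma$, so \explSub\ returns an exactly optimal strategy. Because $N$ is a fixed constant, using $N+1$ covering oracles only inflates the per-phase exploration cost by a constant factor, so the phase schedules of Theorem~\ref{thm:main-alg} --- doubly exponential for the bandit case (a), and exponential on the free peeks for the double-feedback experts case (b) --- carry over verbatim to give $\omega(\log t)$- and $1$-tractability, respectively.

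I expect the main obstacle to be step (iii): arguing that the near-optimal strategies of the top optimal rank form a fixed finite set carrying a uniform suboptimality gap, so that finite sampling can single out an \emph{exactly} optimal point. This is precisely the structural content that the ordering oracle supplied for free, and recovering it from covering oracles alone is what forces both the finiteness of the Cantor--Bendixson rank and the repeated appeal to compactness. A secondary nuisance is coordinating the several oracle scales $\delta_{\mathcal O_i}(k_i)$, whose decay rates are not under our control, so that the higher strata are eliminated, the finite candidate set is recovered exactly, and the estimation error drops below $\gamma$, all simultaneously for large $t$.
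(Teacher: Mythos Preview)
Your proposal is correct and tracks the paper's approach closely: both replace the ordering oracle by calling all stratified covering oracles, playing every returned point, and then selecting a highest-rank non-dominated (non-loser) point. The local-isolation observation --- that a fine enough covering of $X_j$ must contain any fixed rank-$j$ point exactly --- is precisely what the paper uses to place an optimal $x^*$ of maximal optimal rank $j$ into $S_j$.

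The one substantive difference is that you work harder than necessary. The paper proves a single stronger gap: with $Y=\LimSet(X,j)$ and $X^*$ the rank-$j$ optimal strategies, it shows $\sup(\mu,\, Y\setminus X^*)<\mu^*$ directly (your fact (i) is the special case $Y\setminus X^*\supseteq \LimSet(X,j+1)$). This follows from the same compactness idea you use, applied to $Y\setminus C$ for an open neighborhood $C$ of $X^*$. Once you have this, \emph{every} non-dominated covering point of rank $\geq j$ is already in $X^*$, so the output rule ``return an arbitrary winner'' suffices; your facts (ii), (iii), the finite candidate set $G$, the secondary gap $\gamma$, and the ``empirically best'' tie-break are all unnecessary. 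In particular, the obstacle you flag as the crux --- confining the near-optimal rank-$j$ points to a fixed finite set --- simply does not arise in the paper's argument.
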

} 

In the rest of this section, consider the setting in Theorem~\ref{thm:lim-decomposable}. We describe the \emph{exploration subroutine} $\explSub'()$, which is similar to $\explSub()$ in Section~\ref{sec:tractability} but does not use the ordering oracle. Then we prove a version of Lemma~\ref{lm:tractability} for $\explSub'()$. Once we have this lemma, the proof of Theorem~\ref{thm:lim-decomposable} is identical to that of Theorem~\ref{thm:main-alg} (and is omitted).

\begin{algorithm}\label{alg:LIM}
Subroutine $\explSub'(k,n,r)$: inputs $k, n \in \N$ and $r\in (0,1)$, outputs a point in $X$.

Call each covering oracle $\mathcal{O}_i(k)$ and receive a $\delta_i$-covering $S_i$ of $X$ consisting of at most $k$ points. Let
	$S = \cup_{l=1}^n S_l$.
Play each strategy $x\in S$ exactly $n$ times; let $\muAv(x)$ be the corresponding sample average. For $x,y\in S$, let us say that \emph{$x$ dominates $y$} if
	$\muAv(x) - \muAv(y)> 2\,r$.
Call $x\in S$ a \emph{winner} if $x$ has a largest rank among the strategies that are not dominated by any other strategy. Output an arbitrary winner if a winner exists, else output an arbitrary point in $S$.
\end{algorithm}

Clearly, $\explSub(k,n,r)$ takes at most $knN$ rounds to complete. We show that for sufficiently large $k,n$ and sufficiently small $r$ it returns an optimal strategy with high probability.

\begin{lemma}\label{lm:LIM-tractability}
Fix a problem instance. Consider increasing functions $k,n,T: \N\to \N$ such that
    $r(t) := 4\sqrt{ (\log T(t))\, / n(t)}  \to 0$.
Then for any sufficiently large $t$, with probability at least $1-T^{-2}(t)$, the subroutine
    $\explSub'(k(t),\,n(t),\, r(t))$ returns an optimal strategy.
\end{lemma}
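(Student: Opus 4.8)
The plan is to reuse the clean-run analysis of Lemma~\ref{lm:tractability} but to replace the ordering oracle --- which in that proof pins down the exact optimum --- by the rank structure of the Cantor-Bendixson decomposition. Call a run of $\explSub'(k(t),n(t),r(t))$ \emph{clean} if $|\muAv(x)-\mu(x)|\le r(t)$ for every sampled $x\in S$; since $n(t)\,r(t)^2 = 16\log T(t)$, Chernoff bounds and a union bound over the at most $(N+1)\,k(t)$ sampled points show that a clean run occurs with probability at least $1-T^{-2}(t)$ for large $t$, and I condition on this. Since $\mu$ is continuous on the compact space $X$, the optimal set $X^*=\{x:\mu(x)=\mu^*\}$ is non-empty and compact; let $m$ be the largest rank attained in $X^*$ and fix an optimal $x^*\in X^*$ of rank $m$. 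The key structural point is that $\mathcal{O}_m$ is \emph{forced} to sample $x^*$: using that the derived sets are nested and closed ($\LimSet(X,i+1)\subseteq\LimSet(X,i)$, each closed), a rank-$m$ point is isolated in $\LimSet(X,m)\supseteq X_m$, so $B(x^*,\rho)\cap X_m=\{x^*\}$ for some $\rho>0$; as $S_m\subseteq X_m$ is a $\delta_m$-covering of $X_m$ with $\delta_m\to 0$ when $k(t)\to\infty$, for large $t$ the only point of $X_m$ within $\delta_m$ of $x^*$ is $x^*$ itself, whence $x^*\in S_m\subseteq S$.

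Next I bound the ranks of near-optimal points. I claim there is $\eta_0>0$ such that $\mu(x)\ge\mu^*-\eta_0$ forces $\mathrm{rank}(x)\le m$: otherwise one finds $y_j\in\LimSet(X,m+1)$ with $\mu(y_j)\to\mu^*$, and since $\LimSet(X,m+1)$ is compact a convergent subsequence yields an optimal point of rank $\ge m+1$, contradicting maximality of $m$. The set $A=\{x:\mu(x)\ge\mu^*-\eta_0\}$ is compact, and $A\cap\LimSet(X,m)$ is compact and consists only of rank-$m$ points (rank $\ge m$ since inside $\LimSet(X,m)$, rank $\le m$ by the claim), each of which is isolated in $\LimSet(X,m)$ and a fortiori in the subspace; a compact space all of whose points are isolated is finite, so there are finitely many rank-$m$ points with payoff at least $\mu^*-\eta_0$. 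Let $\gamma>0$ be the gap between $\mu^*$ and the best payoff of a non-optimal point in this finite set (set $\gamma=\eta_0$ if there is none).

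Finally I assemble the conclusion for $t$ large enough that $x^*\in S$ and $4r(t)<\min(\eta_0,\gamma)$. On a clean run $x^*$ is optimal, so $\muAv(y)-\muAv(x^*)\le\mu(y)-\mu^*+2r\le 2r$ for all $y$ and hence no $y$ dominates $x^*$, making $x^*$ a non-dominated point of rank $m$. Every non-dominated $x$ satisfies $\muAv(x)\ge\muAv(x^*)-2r$, hence $\mu(x)\ge\mu^*-4r>\mu^*-\eta_0$, so $x\in A$ and $\mathrm{rank}(x)\le m$; therefore the maximal non-dominated rank is exactly $m$ and every winner has rank $m$, lying in the finite set $A\cap\LimSet(X,m)$. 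If such a winner $w$ were non-optimal then $\mu(w)\le\mu^*-\gamma$, giving $\muAv(x^*)-\muAv(w)\ge\gamma-2r>2r$, so $x^*$ would dominate $w$ --- contradicting non-domination. Hence every winner is optimal, and since $x^*$ guarantees a winner exists, $\explSub'$ returns an optimal strategy.

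The main obstacle is conceptual rather than computational: because $\explSub'$ can output only a sampled covering point, its output can be \emph{exactly} optimal only thanks to the fact that the covering oracle for $X_m$ is provably compelled to include the isolated optimum $x^*$; the accompanying difficulty is to rule out spurious non-optimal points of rank $m$ near the optimal value, which is exactly what the compactness-and-finiteness argument together with domination by $x^*$ accomplishes.
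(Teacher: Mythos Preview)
Your proof is correct and follows the same overall strategy as the paper: Chernoff bounds for the clean-run event, isolation of a maximal-rank optimum in $\LimSet(X,m)$ to force its inclusion in the covering, a compactness argument to produce a payoff gap, and then domination by $x^*$ to rule out non-optimal winners.

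The one structural difference worth noting is that you split the gap argument into two separate compactness steps --- first an $\eta_0$ ensuring that near-optimal points have rank at most $m$, then a finiteness argument for near-optimal rank-$m$ points and a second gap $\gamma$ for the non-optimal ones among them. The paper compresses this into a single step: with $Y=\LimSet(X,i^*)$ and $X^*$ the rank-$i^*$ optima, it shows directly that $\mu_0 := \sup(\mu, Y\setminus X^*) < \mu^*$ by observing that $Y\setminus X^*$ is compact (it equals $Y\setminus\bigcup_{x\in X^*} B(x,r(x))$, where each $r(x)$ isolates $x$ in $Y$). This single gap $\mu^*-\mu_0$ simultaneously handles both rank-$>i^*$ points and non-optimal rank-$i^*$ points, making the domination check a one-liner. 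Your route is slightly longer but equally valid; the paper's buys a shorter write-up, while yours makes the two roles of the gap (rank control vs.\ exclusion of non-optimal same-rank points) more visibly separate.
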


\begin{proof}
Use the notation from Algorithm~\ref{alg:LIM}. Fix $t$ and consider a run of
    $\explSub'(k(t),\,n(t),\, r(t))$.
Call this run \emph{clean} if for each $x\in S$ we have $|\muAv(x)-\mu(x)| \leq r(t)$. By Chernoff Bounds, this happens with probability at least $1-T^{-2}(t)$. In the rest of the proof, let us assume that the run is clean.

Let us introduce some notation. Let $\mu$ be the payoff function and let $\mu^* = \sup(\mu,X)$. Call $x\in X$ \emph{optimal} if $\mu(x) = \mu^*$. (There exists an optimal strategy since $(X,d)$ is compact.) Let $i^*$ be the largest rank of any optimal strategy. Let $X^*$ be the set of all optimal strategies of rank $i^*$. Let $Y = \LimSet(X,i^*)$. Since each point $x\in X_{i^*}$ is an isolated point in $Y$, there exists some $r(x)>0$ such that $x$ is the only point of $B(x,r(x))$ that lies in $Y$.

We claim that $\sup(\mu, Y\setminus X^*)<\mu^*$. Indeed, consider
	$C= \cup_{x\in X^*} B(x,r(x)) $.
This is an open set. Since $Y$ is closed, $Y\setminus C$ is closed, too, hence compact. Therefore there exists $y\in Y\setminus C$ such that
	$\mu(y) = \sup(\mu, Y\setminus C)$.
Since $X^* \subset C$, $\mu(y)$ is not optimal, i.e. $\mu(y)<\mu^*$. Finally, by definition of $r(x)$ we have $Y\setminus C = Y\setminus X^*$. Claim proved.

Pick any $x^*\in X^*$. Let $\mu_0 = \sup(\mu, Y\setminus X^*)$.
Assume that $t$ is large enough so that $r(t) < (\mu^*-\mu_0)/4$ and $\delta_{i^*}<r(x^*)$. Note that the $\delta_{i^*}$-covering $S_{i^*}$ contains $x^*$.

Finally, we claim that in a clean phase, $x^*$ is a winner, and all winners lie in $X^*$. Indeed, note that $x^*$ dominates any non-optimal strategy $y\in S$ of larger or equal rank, i.e. any $y\in S\cap (Y\setminus X^*)$. This is because
$ \muAv(x^*) - \muAv(y) \geq \mu^* - \mu_0 - 2r > 2.
$
The claim follows since any optimal strategy cannot be dominated by any other strategy.
\end{proof}

\section{Boundary of tractability: Theorem~\ref{thm:boundary-of-tractability}}
\label{sec:boundary-body}

In this section we prove Theorem~\ref{thm:boundary-of-tractability}. In Appendix~\ref{sec:reduction} we reduce the theorem to that on complete metric spaces. We will use a basic fact that a complete metric space is compact if and only if for any $r>0$, it can be covered by a finite number of balls of radius $r$.

\xhdr{Algorithmic result.}
We consider a compact metric space $(X,d)$ and use an extension of the \emph{naive algorithm} from~\cite{Bobby-nips04, LipschitzMAB-stoc08}. In each phase $i$ (which lasts for $t_i$ round) we fix a covering of $X$ with $N_i < \infty$ balls of radius $2^{-i}$ (such covering exists by compactness), and run a fresh instance of the $N_i$-armed bandit algorithm \UCB\ from~\cite{bandits-ucb1} on the centers of these balls.  (This algorithm is for the ``basic'' MAB problem,
in the sense that it does not look at the distances in the metric
space.) The phase durations $t_i$ need to be tuned to the $N_i$'s. In the setting considered in~\cite{Bobby-nips04, LipschitzMAB-stoc08} (bounded covering dimension) it suffices to tune each $t_i$ to the corresponding $t_i$ in a fairly natural way. The difficulty in the present setting is that there are no guarantees on how fast the $N_i$'s grow. To take this into account, we fine-tune each $t_i$ to (essentially) all covering numbers $N_1 \LDOTS N_{i+1}$.

Let $R_k(t)$ be the expected regret accumulated by the algorithm in the first $t$ rounds of phase $k$. Using the off-the-shelf regret guarantees for \UCB, it is easy to see~\cite{Bobby-nips04, LipschitzMAB-stoc08} that
\begin{align}\label{eq:app-boundary-UB-Rk}
 R_k(t) \leq O(\sqrt{N_k\, t \log t}) + \eps_k\, t
\leq \eps_k\, \max(t^*_k,\, t),
    \text{~~where~~} t^*_k = 2\,\tfrac{N_k}{\eps_k^2} \log\tfrac{N_k}{\eps_k^2}.
\end{align}

Let us specify phase durations $t_i$. They are defined very differently from the ones in~\cite{Bobby-nips04, LipschitzMAB-stoc08}. In particular, in~\cite{Bobby-nips04, LipschitzMAB-stoc08} each $t_i$ is fine-tuned to the corresponding covering number $N_i$ by setting
    $t_i = t^*_i$,
and the analysis works out for metric spaces of bounded covering dimension. In our setting, we fine-tune each $t_i$ to (essentially) all covering numbers $N_1 \LDOTS N_{i+1}$. Specifically, we define the $t_i$'s inductively as follows:
$$ t_i = \min(t^*_i,\, t^*_{i+1},\, 2\,
    \textstyle{\sum_{j=1}^{i-1}} t_j).
$$
This completes the description of the algorithm, call it \A.

\begin{lemma}
Consider the Lipschitz MAB problem on a compact and complete metric space $(X,d)$. Then $R_\A(t) \leq 5\, \eps(t)\, t$, where
    $\eps(t) = \min\{ 2^{-k}:\, t\leq s_k\}$
and
    $s_k = \sum_{i=1}^k\, t_i$.
In particular, $R_\A(t) = o(t)$.
\end{lemma}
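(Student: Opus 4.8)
The plan is to bound the total regret by summing the per-phase regret contributions using the bound \refeq{eq:app-boundary-UB-Rk}, and to show that the quantity $\eps(t)\,t$ correctly accounts for this sum. First I would observe that the bound $R_k(t)\leq \eps_k\max(t^*_k,t)$ holds within phase $k$, with $\eps_k = 2^{-k}$ being the covering radius used in that phase. Since $\eps_k\to 0$, the essential point is that the phase durations $t_i$ are chosen to keep the accumulated regret controlled; in particular the definition $t_i = \min(t^*_i, t^*_{i+1}, 2\sum_{j<i} t_j)$ ensures each phase is short enough relative to $t^*_i$ that the regret $\eps_i\max(t^*_i,t)$ incurred during phase $i$ is dominated by $\eps_i t_i$ (because $t_i\leq t^*_i$), and also short enough relative to the preceding phases (the $2\sum_{j<i}t_j$ term) that the ``startup cost'' of each phase is amortized against earlier play.

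Next I would fix a time $t$ and let $k$ be the phase containing round $t$, so that $s_{k-1}<t\leq s_k$ and $\eps(t)=2^{-k}$. The total regret splits into the regret from completed phases $1,\dots,k-1$ plus the regret from the partial phase $k$. For a completed phase $i$, the contribution is at most $\eps_i\max(t^*_i,t_i)=\eps_i t^*_i$ when $t_i=t^*_i$, but more usefully I would use $R_i\leq \eps_i\cdot(\text{phase length or }t^*_i)$ and bound each such term by a geometric-type quantity tied to $\eps_i 2^{\,\cdot}$. The key structural fact I expect to exploit is that $t_i\leq 2\sum_{j<i}t_j$ forces $s_i\leq 3 s_{i-1}$, i.e. the cumulative durations grow at a controlled rate, so that $s_{k-1}$ is a constant fraction of $t$; this lets me convert bounds stated in terms of $t^*_i$ and $s_i$ into bounds in terms of $t$ itself. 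I would then show that the regret from each phase $i\leq k$ is at most $O(\eps_i\, s_i)$ or $O(\eps(t)\,t)$ after using $\eps_i\leq \eps(t)\cdot 2^{k-i}$ and summing the resulting geometric series, arriving at the constant $5$.

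The main obstacle I anticipate is handling the interaction between the three terms in the definition of $t_i$ and verifying that the $\min$ never causes a phase to be so short that the \UCB\ startup regret $O(\sqrt{N_i t\log t})$ term in \refeq{eq:app-boundary-UB-Rk} fails to be absorbed into $\eps_i t$. Concretely, the bound $R_k(t)\leq \eps_k\max(t^*_k,t)$ already packages this, so the real work is the bookkeeping: showing $\sum_{i\leq k}\eps_i\max(t^*_i,t_i)\leq 5\,\eps(t)\,t$. I would verify the two regimes separately --- when $t_i=t^*_i$ the phase is ``long enough'' and its regret is exactly $\eps_i t_i\leq \eps_i s_i$, while when $t_i=2\sum_{j<i}t_j$ the phase is ``capped'' and I would charge its regret against the doubling relation $s_i\leq 3s_{i-1}$. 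The final step, concluding $R_\A(t)=o(t)$, is then immediate since $\eps(t)=2^{-k}\to 0$ as $t\to\infty$, because compactness guarantees each $N_i$ (hence each $t^*_i$) is finite and so $s_k\to\infty$, forcing $k\to\infty$ with $t$.
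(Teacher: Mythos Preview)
Your proposal contains a genuine error in the direction of the key inequality. You write that ``$t_i\leq t^*_i$'' ensures the per-phase regret $\eps_i\max(t^*_i,t)$ is ``dominated by $\eps_i t_i$''. This is backwards: if $t_i\leq t^*_i$ then $\max(t^*_i,t_i)=t^*_i\geq t_i$, so the regret of a completed phase $i$ is $\eps_i t^*_i$, which is \emph{at least} $\eps_i t_i$ and could be arbitrarily larger than $\eps_i s_i$ (there is no a priori bound on $t^*_i$ in terms of $N_1,\ldots,N_{i-1}$). Your ``capped'' case $t_i=2s_{i-1}$ exhibits exactly this failure: the phase is short but the regret bound $\eps_i t^*_i$ need not be.

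The confusion is almost certainly induced by a typo in the paper: the definition of $t_i$ should read $\max$, not $\min$. The paper's proof explicitly uses $t_k\geq t^*_k$ (so that $R_k(t_k)\leq\eps_k\max(t^*_k,t_k)=\eps_k t_k$) and $t_k\geq 2s_{k-1}$; the mid-phase case further uses $t_{k-1}\geq t^*_k$, which comes from the $t^*_{i+1}$ term in the $\max$. With these inequalities the argument is a short induction showing $R_\A(s_k)\leq 2\eps_k s_k$: the step is
\[
R_\A(s_k)\leq 2\eps_{k-1}s_{k-1}+\eps_k t_k = 4\eps_k s_{k-1}+\eps_k t_k\leq 2\eps_k(s_{k-1}+t_k),
\]
the last inequality using $2s_{k-1}\leq t_k$. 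For $T=s_{k-1}+t$ with $0<t<t_k$ one bounds $R_k(t)\leq\eps_k\max(t^*_k,t)\leq\eps_k\max(t_{k-1},t)\leq\eps_k T$ and adds the inductive bound to get $5\eps_k T$. Your observation $s_i\leq 3s_{i-1}$ and the geometric-series plan are not used (and indeed go the wrong way: with $\max$ one has $s_i\geq 3s_{i-1}$).
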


\begin{proof}
First we claim that
    $R_\A(s_k) \leq 2\, \eps_k\, s_k$
for each $k$. Use induction on $k$. For the induction base, note that
    $R_\A(s_1) = R_1 (t_1) \leq \eps_1 t_1$
by~\refeq{eq:app-boundary-UB-Rk}. Assume the claim holds for some $k-1$. Then
\begin{align*}
R_\A(s_k)
    &= R_\A(s_{k-1}) + R_k(t_k) \\
    &\leq 2\,\eps_{k-1}\, s_{k-1} + \eps_k\, t_k \\
    &\leq 2\, \eps_k (s_{k-1} + t_k)
    = 2\,\eps_k s_k,
\end{align*}
claim proved. Note that we have used~\refeq{eq:app-boundary-UB-Rk} and the facts that
    $t_k \geq t^*_k$ and $t_k \geq 2\, s_{k-1}$.

For the general case, let $T = s_{k-1} + t$, where $t\in (0, t_k)$. Then by~\refeq{eq:app-boundary-UB-Rk} we have that
\begin{align*}
R_k(t)
    &\leq \eps_k\, \max(t^*_k,\, t) \\
    &\leq \eps_k\, \max(t_{k-1},\, t)
    \leq \eps_k\, T,\\
R_\A(T) &= R_\A(s_{k-1}) + R_k(T)  \\
    &\leq 2\, \eps_{k-1}\, s_{k-1} + \eps_k\,T
    \leq 5\,\eps_k\, T. \qedhere
\end{align*}
\end{proof}

\xhdr{Lower bound: proof sketch.} For the lower bound, we consider a metric space $(X,d)$ with an infinitely many disjoint balls $B(x_i,r_*)$ for some $r_*>0$. For each ball $i$ we define the \emph{wedge function} supported on this ball:
$$ G_{(i,r)}(x) = \begin{cases}
    \min\{ r_* - d(x,x_i),\; r_* - r \} & \mbox{if $x \in B(x_i,r_*)$} \\
         0 & \mbox{otherwise}.
\end{cases} $$
The balls are partitioned into two infinite sets: the
\emph{ordinary} and \emph{special} balls.
The random payoff function is then defined by taking a
constant function, adding the wedge function on each
special ball, and randomly adding
or subtracting the wedge function on each
ordinary ball.
Thus, the expected payoff is constant throughout the
metric space except that it assumes higher values
on the special balls.  However,
the algorithm has no chance of ever finding these balls,
because at time $t$ they are statistically indistinguishable
from the $2^{-t}$ fraction of ordinary balls that randomly
happen to never subtract their wedge function during the
first $t$ steps of play.

\xhdr{Lower bound: full proof.}
Suppose $(X,d)$ is not compact. Fix $r>0$ such that $X$ cannot be covered by a finite number of balls of radius $r$. There exists a countably infinite subset $S\subset X$ such that the balls $B(x,r)$, $x\in S$ are mutually disjoint. (Such subset can be constructed inductively.)
Number the elements of $S$ as $s_1,s_2,\ldots,$ and denote
the ball $B(s_i,r)$ by $B(i)$.

Suppose there exists a Lipschitz experts algorithm
$\A$ that is $g(t)$-tractable for some $g\in o(t)$.
Pick an increasing sequence
    $t_1, t_2, \ldots \in \N$
such that $t_{k+1} > 2 t_k \geq 10$ and
    $g(t_k)< r_k\, t_k /k$
for each $k$, where
    $r_k = r/2^{k+1}$.
Let $m_0=0$ and
    $m_k = \sum_{i=1}^{k} 4^{t_i}$
for $k>0$, and let $I_k = \{m_{k}+1,\ldots,m_{k+1}\}.$
The intervals $I_k$ form a partition of $\N$ into
sets of sizes $4^{t_1},4^{t_2},\ldots$.
For every $i \in \N$, let $k$ be the unique value
such that $i \in I_k$ and define the following
Lipschitz function supported in $B(s_i,r)$:
    $$ G_i(x) = \begin{cases}
         \min\{ r - d(x,s_i), r - r_k \} & \mbox{if $x \in B(i)$} \\
         0 & \mbox{otherwise}.
       \end{cases} $$
If $J \subseteq \N$ is any set of natural numbers,
we can define a distribution $\prob_J$ on payoff
functions by sampling independent, uniformly-random
signs $\sigma_i \in \{ \pm 1 \}$ for every $i \in \N$
and defining the payoff function to be
    $$ \payoff = \tfrac12 + \textstyle{\sum_{i \in J}}\, G_i
       + \textstyle{\sum_{i \not\in J}}\, \sigma_i G_i.
    $$
Note that the distribution $\prob_J$ has expected
payoff function $\mu = \tfrac12 + \sum_{i \in J} G_i.$
Let us define a distribution $\mathcal{P}$ over problem
instances $\prob_J$ by letting $J$ be a random
subset of $\N$ obtained by sampling exactly one element $j_k$
of each set $I_k$ uniformly at random, independently
for each $k$.

Intuitively, consider an algorithm that
is trying to discover the value of $j_k$.  Every time a payoff function
$\pi_t$ is revealed, we get to see a random $\{ \pm 1 \}$
sample at every element of $I_k$ and we can eliminate
the possibility that $j_k$ is one of the elements that
sampled $-1$.  This filters out about half the elements
of $I_k$ in every time step, but $|I_k| = 4^{t_k}$ so
on average it takes $2 t_k$ steps before we can discover the identity
of $j_k$.  Until that time, whenever we play a strategy
in $\cup_{i \in I_k} B(i)$, there is a constant probability
that our regret is at least $r_k$.  Thus our regret
is bounded below by $r_k t_k \geq k g(t_k).$  This
rules out the possibility of a $g(t)$-tractable
algorithm.  The following lemma makes this argument
precise.

\begin{lemma}\label{lm:app-boundary}
$\Pr_{\prob \in\mathcal{P}} [ R_{(\A, \, \prob)} (t) = O_\mu( g(t)) ] = 0$.
\end{lemma}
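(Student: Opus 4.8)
The plan is to make rigorous the intuitive argument already sketched in the paper: with probability $1$ over the random choice of $J$, the regret exceeds $g(t)$ infinitely often. The core is a Borel--Cantelli-style argument over the phases $k=1,2,\ldots$, analogous to the proof of Lemma~\ref{lm:ball-tree-LB}. First I would fix the algorithm $\A$ and, for each $k$, focus on the moment of play near the end of phase $k$, say at time $t_k$. The key probabilistic fact is that the index $j_k\in I_k$ carrying the ``special'' wedge on that phase is, from the algorithm's standpoint at time $t_k$, nearly uniformly distributed over a large surviving subset of $I_k$. This is because the only information the full-feedback algorithm has collected about the signs on $I_k$ is the revealed sample $\pi_1,\ldots,\pi_{t_k}$; since $|I_k|=4^{t_k}$ and each revealed function flips at most roughly half the signs, a constant fraction of indices in $I_k$ remain statistically indistinguishable from $j_k$ after $t_k$ rounds.

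Concretely, I would define an event $\mathcal{E}_k$ capturing that the algorithm's regret in phase $k$ is at least $r_k t_k$, and show $\Pr[\mathcal{E}_k \mid \text{past}]$ is bounded below by a constant independent of $k$. The mechanism is a direct KL-divergence or counting argument: condition on the revealed sign history up to time $t_k$, let $A_k\subseteq I_k$ be the set of indices whose revealed signs are consistent with being the special index (i.e.\ never sampled $-1$); then $j_k$ is uniform on $A_k$ given the history, and $\expect[|A_k|]$ is a constant fraction of $|I_k|$ because each index survives each round with probability $\tfrac12$ and we only run $t_k \ll \log_2|I_k|$ rounds, so typically $|A_k|$ is exponentially large. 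Whenever the algorithm bets on a strategy in $\cup_{i\in I_k}B(i)$ it can only be inside the correct ball $B(j_k)$ with probability $O(1/|A_k|)\to 0$, so it fails to collect the extra payoff $r_k$ with probability bounded below by a constant; this yields $R_{(\A,\prob)}(t_k)\geq c\, r_k t_k \geq c\,k\,g(t_k)$ on a constant-probability event.

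The reduction to ``$\Pr[C_\lambda<\infty]=0$'' then mirrors Lemma~\ref{lm:ball-tree-LB} verbatim: I would set $C_{\prob}:=\inf\{C:\ R_{(\A,\prob)}(t)\le C\,g(t)\ \forall t\}$, note that $R_{(\A,\prob)}(t)=O_\mu(g(t))$ iff $C_{\prob}<\infty$, and observe that if the regret exceeds $k\,g(t_k)$ for infinitely many $k$ then $C_{\prob}=\infty$. Defining the events $\mathcal{E}_k$ so that they satisfy $\Pr[\mathcal{E}_k\mid \sigma(\mathcal{E}_1,\ldots,\mathcal{E}_{k-1})]\geq c>0$, the second Borel--Cantelli lemma (or the telescoping product $\prod_{k\ge n}\Pr[\overline{\mathcal{E}_k}\mid\cdots]=0$ as in Lemma~\ref{lm:ball-tree-LB}) gives that infinitely many $\mathcal{E}_k$ occur almost surely, hence $\Pr[C_{\prob}<\infty]=0$, which is exactly the claim.

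The main obstacle I anticipate is the conditional-independence bookkeeping in the middle step: I must argue carefully that, conditioned on the full history of revealed payoff functions through time $t_k$ (and on all previous phases' choices of $j_{k'}$, $k'<k$), the identity of $j_k$ remains sufficiently random that no bet placed in phase $k$ can reliably land in $B(j_k)$. This requires (i) verifying that the signs $\sigma_i$ for $i\in I_k$, $i\neq j_k$, are revealed independently and do not leak information about $j_k$ beyond eliminating the $-1$ indices, and (ii) controlling the size of the surviving set $A_k$ uniformly enough to lower-bound the failure probability by an absolute constant. Once this indistinguishability is pinned down, the regret lower bound and the Borel--Cantelli conclusion follow routinely.
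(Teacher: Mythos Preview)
Your sketch has the right instincts about indistinguishability of $j_k$ among the ``surviving'' indices, but there is a genuine gap in the step from ``the algorithm cannot find $B(j_k)$'' to the regret lower bound $R_{(\A,\prob)}(t_k)\geq c\, r_k t_k$. The expected payoff function is $\mu = \tfrac12 + \sum_{i\in J} G_i$, with $\sup \mu = \tfrac12 + r$. Playing in $B(j_k)$ gives payoff at most $\tfrac12 + (r-r_k)$, but playing in $B(j_{k'})$ for some $k'>k$ can give payoff as high as $\tfrac12 + (r-r_{k'})$, which is strictly larger. So even if the algorithm never locates $j_k$, it could in principle locate some $j_{k'}$ with $k'>k$ and incur per-round regret only $r_{k'}\ll r_k$. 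Your argument, as written, restricts attention to bets landing in $\cup_{i\in I_k} B(i)$ and says nothing about bets landing in the later intervals $I_{k+1}, I_{k+2},\ldots$. To get the regret bound you must show that, by time $t_k$, the algorithm has almost never played inside $B(j_{>k})=\bigcup_{\ell>k} B(j_\ell)$. This requires summing the failure probability over \emph{all} later phases $\ell>k$, not just analyzing phase $k$ in isolation; the sizes $|I_\ell|=4^{t_\ell}$ grow fast enough that this tail sum is small, but you have to actually carry it out.

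A secondary point: once you track $B(j_{>k})$, you will find that the probability the algorithm plays there more than $t_k/2$ times is not merely bounded away from $1$ but is in fact tiny --- summable in $k$. The paper therefore uses the \emph{first} Borel--Cantelli lemma (the bad events $\{N(t_k,k)>t_k/2\}$ have summable probability, so a.s.\ only finitely many occur), rather than your proposed second-Borel--Cantelli route modelled on Lemma~\ref{lm:ball-tree-LB}. The second-Borel--Cantelli approach needs a clean lower bound on $\Pr[\mathcal{E}_k \mid \sigma(\mathcal{E}_1,\ldots,\mathcal{E}_{k-1})]$, which is awkward here because the regret event $\mathcal{E}_k$ depends on the algorithm's entire play history across all phases, not just on an independent random bit as in the ball-tree setting. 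The first-Borel--Cantelli direction avoids this conditioning issue entirely and is the cleaner path.
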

\begin{proof}
Let $j_1,j_2,\ldots$ be the elements of the
random set $J$, numbered
so that $j_k \in I_k$ for all $k$.
For any $i,t \in \N$, let $\sigma(i,t)$ denote
the value of $\sigma_i$ sampled at time $t$
when sampling the sequence of i.i.d.~payoff
functions $\payoff_t$ from distribution $\prob_J$.
We know that $\sigma(j_k,t)=1$ for all $t$.
In fact if $S(k,t)$ denotes the set of all
$i \in I_k$ such that
    $\sigma(i,1) = \sigma(i,2) = \cdots = \sigma(i,t) = 1$
then conditional on the value of the set $S(k,t)$, the
value of $j_k$ is distributed uniformly at random
in $S(k,t)$.  As long as this set $S(k,t)$
has at least $n$ elements, the probability
that the algorithm picks a strategy $x_t$ belonging
to $B(j_k)$ at time $t$ is bounded above by $\tfrac{1}{n}$,
even if we condition on the event that
    $x_t \in \cup_{i \in I_k} B(i).$
For any given $i \in I_k \setminus \{j_k\}$, we have
    $\prob_J(i \in S(k,t)) = 2^{-t}$
and these events are independent for different
values of $i$.  Setting $n=2^{t_k}$, so that
$|I_k|=n^2$, we have
    \begin{align}
      \nonumber
    \prob_J \left[\, |S(k,t)| \leq n \,\right] &\leq
    \textstyle{\sum_{R \subset I_k,\; |R|=n}}\; \prob_J[\, S(k,t) \subseteq R \,] \\
      \nonumber
    &= \binom{n^2}{n}
    \left( 1-2^{-t} \right)^{n^2-n}
      \nonumber
    < \left( n^2 \cdot \left( 1 - 2^{-t} \right)^{n-1} \right)^{n} \\
      \label{eq:yucky-1}
    &< \exp \left( n (2 \ln(n) - (n-1)/2^t) \right).
    \end{align}
As long as $t \leq t_{k-1}$, the relation $t_k > 2t$
implies $(n-1)/2^t > \sqrt{n}$ so the expression
\eqref{eq:yucky-1} is bounded above by
$\exp \left( -n \sqrt{n} + 2 n \ln(n) \right)$,
which equals
$\exp \left( -8^{t_k} + 2 \ln(4) t_k 4^{t_k} \right)$
and is in turn bounded above by $\exp \left( -8^{t_k}/2 \right).$

Let $B(j_{>k})$ denote the union
    $B(j_{k+1}) \cup B(j_{k+2}) \cup \ldots,$
and let $N(t,k)$ denote the random variable
that counts the number of times \A\ selects
a strategy in $B(j_{>k})$ during rounds $1,\ldots,t$.
We have already demonstrated that for all $t \leq t_k$,
    \begin{equation} \label{eq:yucky-2}
       \Pr_{\prob_J \in \mathcal{P}}(x_t \in B(j_{>k}))
       \leq 2^{-t_{k+1}} + \sum_{\ell > k}
       \exp \left( -8^{t_\ell}/2 \right) < 2^{1-t_{k+1}},
    \end{equation}
where the term $2^{-t_{k+1}}$ accounts for the
event that $S(\ell,t)$ has at least $2^{t_{k+1}}$ elements,
where $\ell$ in the index of the set $I_{\ell}$ containing
the number $i$ such that $x_t \in B(i)$, if such an $i$
exists.
Equation \eqref{eq:yucky-2} implies the bound
    $\mathbb{E}_{\prob_J \in \mathcal{P}}[N(t_k,k)] < t_k \cdot 2^{1-t_{k+1}}.$
By Markov's inequality, the probability that $N(t_k,k) > t_k/2$
is less than $2^{2-t_{k+1}}$.  By Borel-Cantelli, almost
surely the number of $k$ such that $N(t_k,k) \leq t_k/2$
is finite.
The algorithm's expected regret at time $t$ is
bounded below by $r_k(t_k - N(t_k,k))$, so with
probability $1$, for all but finitely many $k$ we have
$R_{(\A, \, \prob_J)}(t_k) \geq r_k t_k / 2 \geq (k/2) g(t_k).$
This establishes that \A\ is not $g(t)$-tractable.
\end{proof}

\section{Lipschitz experts in a (very) high dimension}
\label{sec:FFproblem}

In this section we discuss the \FFproblem\ in (very) high dimensional metric spaces. We posit a new notion of dimensionality which is well-suited to describe regret in such problems, provide examples of metric spaces for which this notion is relevant (Section~\ref{sec:LCD-examples}), and analyze the performance of a simple algorithm in terms of this notion (Section~\ref{sec:LCD-naive}). Moreover, we consider the same algorithm under a somewhat restricted version of the problem, and obtain much better regret guarantees via a more involved analysis (Section~\ref{sec:ULproblem}).

Fix a metric space $(X,d)$. For a subset $Y\subset X$ and $\delta>0$, a \emph{$\delta$-covering} of $Y$ is a collection of sets of diameter at most $\delta$ whose union contains $Y$. A subset $S\subset X$ is a \emph{$\delta$-hitting set} for $Y$ if
    $Y \subset \cup_{x\in S}\, B(x,\,\delta)$.
(So if $S$ is a hitting set for some $\delta$-covering of $Y$ then it is a $\delta$-hitting set for $Y$.)

Let $N_\delta(Y)$ be the minimal size (cardinality) of a $\delta$-covering of $Y$, i.e. the smallest number of sets of diameter at most $\delta$ sufficient to cover $Y$. The standard definition of the \emph{covering dimension} is
\begin{align}\label{eq:CovDim'}
    \Cov(Y) = \limsup_{\delta>0}\, \frac{\log N_\delta(Y)}{\log (1/\delta)}.
\end{align}
Covering dimension and its refinements have been essential in the study of the Lipschitz MAB problem~\cite{LipschitzMAB-stoc08}. However, for the \FFproblem\ the metrics with bounded covering dimension are too ``easy''. We need to consider a much broader class of metrics that satisfy a non-trivial bound on what we call the \emph{log-covering-dimension}:
\begin{align}\label{eq:CovDim}
    \LCD(Y) = \limsup_{\delta>0}\, \frac{\log\log N_\delta(Y)}{\log (1/\delta)}.
\end{align}

\subsection{Log-covering dimension: some examples}
\label{sec:LCD-examples}

To give an example of a metric space with a non-trivial log-covering dimension, let us consider a \emph{uniform tree} -- a rooted tree in which all nodes at the same level have the same number of children. An \emph{\eps-uniform tree metric} is a metric on the leaves of an infinitely deep uniform tree, in which the distance between two leaves is $\eps^{-i}$, where $i$ is the level of their least common ancestor. It is easy to see that an \eps-uniform tree metric such that the branching factor at each level $i$ is
    $ \exp(\eps^{-ib} (2^b - 1))$
has log-covering dimension $b$.

For another example, fix a metric space $(X,d)$ of finite diameter, and let $\mathcal{P}_X$ denote the set of all probability measures over $X$. Consider $\mathcal{P}_X$ as a metric space $(\mathcal{P}_X,W_1)$ under the Wasserstein $W_1$ metric, a.k.a. the Earthmover distance.\footnote{For a metric space $(X,d)$ of finite diameter, and two probability measures $\mu$, $\nu$ on $X$ the \emph{Wasserstein $W_1$ distance},  a.k.a. the \emph{Earthmover distance}, is defined as $W_1(\mu,\nu) = \inf \expect[|X-Y|]$, where the infimum is taken over all simultaneous distributions of the random variables X and Y with marginals $\mu$ and $\nu$ respectively. The Wasserstein distance defines a metric space on $(\mathcal{P}_X, W_1)$. It is one of the standard ways to define a distance on probability measures. In particular, it is widely used in Computer Science literature to compare discrete distributions, e.g. in the context of image retrieval~\cite{earthmover-00}.} We claim that the log-covering dimension of this metric space is equal to the covering dimension of $(X,d)$.

\begin{theorem}\label{thm:LCD-earthmover}
Let $(X,d)$ be a metric space of finite diameter whose covering dimension is $\kappa<\infty$. Let $(\mathcal{P}_X,W_1)$ be the space of all probability measures over $(X,d)$ under the Wasserstein $W_1$ metric. Then
	$\LCD(\mathcal{P}_X,W_1)=\kappa$.
\end{theorem}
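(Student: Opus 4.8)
The plan is to prove the two inequalities $\LCD(\mathcal{P}_X,W_1)\leq \kappa$ and $\LCD(\mathcal{P}_X,W_1)\geq \kappa$ separately, relating coverings of $\mathcal{P}_X$ in the Earthmover metric to coverings of the base space $(X,d)$. The central observation is that a $\delta$-covering of $X$ induces a natural discretization of the probability simplex: if $X$ is covered by $N_\delta(X)$ sets of diameter $\delta$, then any probability measure can be approximated in $W_1$ by a measure supported on the $N_\delta(X)$ representative points, and such ``discrete'' measures are in turn approximated by assigning mass in multiples of some grid spacing. The number of such grid-discretized measures is roughly $(1/\delta)^{N_\delta(X)}$-ish, so its logarithm is about $N_\delta(X)\log(1/\delta)$ and its log-log is about $\log N_\delta(X)$, which is exactly the quantity governing $\Cov(X)=\kappa$. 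This is the heuristic that makes the log-covering dimension of $\mathcal{P}_X$ coincide with the ordinary covering dimension of $X$.

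For the upper bound $\LCD(\mathcal{P}_X,W_1)\leq\kappa$, first I would fix $\delta>0$ and let $Y\subset X$ be a $\delta$-covering of $X$ of size $m=N_\delta(X)$. I would show that every $\mu\in\mathcal{P}_X$ is within $W_1$-distance $O(\delta)$ of a measure supported on $Y$ whose masses are integer multiples of $\delta$ (rounding each mass to the grid and transporting the rounding error, which costs at most $\mathrm{diam}(X)$ times the total rounding, itself $O(m\delta)$—so one must choose the grid spacing finer, on the order of $\delta/m$, to keep the Wasserstein error $O(\delta)$). Counting the number of such grid measures gives at most $(m/\delta)^{O(m)}$ points, hence $\log N_{O(\delta)}(\mathcal{P}_X)\leq O(m\log(m/\delta))$, and therefore
\begin{align*}
\log\log N_{O(\delta)}(\mathcal{P}_X)
   &\leq \log m + \log\log(m/\delta) + O(1)\\
   &= \log N_\delta(X) + o\!\left(\log\tfrac1\delta\right).
\end{align*}
Dividing by $\log(1/\delta)$ and taking $\limsup_{\delta\to 0}$ recovers $\LCD(\mathcal{P}_X,W_1)\leq \Cov(X)=\kappa$, using that $\log N_\delta(X)/\log(1/\delta)\to\kappa$.

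For the matching lower bound $\LCD(\mathcal{P}_X,W_1)\geq\kappa$, I would exhibit a large $\delta$-separated family of measures. Take a maximal $\delta$-separated (hence $\delta$-covering up to constants) subset $Y=\{y_1,\dots,y_m\}\subset X$ with $m=\Theta(N_\delta(X))$, and consider the measures supported on $Y$ with masses that are multiples of a fixed constant $c$ (say uniform measures on all $c^{-1}$-size multisets drawn from $Y$, or all distributions placing mass in $\{0,c,2c,\dots\}$). Two such measures whose mass vectors differ significantly on a single point are $\Omega(\delta)$ apart in $W_1$, because transporting an $\Omega(1)$ amount of mass between $\delta$-separated points costs $\Omega(\delta)$; this yields a $\Omega(\delta)$-separated family of size $\exp(\Omega(m))=\exp(\Omega(N_\delta(X)))$. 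Consequently $\log N_{\Omega(\delta)}(\mathcal{P}_X)\geq \Omega(N_\delta(X))$, so $\log\log N_{\Omega(\delta)}(\mathcal{P}_X)\geq \log N_\delta(X)-O(1)$, and dividing by $\log(1/\delta)$ and taking the $\limsup$ gives $\LCD(\mathcal{P}_X,W_1)\geq\kappa$.

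The main obstacle I expect is the careful bookkeeping in the upper bound: one must balance the grid spacing (finer spacing improves the $W_1$ approximation but increases the count) against the covering radius, and verify that the grid-rounding transport cost genuinely stays $O(\delta)$ rather than blowing up by a factor of $m$. A clean way to handle this is to use the dual/Kantorovich formulation of $W_1$ via $1$-Lipschitz test functions, bounding $|\int f\,d\mu-\int f\,d\nu|$ directly by $\mathrm{diam}(X)\cdot\|\mathrm{mass}(\mu)-\mathrm{mass}(\nu)\|_1$ after pushing both measures to $Y$. The second subtlety is ensuring the two bounds use the \emph{same} asymptotic regime so that the $\limsup$'s align; since both the $\Cov$ and $\LCD$ definitions are stated as $\limsup_{\delta\to 0}$, and the lower-order terms $\log\log(m/\delta)$ are $o(\log(1/\delta))$ precisely when $\kappa<\infty$ (so that $m=N_\delta(X)$ grows only polynomially in $1/\delta$), the finiteness hypothesis $\kappa<\infty$ is exactly what makes the $\log\log$ cancellation work.
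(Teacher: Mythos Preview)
Your upper bound is essentially the paper's argument: push measures forward to a $\delta$-net $S$ of size $m=N_\delta(X)$, then round masses to a grid of spacing $\sim\delta/m$ (the paper takes $1/k^{\kappa+1}$ with $k\sim 1/\delta$), count $(m/\delta)^{O(m)}$ grid points, and take $\log\log$. The bookkeeping you flag is real but routine, and your observation that $\kappa<\infty$ is exactly what makes the $\log\log(m/\delta)$ term negligible is correct.

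The lower bound, however, has a genuine gap. You claim that grid measures on an $m$-point $\delta$-separated set yield an $\Omega(\delta)$-separated family of size $\exp(\Omega(m))$, but neither of your suggested constructions achieves both simultaneously. If the grid spacing $c$ is a fixed constant, then two distinct grid measures are indeed $\Omega(\delta)$ apart in $W_1$ (you must move mass $\Omega(c)$ across distance $\geq\delta$), but the number of probability vectors on $m$ points with entries in $\{0,c,2c,\ldots\}$ is only $\binom{m+O(1)}{O(1)}=\mathrm{poly}(m)$, not $\exp(\Omega(m))$. If instead $c\sim 1/m$ (e.g., uniform measures on $m/2$-element subsets), you get $\binom{m}{m/2}\approx 2^m$ measures, but two of them whose supports differ in a single swap are only $\Theta(\delta/m)$ apart in $W_1$, and this collapses the $\log\log$ bound to $\kappa/(\kappa+1)$. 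Your sentence ``two such measures whose mass vectors differ significantly on a single point'' describes \emph{some} pairs, not all.

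The paper resolves this tension with error-correcting codes. It pairs the $2n=m$ separated points into $n$ pairs, maps each bitstring $b\in\{0,1\}^n$ to the uniform measure on the selected element of each pair, and proves $W_1(f(a),f(b))\geq(\delta/n)H(a,b)$ (this inequality itself requires a short argument, since optimal transport could in principle mix across pairs). Then the Gilbert--Varshamov bound supplies $2^{\Omega(n)}$ codewords at pairwise Hamming distance $\Omega(n)$, yielding $2^{\Omega(m)}$ measures at pairwise $W_1$ distance $\Omega(\delta)$. This coding step is the missing ingredient in your sketch; without it (or an equivalent device ensuring \emph{all} pairs are far apart), the lower bound does not go through.
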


In the remainder of this subsection we prove Theorem~\ref{thm:LCD-earthmover}.

\begin{proof}[Proof (Theorem~\ref{thm:LCD-earthmover}: upper bound)]
Let us cover $(\mathcal{P}_X,W_1)$ with balls of radius $\tfrac{2}{k}$ for some $k\in\N$. Let S be a $\tfrac{1}{k}$-net in $(X,d)$; note that $|S|=O(k^\kappa)$ for a sufficiently large $k$. Let $P$ be the set of all probability distributions $p$ on $(X,d)$ such that $\texttt{support}(p) \subset S$ and for every point $x \in S$, $p(x)$ is a rational number with denominator $k^{d+1}$. The cardinality of $P$ is bounded above by $(k^{\kappa+1})^{k^\kappa}$. It remains to show that balls of radius $\tfrac{2}{k}$ centered at the points of $P$ cover the entire space $(\mathcal{P}_X,W_1)$. This is true because:
\begin{itemize}
\item every distribution $q$ is $\tfrac{1}{k}$-close to a distribution $p$ with support contained in $S$ (let $p$ be the distribution defined by randomly sampling a point of $(X,d)$ from $q$ and then outputting the closest point of $S$);
\item every distribution with support contained in $S$ is $\tfrac{1}{k}$-close to a distribution in $P$ (round all probabilities down to the nearest multiple of $k^{-(\kappa+1)}$; this requires moving only $\tfrac{1}{k}$ units of stuff). \qedhere
\end{itemize}
\end{proof}

To prove the lower bound, we make a connection to the Hamming metric.

\begin{lemma}  \label{lem:embedding}
Let $(X,d)$ be any metric space, and let $H$ denote the Hamming metric on the Boolean cube $\{0,1\}^n$.
If $S \subseteq X$ is a subset of even cardinality $2n$,
and $\eps$ is a lower bound on the distance between any two points
of $S$, then there is a mapping $f \,:\, \{0,1\}^n \rightarrow
\mathcal{P}_X$ such that for all $a,b \in \{0,1\}^n,$
\begin{align} \label{eq:earthmover-lb}
W_1(f(a),f(b)) \geq \tfrac{\eps}{n}\; H(a,b).
\end{align}
\end{lemma}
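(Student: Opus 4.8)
The plan is to construct $f$ explicitly and then bound $W_1$ from below by analyzing an arbitrary transport plan, exploiting the $\eps$-separation of $S$. Since $S$ has exactly $2n$ points, I would first partition it into $n$ unordered pairs, writing $S=\{s_i^0,s_i^1 : 1\le i\le n\}$ where all $2n$ points are distinct. For a vertex $a=(a_1,\dots,a_n)\in\{0,1\}^n$, define
\[
 f(a)=\frac1n\sum_{i=1}^n \delta_{s_i^{a_i}},
\]
the uniform distribution on the $n$ representatives selected by the coordinates of $a$. The intuition is that flipping one coordinate of $a$ swaps one atom of $f(a)$ from $s_i^{a_i}$ to its partner $s_i^{1-a_i}$, and because distinct points of $S$ are $\eps$-far apart, each such swap forces at least $\tfrac{\eps}{n}$ units of transport cost.

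To make this precise, fix $a,b$ and let $D=\{i:a_i\neq b_i\}$, so $|D|=H(a,b)$. First I would observe that both $f(a)$ and $f(b)$ are supported on $S$, so under any coupling $\gamma$ of $f(a)$ and $f(b)$ the two coordinates $X,Y$ lie in $S$ almost surely; since every pair of distinct points of $S$ is at distance at least $\eps$, we get the pointwise bound $d(X,Y)\ge \eps\,\mathbf{1}[X\neq Y]$, hence
\[
 \expect_\gamma[d(X,Y)] \;\ge\; \eps\left(1-\Pr_\gamma[X=Y]\right).
\]
Next I would bound the matched mass: for any coupling, $\Pr_\gamma[X=Y]=\sum_{s\in S}\gamma(\{(s,s)\})\le\sum_{s\in S}\min\bigl(f(a)(\{s\}),\,f(b)(\{s\})\bigr)$, since $\gamma(\{(s,s)\})$ is at most each marginal mass at $s$. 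Because all $2n$ points of $S$ are distinct, the atom $s_i^{a_i}$ carries mass $\tfrac1n$ under $f(b)$ exactly when $a_i=b_i$ (i.e. $i\notin D$) and mass $0$ otherwise; thus the overlap equals $(n-|D|)/n=1-H(a,b)/n$.

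Combining the two displays gives $\expect_\gamma[d(X,Y)]\ge \eps\,H(a,b)/n$ for \emph{every} coupling $\gamma$, and taking the infimum over couplings yields $W_1(f(a),f(b))\ge \tfrac{\eps}{n}H(a,b)$, which is precisely~\eqref{eq:earthmover-lb}. I do not expect a serious obstacle: the only points needing care are that the $2n$ elements of $S$ are genuinely distinct (so the two supports overlap only on the shared coordinates) and that $S\subseteq X$ forces both coordinates of any coupling to remain in the $\eps$-separated set, which is what turns the combinatorial Hamming distance into metric cost. An alternative route through Kantorovich duality — exhibiting a single $1$-Lipschitz test function — is also available, but it would require building a function that separates all the unmatched atoms simultaneously, whereas the coupling bound treats every atom uniformly and seems cleanest.
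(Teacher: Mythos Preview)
Your construction of $f$ is exactly the one in the paper (pair up the $2n$ points and let $f(a)$ be uniform on the $n$ representatives selected by $a$), and your verification via the overlap bound $\Pr_\gamma[X=Y]\le\sum_s\min(f(a)(\{s\}),f(b)(\{s\}))=1-H(a,b)/n$ is a correct and slightly more formal rendering of the paper's argument that each of the $H(a,b)$ unmatched atoms of mass $1/n$ must travel at least $\eps$. The approaches are essentially the same.
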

\begin{proof}
Group the points of $S$ arbitrarily into pairs $S_i = \{x_i,y_i\}$,
where $i=1,\ldots,n.$  For $a \in \{0,1\}^n$ and $1 \leq i \leq n$,
define $t_i(a) = x_i$ if $a_i=0$, and $t_i(a) = y_i$ otherwise.
Let $f(a)$ be the uniform distribution on the set
$\{t_1(a),\ldots,t_n(a)\}.$
To prove \eqref{eq:earthmover-lb}, note that if
$i$ is any index such that $a_i \neq b_i$ then
$f(a)$ assigns probability $\tfrac{1}{n}$ to $t_i(a)$ while
$f(b)$ assigns zero probability to the entire ball
of radius $\eps$ centered at $t_i(a).$  Consequently,
the $\tfrac{1}{n}$ units of probability at $t_i(a)$ have to
move a distance of at least $\eps$ when shifting
from distribution $f(a)$ to $f(b)$.  Summing over
all indices $i$ such that $a_i \neq b_i$, we obtain~\refeq{eq:earthmover-lb}.
\end{proof}

The following lemma, asserting the existence of
asymptotically good binary error-correcting codes,
is well known, e.g. see~\cite{Gilbert,Varshamov}.

\begin{lemma} \label{lem:asympt-good}
Suppose $\delta, \rho$ are constants satisfying
$0 < \delta < \frac12$ and
$0 \leq \rho < 1 + \delta \log_2(\delta) + (1-\delta) \log_2(1-\delta).$
For every sufficiently large $n$, the Hamming cube
$\{0,1\}^n$ contains more than $2^{\rho n}$ points,
no two of which are nearer than distance $\delta n$
in the Hamming metric.
\end{lemma}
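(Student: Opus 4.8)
The plan is to prove this via the classical Gilbert--Varshamov greedy/volume argument. First I would rewrite the hypothesis in a more transparent form: letting $h(\delta) = -\delta\log_2(\delta) - (1-\delta)\log_2(1-\delta)$ denote the binary entropy function, the stated bound on $\rho$ is exactly $\rho < 1 - h(\delta)$. So it suffices to produce, for all large $n$, a set of more than $2^{(1-h(\delta))n}$ points in $\{0,1\}^n$ that are pairwise at Hamming distance at least $\delta n$; since $\rho < 1 - h(\delta)$, for all sufficiently large $n$ we have $2^{(1-h(\delta))n} > 2^{\rho n}$, and the lemma follows.

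The construction is greedy. I would build a set $C \subseteq \{0,1\}^n$ by repeatedly selecting any point whose Hamming distance to every previously chosen point is at least $\delta n$, stopping when no such point remains. By construction any two points of $C$ are at distance $\geq \delta n$, so $C$ has the required separation, and it only remains to lower bound $|C|$. The key observation is that upon termination every point $y \in \{0,1\}^n$ must lie within distance strictly less than $\delta n$ of some $x \in C$ (otherwise $y$ itself could have been added), so the balls $B(x) = \{y : H(x,y) < \delta n\}$ with $x \in C$ cover the whole cube. Each such ball has cardinality at most $\sum_{i=0}^{\lfloor \delta n\rfloor}\binom{n}{i}$.

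The one quantitative ingredient is the standard entropy estimate for the volume of a Hamming ball: for $\delta \leq \tfrac12$,
\[
\sum_{i=0}^{\lfloor \delta n\rfloor}\binom{n}{i} \;\leq\; 2^{h(\delta)\,n}.
\]
I would prove this in the usual way, by multiplying each summand by the factor $\left(\tfrac{\delta}{1-\delta}\right)^{i - \delta n}\geq 1$ (valid since $\tfrac{\delta}{1-\delta}\leq 1$ and $i \leq \delta n$ force a nonpositive exponent on a base at most $1$), then extending the sum over all $0\leq i\leq n$ and applying the binomial theorem to $(\delta + (1-\delta))^n = 1$; collecting the resulting powers of $\delta$ and $1-\delta$ yields exactly $2^{h(\delta)n}$. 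Combining this bound with the covering statement from the greedy construction gives $|C|\cdot 2^{h(\delta)n} \geq |C|\cdot \max_{x\in C}|B(x)| \geq 2^n$, hence $|C| \geq 2^{(1-h(\delta))n}$, which exceeds $2^{\rho n}$ for all sufficiently large $n$.

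I expect no serious obstacle here, since the argument is entirely classical and self-contained; the only genuinely technical step is the entropy volume bound, which is routine. The mild points to handle carefully are the floor/ceiling issues at the threshold $\delta n$ (the excluded balls should have radius just below $\delta n$, so that the surviving pairwise distances are $\geq \delta n$ rather than merely $>\delta n$) and the use of the \emph{strict} inequality $\rho < 1-h(\delta)$, which is precisely what converts the asymptotic lower bound $2^{(1-h(\delta))n}$ on $|C|$ into the required ``more than $2^{\rho n}$'' for all large $n$.
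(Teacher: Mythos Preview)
The paper does not actually prove this lemma; it simply cites it as well known from Gilbert and Varshamov. Your proposal supplies precisely the classical Gilbert--Varshamov greedy volume argument that those references contain, and it is correct (the only minor slip is the phrase ``more than $2^{(1-h(\delta))n}$ points'' in the opening paragraph, which should read ``at least''; your later use of the strict inequality $\rho < 1 - h(\delta)$ correctly recovers the required strict bound $|C| > 2^{\rho n}$).
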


Combining these two lemmas, we obtain an easy proof for the lower bound in Theorem~\ref{thm:LCD-earthmover}.

\begin{proof}[Proof (Theorem~\ref{thm:LCD-earthmover}: lower bound)]
Consider any $\gamma < \kappa$.  The hypothesis on the covering
dimension of $(X,d)$ implies that for all sufficiently small
$\eps$, there exists a set $S$ of cardinality $2n$ --- for some
$n > \eps^{-\gamma}$ --- such that the minimum distance between
two points of $S$ is at least $5 \eps.$  Now let $\mathcal{C}$
be a subset of $\{0,1\}^n$ having at least $2^{n/5}$ elements,
such that the Hamming distance between any two points of $\mathcal{C}$
is at least $n/5.$  Lemma~\ref{lem:asympt-good} implies that
such a set $\mathcal{C}$ exists, and we can then apply
Lemma~\ref{lem:embedding} to embed $\mathcal{C}$ in
$\mathcal{P}_X$, obtaining a subset of $\mathcal{P}_X$ whose
cardinality is at least $2^{\eps^{-\gamma} / 5}$, with
distance at least $\eps$ between every pair of points in the set.
Thus, any $\eps$-covering of $\mathcal{P}_X$ must contain at
least $2^{\eps^{-\gamma}/5}$ sets, implying that
$\LCD(\mathcal{P}_X,W_1) \geq \gamma.$  As $\gamma$ was an
arbitrary number less than $\kappa$, the proposition is proved.
\end{proof}

\newcommand{\NaiveExp}{\ensuremath{\text{{\sc NaiveExperts}}}}

\subsection{Using the new definition}
\label{sec:LCD-naive}

To see how the log-covering dimension is relevant to the Lipschitz experts problem, consider the following simple algorithm, called \NaiveExp(b).\footnote{This algorithm is a version of the ``naive algorithm''~\cite{Bobby-nips04,LipschitzMAB-stoc08} for the Lipschitz MAB problem. A similar algorithm has been used by~\cite{Anupam-experts07} to obtain regret $R(T) = O(\sqrt{T})$ for metric spaces of finite covering dimension.} The algorithm is parameterized by $b>0$. It runs in phases. Each phase $i$ lasts for $T = 2^i$ rounds, and outputs its \emph{best guess} $x^*_i\in X$, which is played throughout phase $i+1$. During phase $i$, the algorithm picks a $\delta$-hitting set for $X$ of size at most $N_\delta(X)$, for
    $\delta = T^{-1/(b+2)}$.
By the end of the phase, $x^*_i$ as defined as the point in $S$ with the highest sample average (breaking ties arbitrarily). This completes the description of the algorithm.

It is easy to see that the regret of \NaiveExp\ is naturally described in terms of the log-covering dimension. The proof is based the argument from~\cite{Bobby-nips04}. We restate it here for the sake of completeness, and to explain how the new dimensionality notion is used.

\OMIT{ 
It is easy to see that for any $b>\LCD(X)$ algorithm \NaiveExp(b)\ achieves regret
    $R(t) = O(t^{1-1/(b+2)})$
for the \FFproblem.
} 

\begin{theorem}\label{thm:ffproblem-naive}
Consider the \FFproblem\ on a metric space $(X,d)$. For each $b>\LCD(X)$, algorithm $\NaiveExp(b)$ achieves regret
    $R(t) = O(t^{1-1/(b+2)})$.
\end{theorem}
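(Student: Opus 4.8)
The plan is to bound the regret of $\NaiveExp(b)$ phase by phase, using the log-covering dimension to control the number of arms the algorithm maintains and a Hoeffding/Chernoff bound to control the estimation error of each arm's mean payoff. Fix a problem instance with expected payoff function $\mu$ and $\mu^* = \sup(\mu,X)$. Within phase $i$ (of length $T=2^i$), the algorithm holds a $\delta$-hitting set $S$ with $\delta = T^{-1/(b+2)}$ and $|S| \le N_\delta(X)$, and by the definition of $\LCD$ in~\refeq{eq:CovDim}, for any $b > \LCD(X)$ and all sufficiently small $\delta$ we have $\log N_\delta(X) \le \delta^{-b}$, i.e. $N_\delta(X) \le \exp(\delta^{-b})$. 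This double-exponential savings (only the \emph{logarithm} of the covering number is polynomially bounded) is exactly what makes the log-covering dimension the right notion: it keeps the $\log N_\delta$ term that appears in the confidence-width analysis under control even when $N_\delta$ itself is astronomically large.

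The first step I would carry out is the \textbf{approximation (bias) bound}. Since $S$ is a $\delta$-hitting set, there is some $x_\delta \in S$ within distance $\delta$ of a near-optimizer of $\mu$, and because $\mu$ is $1$-Lipschitz, $\mu(x_\delta) \ge \mu^* - \delta$. So restricting attention to $S$ costs at most $\delta = T^{-1/(b+2)}$ per round in expected payoff. The second step is the \textbf{estimation (variance) bound}: with full feedback the algorithm sees an i.i.d.\ sample of every arm in each of the $T$ rounds of the phase, so each sample average over the phase concentrates. By a Chernoff/Hoeffding bound together with a union bound over the $|S| \le \exp(\delta^{-b})$ arms, with high probability every sample average is within
\begin{align*}
 O\!\left( \sqrt{\tfrac{\log |S|}{T}} \right)
    = O\!\left( \sqrt{\tfrac{\delta^{-b}}{T}} \right)
    = O\!\left( \sqrt{\tfrac{T^{b/(b+2)}}{T}} \right)
    = O\!\left( T^{-1/(b+2)} \right)
\end{align*}
of its true mean. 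Hence the best-guess arm $x^*_i$ picked by the algorithm has true mean within $O(T^{-1/(b+2)})$ of the best arm in $S$, so within $O(\delta) = O(T^{-1/(b+2)})$ of $\mu^*$.

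Combining the two bounds, the per-round regret suffered while playing $x^*_i$ during phase $i+1$ is $O(T^{-1/(b+2)}) = O(2^{-i/(b+2)})$ on the clean event, and the phase has length $2^{i+1}$, contributing regret $O(2^{i+1} \cdot 2^{-i/(b+2)}) = O(2^{i(1 - 1/(b+2))})$; the failure probability is driven to be summable by choosing the confidence level polynomially in $T$, so the bad event contributes only a constant in expectation. Summing the geometric series over phases up to the one containing time $t$ gives total regret $R(t) = O(t^{1 - 1/(b+2)})$, as claimed. The \textbf{main obstacle} is getting the exponents to line up precisely: one must verify that the choice $\delta = T^{-1/(b+2)}$ is exactly what balances the bias term $\delta$ against the confidence-width term $\sqrt{\log N_\delta(X)/T} \approx \sqrt{\delta^{-b}/T}$, and that the $\limsup$ in the definition of $\LCD$ only guarantees $\log N_\delta(X) \le \delta^{-b}$ for \emph{sufficiently small} $\delta$ (equivalently, sufficiently large $T$), so the bound holds only from some phase $i_0$ onward — the earlier phases are absorbed into the additive constant exactly as in the proof of Theorem~\ref{thm:main-alg}.
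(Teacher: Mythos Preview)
Your proposal is correct and follows essentially the same approach as the paper's proof: bound the per-phase regret by balancing the Lipschitz approximation error $\delta$ against the Chernoff confidence width $\sqrt{\log N_\delta(X)/T}$, using $\LCD(X)<b$ to control $\log N_\delta$ by $\delta^{-b}$ and thereby verifying that $\delta=T^{-1/(b+2)}$ equates the two, then summing the geometric series over phases. The paper's writeup makes the chain of inequalities $\mu(x^*_i) \ge \mu(x^*) - 5\delta$ explicit and handles the failure event via the $(T N_\delta)^{-3}$ probability bound, but the argument is the same.
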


\begin{proof}
Let $N_\delta = N_\delta(X)$, and let $\mu$ be the expected payoff function. Consider a given phase $i$ of the algorithm. Let $T=2^i$ be the phase duration. Let $\delta = T^{-1/(b+2)}$, and let $S\subset X$ the $\delta$-hitting set chosen in this phase. Note that for any sufficiently large $T$ it is the case that
    $N_\delta < 2^{\delta^{-b}}$.
For each $x\in S$, let $\mu_T(x)$ be the sample average of the feedback from $x$ by the end of the phase. Then by Chernoff bounds,
\begin{align}\label{eq:ffproblem-naive}
 \Pr[ | \mu_T(x) - \mu(x)| <  r_T ] > 1-  (T N_\delta)^{-3},
    \quad\text{where}\quad
        r_T = \sqrt{ 8\,\log (T\, N_\delta)\, / T} < 2\delta.
\end{align}
Note that $\delta$ is chosen specifically to ensure that $r_T \leq O(\delta)$.

We can neglect the regret incurred when the event in~\refeq{eq:ffproblem-naive} does not hold for some $x\in S$. From now on, let us assume that the event in~\refeq{eq:ffproblem-naive} holds for all $x\in S$. Let $x^*$ be an optimal strategy, and $x^*_i = \argmax_{x\in S} \mu_T(x) $ be the ``best guess". Let $x\in S$ be a point that covers $x^*$. Then
$$ \mu(x^*_i)
    \geq \mu_T(x^*_i) - 2\delta
    \geq \mu_T(x) - 2\delta
    \geq \mu(x) - 4 \delta
    \geq \mu(x^*) - 5 \delta.
$$
Thus the total regret $R_{i+1}$ accumulated in phase $i+1$ is
$$ R_{i+1}\leq 2^{i+1}\, (\mu(x^*) - \mu(x^*_i)) \leq O(\delta T)
    = O(T^{1-1/(2+b)}).
$$
Thus the total regret summed over phases is as claimed.
\end{proof}

\subsection{The \ULproblem}
\label{sec:ULproblem}

We now turn our attention to the \emph{\ULproblem}, a restricted version of the \FFproblem\ in which a problem instance $(X,d,\prob)$ satisfies a further property that each function $f\in \mathtt{support}(\mathbb{P})$ is itself a Lipschitz function on $(X,d)$. We show that for this version, \NaiveExp\ obtains a significantly better regret guarantee, via a more involved analysis. As we will see in the next section, for a wide class of metric spaces including \eps-uniform tree metrics there is a matching upper bound.

\OMIT{ 
The lower bound in Theorem~\ref{thm:experts-MaxMinLCD} holds for the \emph{\ULproblem},
Let us show the matching upper bound for the metric spaces such that
    $\MaxMinLCD(X) = \LCD(X)$.
In fact, this bound is achieved by algorithm $\NaiveExp$ via a more involved analysis.
} 

\begin{theorem}\label{thm:ulproblem-naive}
Consider the \ULproblem\ with full feedback. Fix a metric space $(X,d)$. For each $b>\LCD(X)$ such that $b\geq 2$, $\NaiveExp(b-2)$ achieves regret
    $R(t) = O(t^{1-1/b})$.
\end{theorem}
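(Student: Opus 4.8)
The plan is to reduce everything, exactly as in the proof of Theorem~\ref{thm:ffproblem-naive}, to a single uniform-deviation bound, and then to prove that bound by exploiting the one new ingredient of the \ULproblem: every \emph{realized} payoff function is itself $1$-Lipschitz. Recall that $\NaiveExp(b-2)$ uses resolution $\delta = T^{-1/((b-2)+2)} = T^{-1/b}$ in a phase of length $T = 2^i$. Writing $\mu_T(x)$ for the empirical average over a phase, $\mu$ for the expected payoff, and $x^*_i = \argmax_{x\in S}\mu_T(x)$, the same three-line chain as before gives, almost surely,
\[
  \mu^* - \mu(x^*_i) \;\le\; \delta + 2\,\sup_{x\in S}\,|\mu_T(x)-\mu(x)|,
\]
since $S$ is a $\delta$-hitting set and $\mu$ is $1$-Lipschitz. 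Hence the whole theorem follows once I show that the empirical deviation process $g_T := \mu_T-\mu$ satisfies $\expect\left[\sup_{x\in X}|g_T(x)|\right] = O(\delta) = O(T^{-1/b})$: the expected regret in phase $i+1$ is then $2T\cdot O(\delta)=O(T^{1-1/b})$, and summing the geometric series over phases yields $R(t)=O(t^{1-1/b})$. Note that working in expectation suffices because regret is defined as an expectation and the per-round loss is bounded almost surely by the displayed quantity.

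The crucial observation is that $g_T = \tfrac1T\sum_{t}(f_t-\mu)$ where each $f_t\in\mathtt{support}(\prob)$ is $1$-Lipschitz, so $f_t-\mu$ is $2$-Lipschitz and therefore \emph{every sample path of $g_T$ is $2$-Lipschitz}. This is precisely what fails in the general \FFproblem\ (there only $\mu$ is Lipschitz), and it is what lets us use the far finer resolution $\delta=T^{-1/b}$ instead of $T^{-1/(b+2)}$. Indeed, at this resolution $N_\delta(X)<2^{\delta^{-b}}=2^{T}$ for large $T$, so the union bound used for Theorem~\ref{thm:ffproblem-naive} would force a useless confidence radius $\Theta(1)$. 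Instead I will control $\sup_x|g_T(x)|$ by chaining and use the Lipschitz sample paths to \emph{truncate} the chaining at scale $\delta$: every $x$ lies within $\delta$ of a net center $c$, and $|g_T(x)-g_T(c)|\le 2\delta$ deterministically, so $\sup_{x}|g_T(x)|\le \max_{c}|g_T(c)| + 2\delta$ where the max ranges over a minimal $\delta$-net.

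For the net maximum I will apply Dudley's entropy bound. The increment $g_T(x)-g_T(y)=\tfrac1T\sum_t\big[(f_t(x)-f_t(y))-(\mu(x)-\mu(y))\big]$ is an average of $T$ i.i.d.\ mean-zero terms, each bounded in absolute value by $2d(x,y)$, so by Hoeffding it is sub-Gaussian with parameter $O(d(x,y)/\sqrt T)$; that is, $g_T$ is a sub-Gaussian process in the rescaled metric $d/\sqrt T$. Truncated chaining together with the within-cell bound and $\expect|g_T(c_0)|\le T^{-1/2}\le\delta$ (valid for $b\ge 2$) gives
\[
  \expect\Big[\sup_{x\in X}|g_T(x)|\Big]
  \;\le\; O(\delta) \;+\; \frac{O(1)}{\sqrt T}\int_{\delta}^{\mathrm{diam}(X)} \sqrt{\log N_u(X)}\;du .
\]
Since $b>\LCD(X)$, for all sufficiently small $u$ we have $\log N_u(X)\le u^{-b}$, hence $\sqrt{\log N_u(X)}\le u^{-b/2}$; for $b>2$ this yields $\int_{\delta}^{\mathrm{diam}(X)} u^{-b/2}\,du = O(\delta^{1-b/2})$, and multiplying by $T^{-1/2}$ and substituting $\delta=T^{-1/b}$ gives $T^{-1/2}\,\delta^{1-b/2}=T^{-1/b}=\delta$. (The portion of the integral over $u$ bounded away from $0$ contributes an instance-dependent constant, which $T^{-1/2}$ renders $O(\delta)$.) Thus $\expect[\sup_x|g_T(x)|]=O(\delta)$, completing the argument.

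I expect the chaining step to be the main obstacle: one must set up the truncated Dudley bound carefully, verify the sub-Gaussian increment estimate, and confirm that the entropy integral evaluates to the intended power of $\delta$ — all of which rely essentially on the almost-sure Lipschitzness granted by the uniform-Lipschitz hypothesis. A secondary wrinkle is the boundary case $b=2$, where $\int_\delta^{\mathrm{diam}} u^{-1}\,du = O(\log(1/\delta))$ introduces an extra $O(\log T)$ factor (giving $\tilde O(\sqrt t)$); I would absorb this either by a marginally coarser choice of $\delta$ or by noting that the statement is most naturally clean for $b>2$, with $b=2$ matching $\tilde\Theta(\sqrt t)$. If a high-probability (rather than in-expectation) statement is desired for a Borel--Cantelli argument across phases, a bounded-differences (McDiarmid) inequality applied to $\sup_x\sum_t(f_t(x)-\mu(x))$ supplies the needed sub-Gaussian tail $\exp(-\Omega(T\delta^2))$, which is summable over phases for $b>2$.
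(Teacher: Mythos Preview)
Your proposal is correct, and the underlying mechanism is the same as the paper's: both arguments exploit the uniform-Lipschitz hypothesis to bound increments of the empirical deviation process by $O(d(x,y)/\sqrt{T})$ in sub-Gaussian norm, and then sum these increments across a hierarchy of scales. The packaging, however, is genuinely different. You invoke Dudley's entropy integral as a black box, truncate at scale $\delta$ using the almost-sure $2$-Lipschitzness of $g_T$, and evaluate $\tfrac{1}{\sqrt T}\int_\delta^{\mathrm{diam}} \sqrt{\log N_u(X)}\,du$ directly. The paper instead builds a \emph{covering tree} by hand and runs an explicit induction over levels: at each level $j$ it applies the scaled Chernoff bound \eqref{eq:thm-ulproblem-naive-Chernoff} to the pair $(\sigma(v),\sigma(w))$ of children-optima, obtaining a per-level increment $4\delta\cdot 2^{-(n-j)(b-2)/2}\le 4\delta$, and then sums over levels to get $\Delta^*(\text{root})\le O(\delta\log T)$.

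What each buys: your route is shorter and, for $b>2$, sharper—the entropy integral yields $O(\delta)$ rather than $O(\delta\log T)$, because the Dudley sum is geometric and dominated by the finest scale, whereas the paper uses the uniform bound $4\delta$ at every level and pays a factor of $n=\Theta(\log(1/\delta))$. (The paper's per-level bound $4\delta\cdot 2^{-(n-j)(b-2)/2}$ would recover your sharper estimate if summed as a geometric series, but the paper does not do this.) At the boundary $b=2$ both arguments pick up the same logarithmic factor, as you note. The paper's route is more self-contained—it does not assume familiarity with Dudley chaining—and its covering-tree structure meshes naturally with the transfinite decomposition machinery used later in Section~\ref{sec:MaxMinLCD-UB}.
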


\begin{proof}
The preliminaries are similar to those in the proof of Theorem~\ref{thm:ffproblem-naive}. For simplicity, assume $b\geq 2$. Let $N_\delta = N_\delta(X)$, and let $\mu$ be the expected payoff function. Consider a given phase $i$ of the algorithm. Let $T=2^i$ be the phase duration. Let $\delta = T^{-1/b}$, and let $S$ be the $\delta$-hitting set chosen in this phase. (The specific choice of $\delta$ is the only difference between the algorithm here and the algorithm in Theorem~\ref{thm:ffproblem-naive}.) Note that $|S|\leq N_\delta$, and for any sufficiently large $T$ it is the case that
    $N_\delta < 2^{\delta^{-b}}$.

The rest of the analysis holds for any set $S$ such that $|S|\leq N_\delta$. (That is, it is not essential that $S$ is a $\delta$-hitting set for $X$.) For each $x\in S$, let $\nu(x)$ be the sample average of the feedback from $x$ by the end of the phase. Let $y^*_i = \argmax (\mu,S)$ be the optimal strategy in the chosen sample, and let $x^*_i = \argmax(\nu,S)$ be the algorithm's ``best guess". The crux is to show that
\begin{align}\label{eq:thm-ulproblem-naive-crux}
 \Pr[\, \mu(y^*_i) - \mu(x^*_i) \leq O(\delta \log T) \,]
     > 1-  T^{-3}.
\end{align}
Once~\refeq{eq:thm-ulproblem-naive-crux} is established, the remaining steps is exactly as the proof of Theorem~\ref{thm:ffproblem-naive}.

Proving~\refeq{eq:thm-ulproblem-naive-crux} requires a new technique. The obvious approach -- to use Chernoff Bounds for each $x\in S$ separately and then take a Union Bound -- does not work, essentially because one needs to take the Union Bound over too many points. Instead, we will use a more efficient version tail bound: for each $x,y\in X$, we will use Chernoff Bounds applied to the random variable $f(x)-f(y)$, where $f \sim \prob$ and  $(X,d,\prob)$ is the problem instance. For a more convenient notation, we define
$$\Delta(x,y)
    = \left[\, \mu(x)-\mu(y) \,\right]
    + \left[\, \nu(y)-\nu(x) \,\right],$$
Then for any $N\in \N$ we have
\begin{align}\label{eq:thm-ulproblem-naive-Chernoff}
 \Pr\left[\,  |\Delta(x,y)| \leq
        d(x,y)\,\sqrt{8\,\log (T\, N)/T}
 \right]
 > 1-  (T N)^{-3}.
\end{align}
The point is that the ``slack" in the Chernoff Bound is scaled by the factor of $d(x,y)$. This is because each $f\in \texttt{support}(\prob)$ is a Lipschitz function on $(X,d)$,

In order to take advantage of~\refeq{eq:thm-ulproblem-naive-Chernoff}, let us define the following structure that we call the \emph{covering tree} of the metric space $(X,d)$. This structure consists of a rooted tree $\mathcal{T}$ and non-empty subsets $X(u)\subset X$ for each internal node $u$. Let $V_\mathcal{T}$ be the set of all internal nodes. Let $\mathcal{T}_j$ be the set of all level-$j$ internal nodes (so that $\mathcal{T}_0$ is a singleton set containing the root). For each
    $u\in V_\mathcal{T}$,
let $\mathcal{C}(u)$ be the set of all children of $u$. For each node $u\in \mathcal{T}_j$ the structure satisfies the following two properties: (i) set $X(u)$ has diameter at most $2^{-j}$, (ii) the sets $X(v)$, $v\in \mathcal{C}(u)$ form a partition of $X(u)$. This completes the definition.

By definition of the covering number $N_\delta(\cdot)$ there exist a covering tree $\mathcal{T}$ in which each node $u\in \mathcal{T}_j$ has fan-out $N_{2^{-j}}( X(u))$. Fix one such covering tree. For each node $u\in V_\mathcal{T}$, define
\begin{align}
   \sigma(u) &= \argmax( \mu,\, \mathcal{X}(u) \cap S)
        \label{eq:thm-ulproblem-naive-sigma}\\
     \rho(u) &= \argmax( \nu,\, \mathcal{X}(u) \cap S), \nonumber
\end{align}
where the tie-breaking rule is the same as in the algorithm.

Let
    $n = \cel{\log\tfrac{1}{\delta}}$.
Let us say that phase $i$ is \emph{clean} if the following two properties hold:
\begin{OneLiners}
\item[(i)] for each node $u\in V_\mathcal{T}$  any two children $v,w\in \mathcal{C}(u)$ we have
    $|\, \Delta( \sigma(v),\, \sigma(w))\, | \leq 4\delta $.

\item[(ii)] for any $x,y\in S$ such that $d(x,y)\leq \delta$ we have
    $|\Delta(x,y)| \leq 4\delta $.
\end{OneLiners}

\begin{claim}
For any sufficiently large $i$,
phase $i$ is clean with probability at least $1-T^{-2}$.
\end{claim}
\begin{proof}
To prove (i), let $j$ be such that $u\in \mathcal{T}_j$. We consider each $j$ separately. Note that (i) is trivial for $j>n$. Now fix $j\leq n$ and apply the Chernoff-style bound~\refeq{eq:thm-ulproblem-naive-Chernoff} with
    $N = |\mathcal{T}_j|$ and $(x,y) = (\sigma(v), \sigma(w))$.
Since
	$|\mathcal{T}_l| \leq 2^{2^{lb}}\, |\mathcal{T}_{l-1}|$
for each sufficiently large $l$, it follows that
$\log |\mathcal{T}_j|
	\leq C + \textstyle{\sum_{l=1}^j}\; 2^{lb}
	\leq C+ \tfrac{4}{3}\, 2^{jb},
$
where $C$ is a constant that depends only on the metric space and $b$.
It is easy to check that for any sufficiently large phase $i$ (which, in turn, determines $T$, $\delta$ and $n$),  the ``slack" in ~\refeq{eq:thm-ulproblem-naive-Chernoff} is at most $4\delta$:
\begin{align*}
d(x,y)\,\sqrt{8\,\log (T\, N)/T}
	&\leq  3\, d(x,y)\,\sqrt{\log (N)/T}
	\leq 4\,2^{-j}\, \sqrt{2^{bj}/ 2^{bn}}
	= 4 \delta\, 2^{-(n-j)(b-2)/2}
	\leq 4\delta.
\end{align*}
Interestingly, the right-most inequality above is the only place in the proof where it is essential that $b\geq 2$.

To prove (ii), apply ~\refeq{eq:thm-ulproblem-naive-Chernoff} with $N = |S|$ similarly. Claim proved.
\end{proof}

From now on we will consider clean phase. (We can ignore regret incurred in the event that the phase is not clean.) We focus on the quantity
    $ \Delta^*(u) = \Delta( \sigma(u),\, \rho(u))$.
Note that by definition $ \Delta^*(u)\geq 0$. The central argument of this proof is the following upper bound on $\Delta^*(u)$.

\begin{claim}\label{cl:thm-ulproblem-naive}
In a clean phase,
    $\Delta^*(u) \leq O(\delta)(n-j)$
for each $j\leq n$ and each $u\in \mathcal{T}_j$.
\end{claim}
\begin{proof}
Use induction on $j$. The base case $j=n$ follows by part (ii) of the definition of the clean phase, since for $u\in \mathcal{T}_n$ both $\sigma(u)$ and $\rho(u)$ lie in $X(u)$, the set of diameter at most $\delta$. For the induction step, assume the claim holds for each $v\in \mathcal{T}_{j+1}$, and let us prove it for some fixed  $u\in \mathcal{T}_j$.

Pick children $u,v\in \mathcal{C}(u)$ such that
    $\sigma(u) \in X(v)$ and $\rho(u)\in X(w)$.
Since the tie-breaking rules in~\refeq{eq:thm-ulproblem-naive-sigma} is fixed for all nodes in the covering tree, it follows that
    $\sigma(u) = \sigma(v)$ and $\rho(u) = \rho(w)$.
Then
\begin{align*}
\Delta^*(w) + \Delta( \sigma(v),\, \sigma(w) )
    & = \Delta( \sigma(w),\, \rho(u) ) + \Delta( \sigma(u),\, \sigma(w) ) \\
    & =  \mu(\sigma(w)) - \mu(\rho(u)) + \nu(\rho(u)) - \rho(\sigma(w)) \;+\\
    & \quad \;
        \mu(\sigma(u)) - \mu(\sigma(w)) + \nu(\sigma(w)) - \nu(\sigma(u)) \\
    & = \Delta^*(u).
\end{align*}
Claim follows since
    $\Delta^*(w)\leq O(\delta)(n-j-1)$
by induction, and $\Delta( \sigma(v),\, \sigma(w) )\leq 4\delta$ by part (i) in the definition of the clean phase.
\end{proof}

To complete the proof of~\refeq{eq:thm-ulproblem-naive-crux}, let $u_0$ be the root of the covering tree. Then
    $y^*_i = \sigma(u_0)$ and $x^*_i = \rho(u_0)$.
Therefore by Claim~\ref{cl:thm-ulproblem-naive} (applied for $\mathcal{T}_0 = \{u_0\}$) we have
$$ O(\delta n) \geq \Delta^*(u_0) = \Delta^*(y^*_i, \, x^*_i)
    \geq \mu(y^*_i) - \mu(x^*_i). \qedhere
$$
\end{proof}


\section{Lipschitz experts in a (very) high dimension: regret characterization}
\label{sec:FFproblem-characterization}

As it turns out, the log-covering dimension~\refeq{eq:CovDim} is not the right notion to characterize regret for arbitrary metric spaces. We need a more refined version, similar to the \emph{max-min-covering dimension} from~\cite{LipschitzMAB-stoc08}:
\begin{align}
\MaxMinLCD(X) = \textstyle{\sup_{Y\subset X}} \;
    \inf\{ \,\LCD(Z):\; \text{open non-empty $Z\subset Y$} \}.
\end{align}
\noindent Note that in general $\MaxMinLCD \leq \LCD(X)$. The equality holds for ``homogenous" metric spaces such as \eps-uniform tree metrics. We prove that Theorems~\ref{thm:experts-MaxMinLCD} holds with $b = \MaxMinLCD(X)$.

\begin{theorem}\label{thm:experts-MaxMinLCD-body}
Fix a metric space $(X,d)$ and let $b=\MaxMinLCD(X)$. The \FFproblem\ on $(X,d)$ is $(t^\gamma)$-tractable for any $\gamma> \tfrac{b+1}{b+2}$, and not $(t^\gamma)$-tractable for any $\gamma < \tfrac{b-1}{b}$.
\end{theorem}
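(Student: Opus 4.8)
The plan is to prove the two bounds separately, observing that the target exponents are precisely those of $\NaiveExp$ from Theorem~\ref{thm:ffproblem-naive} applied to a space of log-covering dimension $b$: the upper exponent $\tfrac{b+1}{b+2}=1-\tfrac{1}{b+2}$ is what that algorithm achieves when $\LCD=b$, while the lower exponent $\tfrac{b-1}{b}=1-\tfrac1b$ is what a single-scale ensemble lower bound yields on a $b$-dimensional space. The two halves of the definition of $\MaxMinLCD$ supply exactly the two ingredients needed: a low-dimensional relatively open piece inside \emph{every} subspace (for the algorithm), and a subspace that is uniformly $\approx b$-dimensional everywhere (for the lower bound).

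For the lower bound, fix $\gamma<\tfrac{b-1}{b}$ and pick $\eps>0$ with $\gamma<\tfrac{b-1-\eps}{b-\eps}$. Since $b=\MaxMinLCD(X)$, there is a subspace $Y$ for which the infimum of $\LCD(Z)$ over relatively open nonempty $Z\subseteq Y$ exceeds $b-\eps$; equivalently, every such $Z$ has covering number $N_\delta\geq 2^{\delta^{-(b-\eps)}}$ at arbitrarily fine scales. I would build a $k$-ary ball-tree in $Y$ in the spirit of Lemma~\ref{lm:ball-tree-LB}, but with level-$i$ fan-out $k_i\approx 2^{\delta_i^{-(b-\eps)}}$, choosing the scale $\delta_i$ at each node among those realizing the $\LCD$ bound for the ball being subdivided; uniform high-dimensionality of $Y$ guarantees that within every ball we have zoomed into there remain enough well-separated sub-balls to continue the recursion to arbitrarily fine scales. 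At each node I place disjoint ``bump'' functions on the children and bias the sign of the true child by $\eta=\tfrac14$, producing a $(\eps_i,\eta,k_i)$-ensemble with gap $\eps_i\approx\eta\,\delta_i$ and per-round distinguishability $\eta$. Applying Theorem~\ref{thm:LB-technique} at the critical time $t_i\approx\ln(k_i)/\eta^2\approx\delta_i^{-(b-\eps)}$ forces, on a ``bad'' child of conditional probability $\tfrac12$, regret at least $\eps_i t_i/2\approx\delta_i^{\,1-(b-\eps)}$, which by the choice of $\eps$ exceeds $t_i^{\gamma}$ by a factor tending to infinity. Defining the random instance by a uniformly random lineage and invoking Borel--Cantelli exactly as in the proof of Lemma~\ref{lm:ball-tree-LB} then shows that, almost surely, regret exceeds $t_i^\gamma$ infinitely often, ruling out $(t^\gamma)$-tractability.

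For the upper bound, fix $\gamma>\tfrac{b+1}{b+2}$ and choose $b'>b$ with $\gamma>\tfrac{b'+1}{b'+2}$. Using the other half of $\MaxMinLCD$ --- every nonempty subspace contains a relatively open piece of $\LCD<b'$ --- I would peel the space transfinitely: set $F_0=X$, at successor steps remove from the closed remainder $F_\alpha$ a relatively open $Z_\alpha$ with $\LCD(Z_\alpha)<b'$, and at limits intersect. A strictly decreasing transfinite chain of closed sets must stabilize, so this terminates and partitions $X$ into pieces $\{Z_\alpha\}_{\alpha<\lambda}$, each of log-covering dimension below $b'$, with every initial union $\bigcup_{\beta<\alpha}Z_\beta$ open; assigning each point the rank $\alpha$ of its piece yields a topological well-ordering by rank (this is where ordinals and transfinite induction enter). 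The algorithm then mirrors Section~\ref{sec:tractability}: in doubling phases it uses the full feedback to estimate $\mu$ on a covering, identifies via a Lemma~\ref{lm:structural}-style argument the optimal-rank piece $Z_{\alpha^*}$ and commits to its closure, and runs $\NaiveExp(b')$ within that piece. Since $\LCD(Z_{\alpha^*})<b'$, Theorem~\ref{thm:ffproblem-naive} bounds the within-piece regret by $O(t^{1-1/(b'+2)})=O(t^{(b'+1)/(b'+2)})$; letting $b'\downarrow b$ gives $(t^\gamma)$-tractability for every $\gamma>\tfrac{b+1}{b+2}$.

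The main obstacle is the ``commit to the optimal piece'' step in the upper bound. Unlike the compact, countable case --- where Lemma~\ref{lm:structural} supplied a strictly positive payoff gap between the optimal initial segment and its complement --- here no uniform gap need exist, and the lower-rank pieces together with the residual closed set may jointly be high-dimensional. The real work is to show that the full-feedback estimates, combined with the low dimension of each \emph{individual} piece and the well-ordering by rank, let the algorithm spend only a controlled exploration budget to eliminate non-optimal ranks and localize into $Z_{\alpha^*}$ before the $\NaiveExp$ analysis takes over. Making this transfinite localization precise, and verifying that the regret accumulated over the (possibly transfinitely many) discarded pieces is of lower order than $t^\gamma$, is the crux of the proof.
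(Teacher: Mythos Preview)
Your lower-bound sketch matches the paper's: a ``thick'' ball-tree inside a uniformly high-dimensional subspace $Y$, level-$i$ fan-out $n_i\approx 2^{r_i^{-b}}$, biased signs on the true branch, Theorem~\ref{thm:LB-technique} applied at $t_i\approx r_i^{-b}$ to force regret $\Omega(r_i^{1-b})=\Omega(t_i^{(b-1)/b})$ on half the children, and the Borel--Cantelli finish from Lemma~\ref{lm:ball-tree-LB}.

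For the upper bound your transfinite peeling is exactly the paper's \emph{transfinite LCD decomposition} $\{S_\lambda\}$ (Definition~\ref{def:fatness-transfinite}, Lemma~\ref{prop:fatness-dim}), with your $Z_\alpha = S_\alpha\setminus S_{\alpha+1}$. You are right that localization is the crux, and your honest final paragraph names the real difficulty; here is how the paper fills it. It does \emph{not} try to commit to a piece $Z_{\alpha^*}$ and then separate it from its neighbors by a payoff gap. Instead, each phase (i) builds a coarse net $\mathcal{N}$ of all of $X$, calls a \emph{depth oracle} on the collection of $\mathcal{N}$-balls whose sample average is within $r$ of the maximum, and records the returned maximum-depth point as the depth estimate $y^*_i$; for clean, sufficiently late phases this depth $\lambda$ equals that of some optimal strategy; and (ii) forms $B$ as the union of $\mathcal{N}$-balls whose sample average falls more than $2(r_T+r)$ below the max, and takes as the active set for \NaiveExp\ a $\delta$-net of $S_\lambda\setminus B$ (using the previous phase's $\lambda$). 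The point is that for the fixed instance $\sup(\mu,S_{\lambda+1})<\mu^*$, so once the phase is large enough $B$ is an open set containing $S_{\lambda+1}$ while missing some optimal point; Definition~\ref{def:fatness-transfinite}(c) then gives $\LCD(S_\lambda\setminus B)\le b$ directly, and Lemma~\ref{lm:ULproblem-recap}(a) applies. This sidesteps your worry entirely: there is no need for a payoff gap between $Z_{\alpha^*}$ and adjacent pieces, and nothing is summed over transfinitely many ranks---the pre-localization cost is just $O_{\mathcal I}(1)$ bad phases.
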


For the lower bound, we use a suitably ``thick'' version of the ball-tree from Section~\ref{sec:lower-bound} in conjunction with the $(\eps,\delta,k)$-ensemble
idea from Section~\ref{sec:lower-bound}, see Section~\ref{sec:MaxMinLCD-LB}. For the algorithmic result, we combine the ``naive" experts algorithm (\NaiveExp) with (an extension of) the \emph{transfinite fat decomposition} technique from~\cite{LipschitzMAB-stoc08}, see Section~\ref{sec:MaxMinLCD-UB}.

The lower bound in Theorem~\ref{thm:experts-MaxMinLCD-body} holds for the \ULproblem. It follows that the upper bound in Theorem~\ref{thm:ulproblem-naive} is optimal for metric spaces such that $\MaxMinLCD(X) = \LCD(X)$, e.g. for \eps-uniform tree metrics. In fact, we can plug the improved analysis of \NaiveExp\ from Theorem~\ref{thm:ulproblem-naive} into the algorithmic technique from Theorem~\ref{thm:experts-MaxMinLCD-body} and obtain a matching upper bound in terms of the \MaxMinLCD. Thus (in conjunction with Theorem~\ref{thm:main-experts}) we have a complete characterization for regret:

\OMIT{ 
The extension to the corresponding result for $\MaxMinLCD$ (the upper bound in Theorem~\ref{thm:experts-MaxMinLCD-unif}) proceeds exactly as in the proof of Theorem~\ref{thm:experts-MaxMinLCD}, except we use a more efficient analysis of \NaiveExp. We omit the details from this version.
} 

\begin{theorem}\label{thm:experts-MaxMinLCD-unif-body}
Consider the \ULproblem\ with full feedback. Fix a metric space $(X,d)$ with uncountably many points, and let $b=\MaxMinLCD(X)$. The problem on $(X,d)$ is $(t^\gamma)$-tractable for any
    $\gamma> \max(\tfrac{b-1}{b},\, \tfrac12)$,
and not $(t^\gamma)$-tractable for any
    $\gamma < \max(\tfrac{b-1}{b},\, \tfrac12)$.
\end{theorem}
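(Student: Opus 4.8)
The plan is to derive both directions by interpolating between the two extreme regimes through the invariant $b=\MaxMinLCD(X)$, reusing machinery already in place. After reducing to a complete metric space (Appendix~\ref{sec:reduction}) we may assume $(X,d)$ is complete; the substantive case is when it is compact, the non-pre-compact case being completely intractable by Theorem~\ref{thm:boundary-of-tractability}.

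\emph{Lower bound.} I would combine two separate lower bounds and keep whichever is stronger. First, the lower bound of Theorem~\ref{thm:experts-MaxMinLCD-body} is already established for the \ULproblem, so it directly rules out $(t^\gamma)$-tractability for every $\gamma<\tfrac{b-1}{b}$. Second, to cover the regime $b<2$ (where $\tfrac{b-1}{b}<\tfrac12$) I would invoke uncountability: once $(X,d)$ is complete and compact, Lemma~\ref{lm:topological-equivalence} shows that an uncountable $X$ contains a perfect subspace, so Theorem~\ref{thm:lower-bound} applies and yields non-$g$-tractability for all $g\in o(\sqrt t)$. Crucially, the payoff functions $\payoff=\tfrac12+\sum_w \mathrm{sign}(w)\,F_w$ constructed there are themselves Lipschitz, so this $\sqrt t$ lower bound is valid for the \emph{uniformly} Lipschitz version and excludes $\gamma<\tfrac12$. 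Together the two bounds give non-tractability for every $\gamma<\max(\tfrac{b-1}{b},\tfrac12)$.

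\emph{Upper bound.} The idea is to run the algorithm behind the upper bound of Theorem~\ref{thm:experts-MaxMinLCD-body}, but feed it the sharper per-piece guarantee available in the uniformly Lipschitz setting. The identity $\MaxMinLCD(X)=b$ means precisely that every subset $Y\subseteq X$ has a nonempty relatively open $Z\subseteq Y$ with $\LCD(Z)<b'$ for any fixed $b'>b$; this is the hypothesis driving the transfinite fat decomposition of Section~\ref{sec:MaxMinLCD-UB}, which reduces the problem on $(X,d)$ to running \NaiveExp on a transfinite family of pieces, each of log-covering dimension below $b'$. The only change I would make is to replace the regret analysis of \NaiveExp from Theorem~\ref{thm:ffproblem-naive} (exponent $\tfrac{b'+1}{b'+2}$) by the improved analysis of Theorem~\ref{thm:ulproblem-naive} (per-piece exponent $1-1/\max(b',2)$), which is legitimate because each piece inherits the uniform-Lipschitz property of the instance. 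Letting $b'\downarrow b$ then gives $(t^\gamma)$-tractability for every $\gamma>1-1/\max(b,2)=\max(\tfrac{b-1}{b},\tfrac12)$, matching the lower bound; when $b<2$ one simply fixes $b'=2$, so the per-piece exponent never drops below $\tfrac12$.

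\emph{Main obstacle.} The delicate point is the composition of the sharper per-piece bound with the transfinite decomposition: I must verify that the exponent $1-1/\max(b',2)$ survives the bookkeeping across transfinitely many layers, i.e. that the dominant regret still comes from a single layer of dimension $\approx b$ rather than accumulating over the decomposition. As in Section~\ref{sec:MaxMinLCD-UB} this should follow from the instance-dependent constant in $O_{\mathcal I}(t^\gamma)$ absorbing the (finite) ordinal depth at which the optimum sits, together with the fact that each \NaiveExp instance is run only on the relatively open frontier left after the earlier layers are exhausted. Confirming that the scaled tail bound~\eqref{eq:thm-ulproblem-naive-Chernoff} and the clean-phase analysis of Theorem~\ref{thm:ulproblem-naive} go through verbatim when restricted to each relatively open piece is the bulk of the remaining, largely routine, work.
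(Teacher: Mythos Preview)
Your proposal is correct and follows essentially the same approach as the paper: the lower bound combines the \MaxMinLCD\ lower bound from Section~\ref{sec:MaxMinLCD-LB} (which is stated for the \ULproblem) with the $\sqrt{t}$ lower bound from Theorem~\ref{thm:lower-bound} via uncountability and Lemma~\ref{lm:topological-equivalence}; the upper bound runs the transfinite-decomposition algorithm of Section~\ref{sec:MaxMinLCD-UB} but with the sharper per-piece analysis of \NaiveExp\ from Theorem~\ref{thm:ulproblem-naive} (packaged as Lemma~\ref{lm:ULproblem-recap}(b) with $\delta=T^{-1/b}$). The paper states this in one line after the theorem, and your identification of the main obstacle---that the improved exponent survives the transfinite bookkeeping---is exactly what the sketch in Section~\ref{sec:MaxMinLCD-UB} addresses.
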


The proof of the upper bound in Theorem~\ref{thm:experts-MaxMinLCD-unif-body} proceeds exactly that in Theorem~\ref{thm:experts-MaxMinLCD-body}, except that we use a more efficient analysis of \NaiveExp.

\subsection{The $\MaxMinLCD$ lower bound: proof for Theorem~\ref{thm:experts-MaxMinLCD-unif-body}}
\label{sec:MaxMinLCD-LB}

If $\MaxMinLCD(X) = d,$ and $\gamma < \tfrac{d-1}{d},$
let us first choose $b < c < d$ such that
$\gamma < \tfrac{b-1}{b}$.  Let $Y \subseteq X$
be a subspace such that $c \leq \inf \{\LCD(Z) : \mbox{open,
nonempty } Z \subseteq Y \}.$
We will repeatedly use the following packing lemma that
relies on the fact that $b < \LCD(U)$ for all
nonempty $U \subseteq Y$.

\begin{lemma} \label{lem:packing}
For any nonempty
open $U \subseteq Y$ and any $r_0>0$ there exists
$r \in (0, r_0)$ such that $U$ contains more than $2^{r^{-b}}$
disjoint balls of radius $r$.
\end{lemma}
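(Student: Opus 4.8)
The plan is to convert a lower bound on the \emph{covering} numbers of $U$ (which the hypothesis $b<\LCD(U)$ supplies) into a lower bound on the \emph{packing} number, i.e. the number of disjoint radius-$r$ balls that fit inside $U$. The bridge between the two is the standard device of a maximal separated set.

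First I would unpack the definition of $\LCD$. Fix any $b'$ with $b<b'<c$. Since $U$ is a nonempty open subset of $Y$, the choice of $Y$ gives $\LCD(U)\ge c>b'$, so by definition of the $\limsup$ in~\refeq{eq:CovDim} there are arbitrarily small $\delta>0$ for which $\tfrac{\log\log N_\delta(U)}{\log(1/\delta)}>b'$. Reading all logarithms in base $2$ (the value of $\LCD$ is insensitive to the base, since changing it only perturbs the numerator by an additive constant that vanishes in the limit), this rearranges to $N_\delta(U)>2^{\delta^{-b'}}$. Thus I obtain a sequence $\delta\to 0$ along which the covering number grows doubly exponentially with exponent $b'$.

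Next I would pass from covering to packing. Given $r>0$, let $S\subseteq U$ be a maximal subset whose points are pairwise more than $2r$ apart. The open balls $\{B(x,r):x\in S\}$ are pairwise disjoint (two overlapping radius-$r$ balls would have centers within distance $2r$), so $S$ exhibits $|S|$ disjoint balls of radius $r$. On the other hand, maximality forces the closed balls $\{\overline{B(x,2r)}:x\in S\}$ to cover $U$, and each has diameter at most $4r$; hence $|S|\ge N_{4r}(U)$. Setting $r=\delta/4$ for a $\delta$ from the previous paragraph gives $|S|\ge N_{\delta}(U)>2^{\delta^{-b'}}=2^{(4r)^{-b'}}$. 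Finally, because $b'>b$ we have $r^{b'-b}\le 4^{-b'}$ for all sufficiently small $r$, which is exactly the inequality $(4r)^{-b'}\ge r^{-b}$; combining these, $|S|>2^{r^{-b}}$. Since the admissible $\delta$ are arbitrarily small, I can simultaneously ensure $r=\delta/4<r_0$, completing the count.

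The step I expect to require the most care is making the exhibited balls genuinely lie \emph{inside} $U$ rather than merely having their centers in $U$, since a maximal separated set may place centers near the boundary of $U$. I would handle this by shrinking at the outset: pick $p\in U$ and $\eta>0$ with $B(p,\eta)\subseteq U$, set $\rho=\eta/2$ and $U'=B(p,\rho)$, so that $B(p,2\rho)=B(p,\eta)\subseteq U$. As a nonempty open subset of $Y$, $U'$ still satisfies $\LCD(U')\ge c$, and running the packing argument inside $U'$ with radius $r\le\rho$ guarantees $B(x,r)\subseteq B(p,2\rho)\subseteq U$ for every center $x\in U'$. A secondary point is bookkeeping with the $\limsup$: one must verify that the exponent comparison $(4r)^{-b'}\ge r^{-b}$ and the constraint $r<r_0$ can both be met by a single small $\delta$, which holds because each is a threshold condition of the form ``$r$ sufficiently small'' and the qualifying $\delta$ accumulate at $0$.
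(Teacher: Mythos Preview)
Your argument is correct and is essentially the paper's approach: choose $r$ small enough that the covering number of $U$ at a scale comparable to $r$ exceeds $2^{r^{-b}}$, then invoke a maximal $r$-packing whose doubled balls cover $U$ to convert the covering lower bound into a packing lower bound. The paper compresses the step you handle via the intermediate exponent $b'$ into the single sentence ``such an $r$ exists, because $\LCD(U)>b$,'' and it does not explicitly address containment of the packed balls in $U$; your treatment of both points (the $b'$ slack to absorb the constant factor, and the shrink-to-$U'=B(p,\rho)$ trick) makes the argument cleaner but does not change its substance.
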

\begin{proof}
Let $r < r_0$ be a positive number such that every
covering of $U$ requires more than $2^{r^{-b}}$ balls of
radius $2r$.  Such an $r$ exists, because
$\LCD(U) > b$.
Now let $\mathcal{P} = \{B_1,B_2,\ldots,B_M\}$ be any
maximal collection of disjoint $r$-balls.  For
every $y \in Y$ there must exist some ball $B_i
\; (1 \leq i \leq M)$ whose center is within
distance $2r$ of $y$, as otherwise $B(y,r)$
would be disjoint from every element of $\mathcal{P}$
contradicting the maximality of that collection.
If we enlarge each ball $B_i$ to a ball $B_i^+$
of radius $2r$, then every $y \in Y$ is contained
in one of the balls $\{B_i^+ \,|\, 1 \leq i \leq M\}$,
i.e. they form a covering of $Y$.  Hence
$M \geq 2^{r^{-b}}$ as desired.
\end{proof}

Using the packing lemma we recursively
construct an infinite sequence of sets
$\mathcal{B}_0, \mathcal{B}_1, \ldots$ each consisting of
finitely many disjoint open balls of equal
radius $r_i$ in $Y$.  Let $\mathcal{B}_0 = \{Y\}$
and let $r_0=1/4$.
If $i>0$, let $r_i < r_{i-1}/4$ be a positive number
small enough that for every  ball $B=B(x,r_{i-1}) \in \mathcal{B}_{i-1}$,
the sub-ball $B(x,r_{i-1}/2)$ contains
$n_i = \lceil 2^{r_i^{-b}} \rceil$ disjoint balls
of radius $r_i$.
Let $\mathcal{B}_i(B)$ denote this collection of
disjoint balls and let
$\mathcal{B}_i = \bigcup_{B \in \mathcal{B}_{i-1}}
\mathcal{B}_i(B).$
For each ball $B=B(x,r) \in \mathcal{B}_{i}$,
define a ``bump function'' supported in $B$ by
    $$ G_B(x) = \begin{cases}
         \min\{ r - d(x,s_i), r/2 \} & \mbox{if $x \in B$} \\
         0 & \mbox{otherwise}.
       \end{cases} $$
Let $\mathcal{B}_{\infty} = \cup_{i=0}^{\infty} \mathcal{B}_i$.
Note that $\mathcal{B}_{\infty}$ is analogous to the ball-tree defined in Section~\ref{sec:lower-bound}.

For a mapping $\sigma \,:\, \mathcal{B}_{\infty} \rightarrow
\{ \pm 1 \},$ define a function $\payoff \,:\, X \rightarrow
[0,1]$ by
\begin{equation} \label{eq:mmlcdlb-payoff}
\payoff(x) = \tfrac12 + \textstyle{\sum_{B \in \mathcal{B}_{\infty}}}
\sigma(B) G_B(x).
\end{equation}
The sum converges absolutely because every
$x \in X$ belongs to at most ball in $\mathcal{B}_i$
for each $i$, so the absolute value of the infinite
sum in \eqref{eq:mmlcdlb-payoff} is bounded
above by $\sum_{i=0}^{\infty} r_i < 1/3$.  Moreover,
one can verify that our construction ensures that
$\payoff(x)$ is a Lipschitz function of $x$ with
Lipschitz constant $1$.

If $\mathcal{Q}$ is any subset of $\mathcal{B}_{\infty}$,
one can define a problem instance of the \FFproblem: a distribution $\prob_{\mathcal{Q}}$
on payoff functions $\pi$ by sampling $\sigma(B) \in \{ \pm 1 \}$
uniformly at random for $B \not\in \mathcal{Q}$
and performing biased sampling of $\sigma(B) \in \{ \pm 1 \}$
with $E[\sigma(B)] = 1/3$ when $B \in \mathcal{Q}$,
and then defining $\pi$ using \eqref{eq:mmlcdlb-payoff}.
Note that the distribution $\prob_{\mathcal{Q}}$ has expected
payoff function $\mu = \tfrac12 + \sum_{B \in \mathcal{Q}} G_B/3.$

In proving the lower bound, we will consider the
distribution $\mathcal{P}$ on Lipschitz experts
problem instances $\prob_{\mathcal{Q}}$ where
$\mathcal{Q}$ is a random subset of
$\mathcal{B}_{\infty}$ obtained by
sampling one ball
$B_0 \in \mathcal{B}_0$
uniformly at random, and also sampling
one element $Q(B) \in \mathcal{B}_i(B)$
uniformly at random and independently
for each $B \in \mathcal{B}_{i-1}$.
By analogy with the notion of complete
lineage defined in Section~\ref{sec:lower-bound},
we will refer to any such set $\mathcal{Q}$
as a complete lineage in $\mathcal{B}_{\infty}$.
Given a complete lineage $\mathcal{Q}$,
we can define an infinite nested sequence
of balls $B_0 \supset B_1 \supset \cdots$
by specifying that $B_{i+1} = Q(B_i)$
for each $i$.

If $\mu$ is the expectation
of a random payoff function  $\pi$ sampled from $\prob_{\mathcal{Q}}$,
then $\mu$ achieves its maximum value
$\tfrac12 + \tfrac16 \sum_{i=0}^{\infty} r_i$
at the unique point $x^* \in \cap_{i=0}^{\infty} B_i$.
At any point $x \not\in B_j$, we have
\[
\mu(x^*) - \mu(x) \;
    \geq \; \textstyle{ \left(\tfrac16\, \sum_{i=j}^{\infty} r_i\right)}
  -  \textstyle{ \left( \tfrac14\, \sum_{i=j+1}^{\infty} r_i \right) \; = \; \tfrac16\, r_j.}
\]
We now finish the lower bound proof as in the proof
of Lemma~\ref{lm:ball-tree-LB}.
For each complete lineage $\mathcal{Q}$ and
ball $B \in \mathcal{B}_{i-1}$, let
$B^1,B^2,\ldots,B^{n_i}$ be the elements
of $\mathcal{B}_i(B)$.  Consider the
sets
   $\mathcal{Q}_0 = \mathcal{Q} \setminus Q(B)$
and
   $\mathcal{Q}_j = \mathcal{Q}_0 \cup \{B^j\}$
for $j=1,2,\ldots,n_i$.  The distributions
$\left(
\prob_{\mathcal{Q}_0},\prob_{\mathcal{Q}_1},\ldots,\prob_{\mathcal{Q}_{n_i}}
\right)$
constitute an $(\eps,\delta,k)$-ensemble
for $\eps=\tfrac{1}{6}r_k$, $\delta=\tfrac{1}{2},$
and $k=n_i$.
Consequently, for $t_i = r_i^{-b}$, the inequality
$t_i < \ln(17k)/2\delta^2$ holds, and we
obtain a lower bound of
  $$ R_{(\A, \, \prob_{\mathcal{Q}_j})}(t_i) > \eps\, t_i / 2 = \Omega( r_i^{1-b})
 = \Omega(t_i^{(b-1)/b}) $$
for at least half of the distributions $\prob_{\mathcal{Q}_j}$
in the ensemble.  Recalling that $\gamma < \tfrac{b-1}{b}$,
we see that the problem is not $t^{\gamma}$-tractable.

\subsection{The $\MaxMinLCD$ upper bound: proofs for
Theorem~\ref{thm:experts-MaxMinLCD-body} and Theorem~\ref{thm:experts-MaxMinLCD-unif-body}
}
\label{sec:MaxMinLCD-UB}

\newcommand{\NaiveSample}{\ensuremath{\text{{\sc NaiveSample}}}}

First, let us incorporate the analysis from Section~\ref{sec:FFproblem} via the following lemma.

\begin{lemma}\label{lm:ULproblem-recap}
Consider an instance $(X,d,\prob)$ of the \FFproblem, and let $x^*\in X$ be an optimal point. Fix subset $U\subset X$ which contains $x^*$, and let  $b>\LCD(U)$. Then for any sufficiently large $T$ and $\delta = T^{-1/(b+2)}$ the following holds:
\begin{itemize}
\item[(a)] Let $S$ be a $\delta$-hitting set for $U$ of cardinality $|S|\leq N_\delta(U)$. Consider the feedback of all points in $S$ over $T$ rounds; let $x$ be the point in $S$ with the largest sample average (break ties arbitrarily). Then
	$$ \Pr[\mu(x^*) - \mu(x) <O(\delta\log T)]> 1-T^{-2}.$$

\item[(b)] For a \ULproblem\ and $b\geq 2$, property (a) holds for $\delta = T^{-1/b}$.

\end{itemize}
\end{lemma}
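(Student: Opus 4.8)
The plan is to recognize that both parts are the per-phase analyses already carried out in Theorem~\ref{thm:ffproblem-naive} and Theorem~\ref{thm:ulproblem-naive}, now \emph{localized} to the subset $U$. Those proofs only ever used that the sample $S$ covers the relevant optimum and that $|S|$ is controlled by the appropriate covering number, so replacing $X$ by $U$ and $N_\delta(X)$ by $N_\delta(U)$ leaves the arguments intact. Throughout write $N_\delta = N_\delta(U)$, and recall that $b>\LCD(U)$ forces $N_\delta < 2^{\delta^{-b}}$ for all sufficiently large $T$.

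For part (a) I would reproduce the argument of Theorem~\ref{thm:ffproblem-naive} with $U$ in place of $X$. Let $\mu_T(z)$ denote the sample average of $z\in S$ after $T$ rounds. By Chernoff bounds together with a union bound over the at most $N_\delta$ points of $S$, with probability at least $1-N_\delta\,(T N_\delta)^{-3} > 1-T^{-2}$ every $z\in S$ satisfies $|\mu_T(z)-\mu(z)| < r_T$ with $r_T = \sqrt{8\log(T N_\delta)/T}$; the choice $\delta=T^{-1/(b+2)}$ is calibrated precisely so that $\log N_\delta < \delta^{-b}=T^{b/(b+2)}$ dominates $\log T$ and hence $r_T=O(\delta)$. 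Since $x^*\in U$ and $S$ is a $\delta$-hitting set for $U$, some $x'\in S$ has $d(x^*,x')\leq\delta$, and the Lipschitz chain $\mu(x)\geq \mu_T(x)-r_T \geq \mu_T(x')-r_T \geq \mu(x')-2r_T \geq \mu(x^*)-\delta-2r_T$ (using that $x$ maximizes the sample average) gives $\mu(x^*)-\mu(x)=O(\delta)$, which is $O(\delta\log T)$. The $\log T$ slack in the statement is not needed here; it is included only to match part (b).

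For part (b) the union bound of part (a) is hopeless: at $\delta=T^{-1/b}$ the cardinality $N_\delta$ can be as large as $2^{T}$, so $r_T=\Theta(1)$. Instead I would import the covering-tree machinery from the proof of Theorem~\ref{thm:ulproblem-naive}, built over $U$. Construct a covering tree $\mathcal{T}$ of $U$ whose level-$j$ nodes $u$ index sets $X(u)\subset U$ of diameter at most $2^{-j}$ that partition their parent, with fan-out $N_{2^{-j}}(X(u))$; the hypothesis $b>\LCD(U)$ controls the level sizes through $|\mathcal{T}_l|\leq 2^{2^{lb}}|\mathcal{T}_{l-1}|$. Using the distance-scaled Chernoff bound~\refeq{eq:thm-ulproblem-naive-Chernoff} --- valid because each $f\in\mathtt{support}(\prob)$ is Lipschitz, which is exactly the \ULproblem\ hypothesis --- one shows the phase is \emph{clean} with probability at least $1-T^{-2}$, with $n=\cel{\log(1/\delta)}$ and the slack staying below $4\delta$ precisely because $b\geq 2$. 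The inductive Claim~\ref{cl:thm-ulproblem-naive} then gives $\Delta^*(u_0)=O(\delta)\,n=O(\delta\log T)$ at the root $u_0$, i.e. $\mu(\argmax(\mu,S))-\mu(\argmax(\nu,S))\leq O(\delta\log T)$. The one genuinely new step is to relate the in-sample optimum to $x^*$: since $x^*\in U$ is covered by some $x'\in S$, we have $\mu(\argmax(\mu,S))\geq\mu(x')\geq\mu(x^*)-\delta$, and combining yields $\mu(x^*)-\mu(\argmax(\nu,S))\leq O(\delta\log T)$, as required.

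The main difficulty is bookkeeping rather than a new idea: all the probabilistic content already resides in Theorems~\ref{thm:ffproblem-naive} and~\ref{thm:ulproblem-naive}, and the task is to verify that their arguments localize cleanly when $X$ is replaced by an arbitrary $U\ni x^*$ --- in particular that the covering tree in part (b) is built over $U$, that the fan-out and clean-phase estimates go through with $\LCD(U)$ in place of $\LCD(X)$, and that the slack inequality still forces $b\geq 2$. The only substantively new (and trivial) ingredient is the final step linking $\argmax(\mu,S)$ to the true optimum $x^*$ via a nearby hitting-set point.
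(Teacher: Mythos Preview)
Your proposal is correct and matches the paper's intent exactly. The paper does not give a standalone proof of this lemma; it is introduced with the sentence ``let us incorporate the analysis from Section~\ref{sec:FFproblem} via the following lemma,'' signalling precisely what you do---lift the per-phase arguments of Theorems~\ref{thm:ffproblem-naive} and~\ref{thm:ulproblem-naive} to an arbitrary $U\ni x^*$, replace $N_\delta(X)$ by $N_\delta(U)$, and append the trivial step $\mu(\argmax(\mu,S))\geq\mu(x^*)-\delta$ via a nearby hitting point. Your observation that the covering-tree fan-out bounds in part~(b) go through because $N_{2^{-j}}(X(u))\leq N_{2^{-j}}(U)$ whenever $X(u)\subset U$ is the only point worth making explicit, and you have it.
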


\xhdr{Transfinite LCD decomposition.}
We redefine the \emph{transfinite fat decomposition} from~\cite{LipschitzMAB-stoc08} with respect to the log-covering dimension rather than the covering dimension.

\begin{definition}\label{def:fatness-transfinite}
Fix a metric space $(X,d)$. Let $\beta$ denote an arbitrary ordinal.
A \emph{transfinite LCD decomposition} of depth $\beta$ and dimension $b$ is a  transfinite sequence
	$\{S_\lambda\}_{0 \leq \lambda \leq \beta}$
of closed subsets of $X$ such that:
\begin{OneLiners}
\item[(a)] $S_0 = X$, $S_\beta = \emptyset$, and
	$S_\nu \supseteq S_\lambda$ whenever $\nu < \lambda$.
\item[(b)] if $V\subset X$ is closed, then the set
    $\{\text{ordinals } \nu \leq \beta$:\, $V \mbox{ intersects } S_\nu \}$
has a maximum element.
\item[(c)] for any ordinal $\lambda \leq \beta$ and any open set
$U\subset X$ containing $S_{\lambda+1}$ we have
	$\LCD(S_\lambda \setminus U) \leq b$.
\end{OneLiners}
\end{definition}

The existence of suitable decompositions and the connection to $\MaxMinLCD$ is derived exactly as in Proposition 3.15 in~\cite{LipschitzMAB-stoc08}.

\begin{lemma} \label{prop:fatness-dim}
For every compact metric space $(X,d)$, $\MaxMinLCD(X)$ is equal to the infimum of all $b$ such that $X$ has a transfinite LCD decomposition of dimension $b$.
\end{lemma}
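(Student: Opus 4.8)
The plan is to prove the two inequalities $\MaxMinLCD(X)\le b^*$ and $b^*\le\MaxMinLCD(X)$ separately, where $b^*$ denotes the infimum of the dimensions of all transfinite LCD decompositions of $X$. Throughout I would use three elementary facts about the log-covering dimension: it is monotone under inclusion ($A\subseteq B\Rightarrow\LCD(A)\le\LCD(B)$, since $N_\delta$ is monotone); it satisfies $\LCD(A_1\cup\cdots\cup A_k)=\max_i\LCD(A_i)$ for \emph{finite} unions (because $N_\delta(\bigcup_i A_i)\le\sum_i N_\delta(A_i)$ and the extra additive $\log k$ is swamped inside the outer $\log\log$); and that every closed subset of the compact space $X$ is itself compact.

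For the inequality $b^*\le\MaxMinLCD(X)$, I would fix any $b>\MaxMinLCD(X)$ and build a decomposition of dimension $b$ by transfinite recursion. Set $S_0=X$; given a nonempty closed $S_\lambda$, let $W_\lambda$ be the union of all nonempty relatively-open $Z\subseteq S_\lambda$ with $\LCD(Z)\le b$, and put $S_{\lambda+1}=S_\lambda\setminus W_\lambda$; at limit ordinals take intersections. Since $S_\lambda$ is an admissible choice of $Y$ in the definition of $\MaxMinLCD$ and $b>\MaxMinLCD(X)$, there is a nonempty relatively-open $Z\subseteq S_\lambda$ with $\LCD(Z)<b$, so $W_\lambda\neq\emptyset$ and the sequence strictly decreases until it hits $\emptyset$; a strictly decreasing transfinite chain of closed sets in a second-countable space must terminate at a countable ordinal $\beta$. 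Properties (a) and (b) are then immediate, with (b) following from the finite-intersection property of the nested compact sets $V\cap S_\nu$. The one nontrivial point is (c): for open $U\supseteq S_{\lambda+1}$ the set $S_\lambda\setminus U$ is a \emph{compact} subset of $W_\lambda$, hence is covered by finitely many of the relatively-open sets $Z$ making up $W_\lambda$, and the finite-union rule gives $\LCD(S_\lambda\setminus U)\le b$. This is exactly where compactness is indispensable.

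For the reverse inequality I would show that any decomposition of dimension $b$ forces $\MaxMinLCD(X)\le b$, i.e. that every $Y\subseteq X$ admits a nonempty relatively-open $Z\subseteq Y$ with $\LCD(Z)\le b$. First reduce to closed $Y$: replacing $Y$ by $\bar Y$ only enlarges the family of admissible relatively-open sets, and since $\bar Y\cap G\neq\emptyset$ implies $Y\cap G\neq\emptyset$, a good $Z$ for $\bar Y$ restricts to a good $Z$ for $Y$ by monotonicity. For closed $V$, I would argue by transfinite induction on the ordinal $\lambda$ maximal with $V\cap S_\lambda\neq\emptyset$ (well-defined by (b)). Applying (c) with $U=X\setminus V$ (which contains $S_{\lambda+1}$) gives $\LCD(A)\le b$ for $A:=V\cap S_\lambda$. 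If $A$ has nonempty interior in $V$, then $\mathrm{int}_V(A)$ is the desired $Z$. Otherwise $V\setminus A$ is open and dense, so I would choose a relatively-open ball $W\subseteq V\setminus A$ whose closure $\bar W$ also lies in $V\setminus A$; then $\bar W$ is a closed subspace whose maximal stratum index is strictly below $\lambda$, so the induction hypothesis supplies a relatively-open $Z'\subseteq\bar W$ with $\LCD(Z')\le b$, and intersecting $Z'$ with the dense set $W$ transports it to a nonempty relatively-open subset of $V$ with $\LCD\le b$.

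The main obstacle, and the step I would treat most carefully, is this reverse-inequality direction: the tension between the ``fat'' (relatively-open) sets on which a \emph{lower} bound on $\LCD$ is hypothesized, and the ``thin'' top stratum $V\cap S_\lambda$, which is the only set that property (c) directly controls. The transfinite induction together with the interior-versus-nowhere-dense dichotomy is precisely the device that bridges this gap, and one must take care to transport relative openness correctly through the closures $\bar W$ and $\bar Y$. The overall architecture mirrors Proposition~3.15 of~\cite{LipschitzMAB-stoc08}, with $\LCD$ in place of $\Cov$; the only genuinely new verification is that the monotonicity and finite-union laws that proof relies on for the covering dimension continue to hold for the log-covering dimension, which they do by the $\log\log$ computation noted above.
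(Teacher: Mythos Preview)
Your proposal is correct and follows precisely the approach the paper intends: the paper does not give a proof but defers entirely to Proposition~3.15 of~\cite{LipschitzMAB-stoc08}, and your argument reproduces that construction with the one additional verification---which you carry out correctly---that monotonicity and the finite-union law for the covering dimension continue to hold for $\LCD$ (the extra additive $\log k$ being absorbed by the outer logarithm).
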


\newcommand{\DpthOracle}{{\mathtt{Depth}}}
\newcommand{\DCovOracle}{{\mathtt{Cover}}}
\newcommand{\ve}{\varepsilon}
\newcommand{\PhaseAlg}{\ensuremath{\A_{\mathtt{ph}}}}
\newcommand{\rank}{\ensuremath{\mathtt{depth}}}


In what follows, let us fix metric space $(X,d)$ and $b>\MaxMinLCD(X)$, and let
	$\{S_\lambda\}_{0 \leq \lambda \leq \beta}$
be a transfinite LCD decomposition of depth $\beta$ and dimension $b$. For each $x\in X$, let the \emph{depth} of $x$ be the maximal ordinal $\lambda$ such that $x\in S_\lambda$. (Such an ordinal exists by Definition~\ref{def:fatness-transfinite}(b).)

\xhdr{Access to the metric space.}
The algorithm requires two oracles: the \emph{depth oracle} $\DpthOracle(\cdot)$ and the \emph{covering oracle} $\DCovOracle(\cdot)$. Both oracles input a finite collection $\F$  of open balls $B_0, B_1, \ldots, B_n$, given via the centers and the radii, and return a point in $X$. Let $B$ be the union of these balls, and let $\overline{B}$ be the closure of $B$. A call to oracle $\DpthOracle(\F)$ returns an arbitrary point $x\in \overline{B} \cap S_\lambda$, where $\lambda$ is the maximum ordinal such that $S_{\lambda}$ intersects $\overline{B}$. (Such an ordinal exists by Definition~\ref{def:fatness-transfinite}(b).) Given a point $y^*\in X$ of depth $\lambda$, a call to oracle $\DCovOracle(y^*,\F)$ either reports that $B$ covers $S_{\lambda}$, or it returns an arbitrary point $x \in S_{\lambda} \setminus B$. A call to $\DCovOracle(\emptyset, \F)$ is equivalent to the call $\DCovOracle(y^*,\F)$ for some $y^*\in S_0$.

The covering oracle will be used to construct $\delta$-nets as follows. First, using successive calls to $\DCovOracle(\emptyset, \F)$ one can construct a $\delta$-net for $X$. Second, given a point $y^*\in X$ of depth $\lambda$ and a collection of open balls whose union is $B$, using successive calls to $\DCovOracle(y^*,\,\cdot)$ one can construct a $\delta$-net for $S_\lambda \setminus B$. The second usage is geared towards the scenario when $S_{\lambda+1} \subseteq B$ and  for some optimal strategy $x^*$ we have
	$x^*\in S_\lambda \setminus B$.
Then by Definition~\ref{def:fatness-transfinite}(c) we have
	$\LCD(S_\lambda\setminus B)<b$,
and one can apply Lemma~\ref{lm:ULproblem-recap}.

\xhdr{The algorithm.}
Our algorithm proceeds in phases $i=1,2,3,\ldots$ of $2^i$ rounds each. Each phase $i$ outputs two strategies:  $x^*_i, y^*_i\in X$ that we call the \emph{best guess} and the \emph{depth estimate}. Throughout phase $i$, the algorithm plays the best guess $x^*_{i-1}$ from the previous phase. The depth estimate $y^*_{i-1}$ is used ``as if" its depth is equal to the depth of some optimal strategy. (We show that for a large enough $i$ this is indeed the case with a very high probability.)

In the end of the phase, an algorithm selects a finite set $A_i\subset X$ of \emph{active points}, as described below. Once this set is chosen, $x^*_i$  is defined simply as a point in $A_i$ with the largest sample average of the feedback (breaking ties arbitrarily). It remains to define $y^*_i$ and $A_i$ itself.

Let $T=2^i$ be the phase duration. Using the covering oracle, the algorithm constructs (roughly) the finest $r$-net containing at most $2^{\sqrt{T}}$ points. Specifically, the algorithm constructs  $2^{-j}$-nets $\mathcal{N}_j$, for $j = 0,1,2,\ldots$, until it finds the largest $j$ such that
	$\mathcal{N}_j$
contains at most $2^{\sqrt{T}}$  points. Let
	$r = 2^{-j}$ and $\mathcal{N} = \mathcal{N}_j$.

For each $x\in X$, let $\mu_T(x)$ be the sample average of the feedback during this phase. Let
\begin{align*}
 \Delta_T(x) &= \mu^*_T - \mu_T(x),
        \text{~~~where~~~}
        \mu^*_T = \max(\mu_T, \mathcal{N})
\end{align*}

\noindent Define the depth estimate $y^*_i$ to be the output of the oracle call $\DpthOracle(\F)$, where
$$ \F = \{ B(x,r):\; x\in \mathcal{N} \text{~~and~~} \Delta_T(x)< r \}.
$$

Finally, let us specify $A_i$. Let $B$ be the union of balls
\begin{align}\label{eq:ffproblem-PMO-B}
\{ B(x,r):\; x\in \mathcal{N} \text{~~and~~} \Delta_T(x)> 2(r_T + r)\, \},
\end{align}
where
            $r_T = \sqrt{8\log(T\,|\mathcal{N}|)/T}$
is chosen so that by Chernoff Bounds we have
\begin{align}\label{eq:ffproblem-PMO-Chernoff}
 \Pr[ | \mu_T(x) - \mu(x)| <  r_T ] > 1-  (T \,|\mathcal{N}|)^{-3}
    \quad\text{for each $x\in \mathcal{N}$}.
\end{align}

\noindent Let  $\delta = T^{-1/b}$ for the \ULproblem, and
	$\delta = T^{-1/(b+2)}$
otherwise. Let
    $Q_T =  2^{\delta^{-b}} $
be the \emph{quota} on the number of active points. Given a point $y^*_{i-1}$ whose depth is (say) $\lambda$, algorithm uses the covering oracle to construct a $\delta$-net $\mathcal{N'}$ for $S_\lambda \setminus B$. Define $A_i$ as $\mathcal{N}'$ or an arbitrary $Q_T$-point subset thereof, whichever is smaller.\footnote{The interesting case here is $|\mathcal{N}'| \leq Q_T$.  If $\mathcal{N}'$ contains too many points, the choice of $A_i$ is not essential for the analysis.}

\xhdr{(Very high-level) sketch of the analysis.}
The proof roughly follows that of Theorem 3.16 in~\cite{LipschitzMAB-stoc08}.
Call a phase \emph{clean} if the event in~\refeq{eq:ffproblem-PMO-Chernoff} holds for all $x\in \mathcal{N}_i$ and the appropriate version of this event holds for all $x\in A_i$. (The regret from phases which are not clean is negligible). On a very high level, the proof consists of two steps. First we show that for a sufficiently large $i$, if phase $i$ is clean then the depth estimate $y^*_i$ is correct, in the sense that it is indeed equal to the depth of some optimal strategy. The argument is similar to the one in Lemma~\ref{lm:tractability}. Second, we show that for a sufficiently large $i$, if the depth estimate $y^*_{i-1}$ is ``correct" (i.e. its depth is equal to that of some optimal strategy), and phase $i$ is clean, then the ``best guess" $x^*_i$ is good, namely $\mu(x^*_i)$ is within $O(\delta log T)$ of the optimum. The reason is that, letting $\lambda$ be the depth of $y^*_{i-1}$, one can show that for a sufficiently large $T$ the set $B$ (defined in~\refeq{eq:ffproblem-PMO-B}) contains	$S_{\lambda+1}$ and does not contain some optimal strategy. By definition of the transfinite LCD decomposition we have
    $\LCD(S_\lambda \setminus U) < b$,
so in our construction the quota $Q_T$ on the number of active points permits $A_i$ to be a $\delta$-cover of $S_\lambda \setminus U$. Now we can use Lemma~\ref{lm:ULproblem-recap} to guarantee the ``quality" of $x^*_i$. The final regret computation is similar to the one in the proof of Theorem~\ref{thm:ffproblem-naive}.

\begin{small}
\bibliographystyle{plain}
\bibliography{bib-abbrv,bib-bandits,bib-math,bib-extras}
\end{small}

\appendix

\section{Reduction to compact metric spaces}
\label{sec:reduction}

In this section we reduce the Lipschitz MAB problem to that on complete metric spaces.

\begin{lemma}\label{lm:reduction}
The Lipschitz MAB problem on a metric space $(X,d)$ is $f(t)$-tractable if and only if it is $f(t)$-tractable on the completion of $(X,d)$. Likewise for the Lipschitz experts problem with double feedback.
\end{lemma}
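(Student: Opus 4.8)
The plan is to exploit that the completion $(X^*,d)$ contains $X$ as a dense isometric subspace, and that any $1$-Lipschitz $\mu:X\to[0,1]$ extends uniquely to a $1$-Lipschitz $\mu^*:X^*\to[0,1]$ with $\sup_{X^*}\mu^*=\sup_X\mu$ (density plus continuity of $\mu^*$). Thus the regret benchmark is identical on both spaces, and the only thing that changes in passing from $X$ to $X^*$ is the set of strategies the algorithm may name. I would prove the two implications separately.

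\emph{Easy direction ($X$-tractable $\Rightarrow$ completion-tractable).} Let $\mathcal{A}$ be $f$-tractable on $X$. Given an instance $\mathcal{I}^*$ on $X^*$ with Lipschitz mean $\mu^*$, run $\mathcal{A}$ verbatim. Since $\mathcal{A}$ only ever outputs points of $X\subseteq X^*$, every play $x$ receives a reward drawn from $\mathcal{I}^*$'s distribution at $x\in X$, whose mean is $\mu^*(x)$. Hence $\mathcal{A}$'s view coincides exactly with its view on the restricted instance $\mathcal{I}$ obtained by keeping only the arms in $X$ (for double-feedback experts, $\prob$ is the pushforward of $\prob^*$ under the restriction map $[0,1]^{X^*}\to[0,1]^X$); its mean $\mu^*|_X$ is $1$-Lipschitz, so $\mathcal{I}$ is legal. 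Because the optima coincide, $R_{\mathcal{A},\,\mathcal{I}^*}(t)=R_{\mathcal{A},\,\mathcal{I}}(t)=O_{\mathcal{I}}(f(t))$, witnessing $f$-tractability on $X^*$.

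\emph{Hard direction (completion-tractable $\Rightarrow$ $X$-tractable).} Let $\mathcal{A}^*$ be $f$-tractable on $X^*$. I would build an $X$-algorithm that simulates $\mathcal{A}^*$ and, whenever $\mathcal{A}^*$ names a point $x^*\in X^*$ (as a bet or a free peek), plays a point of $X$ within distance $\eps$ of $x^*$ (which exists by density), feeding the observed reward back to $\mathcal{A}^*$ as though it had come from $x^*$. Because payoffs are $1$-Lipschitz, each substitution costs at most $\eps$ in expected payoff per round; driving the substitution radius $\eps_s\downarrow 0$ makes the per-round loss vanish and the accumulated loss $O(f(t))$ for a suitable schedule. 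The bookkeeping then reads $R_{\mathcal{A},\,\mathcal{I}}(t)\le R^*(t)+\sum_{s\le t}\eps_s$, where $R^*(t)$ is the regret, measured against $\mu^*$, of the sequence of arms $\mathcal{A}^*$ requests.

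The crux — and the step I expect to be hardest — is certifying $R^*(t)=O(f(t))$, i.e. transferring $\mathcal{A}^*$'s guarantee to the simulated run. That guarantee applies only to genuine \emph{stationary} instances with an \emph{exactly} $1$-Lipschitz mean, whereas the stream fed to $\mathcal{A}^*$ is generated by substituted points: each arm's reward distribution is that of a nearby $X$-point, so its mean deviates from $\mu^*(x^*)$ by the substitution radius, and a time-varying radius even destroys stationarity. I would restore stationarity by fixing, once and for all, a measurable substitution map $\rho:X^*\to X$ with $d(\rho(x^*),x^*)$ small, so that the process $\mathcal{A}^*$ sees is the genuine stationary instance $\widetilde{\mathcal{I}}$ with per-arm distribution $\mathcal{D}_{\rho(x^*)}$ and mean $\widetilde\mu=\mu\circ\rho$. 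The remaining obstacle is that $\widetilde\mu$ is only \emph{approximately} Lipschitz — it can jump by the substitution radius over arbitrarily short distances near the new points — so $\widetilde{\mathcal{I}}$ is not literally a legal instance for $\mathcal{A}^*$. I would absorb this gap either by a phase-restart scheme, rerunning $\mathcal{A}^*$ with a geometrically shrinking substitution radius so each phase is, up to a mean error that telescopes, a legal instance, or by a coupling showing the fed stream agrees in law with a bona-fide Lipschitz instance up to a total-variation error summing to $o(f(t))$. The key simplification throughout is that regret depends on a played arm only through its $\mu^*$-value, so it suffices to track the Lipschitz slack in the \emph{means} rather than the shapes of the reward distributions. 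The double-feedback experts case is handled identically by substituting both the bet and the free peek; full feedback is deliberately excluded from this direction, since one cannot substitute for observations required at infinitely many unavailable points.
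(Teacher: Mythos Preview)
Your easy direction is fine and matches the paper.

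The hard direction has a genuine gap at exactly the place you flag. You correctly observe that the stream you feed to $\mathcal{A}^*$ is not a legal instance: with a fixed substitution map $\rho$, the mean $\widetilde\mu=\mu\circ\rho$ is only $1$-Lipschitz up to an additive $2\eps$, and with a time-varying radius the process is non-stationary. Neither of your proposed repairs closes this. The phase-restart scheme still presents $\mathcal{A}^*$, within each phase, with a mean function that violates the Lipschitz condition by a positive constant; a black-box $\mathcal{A}^*$ whose guarantee holds only for exactly-Lipschitz instances may behave arbitrarily on such inputs, so you cannot invoke its regret bound. The ``coupling up to total-variation $o(f(t))$'' idea is pointed in the right direction but, as written, does not go through: the reward you forward has the \emph{distribution} $\mathcal{D}_{\rho(x^*)}$, and closeness of means by $\eps$ does not control total variation or KL between $\mathcal{D}_{\rho(x^*)}$ and whatever distribution a bona-fide instance would produce at $x^*$.

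The paper handles this with two devices you are missing. First, it \emph{Bernoulli-resamples} the observed reward before reporting it: after seeing payoff $\pi$, it feeds $\mathcal{A}_Y$ an independent $\{0,1\}$ sample with mean $\pi$. This collapses every reward distribution to a Bernoulli, so the only relevant statistic is the mean, and two Bernoullis whose means differ by $\eps$ have KL of order $\eps^2$. Second, instead of trying to certify the simulated regret directly, the paper argues by contradiction via indistinguishability: take the time-varying substitution radius $2^{-t}$ so that the KL-divergence between the $t$-round history laws under the fed process $\mathcal{I}_2$ (Bernoulli with mean $\mu(x(y_t,t))$) and the legal instance $\mathcal{I}_1$ (Bernoulli with mean $\mu^*(y_t)$) is uniformly bounded (by the chain rule, $\sum_t O(4^{-t})<\infty$). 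If $\mathcal{A}_X$ failed to be $f(t)$-tractable on some $\mathcal{I}$, then $\mathcal{A}_Y$'s regret would be $O(f(t))$ on $\mathcal{I}_1$ but unbounded relative to $f(t)$ on $\mathcal{I}_2$, yielding a classifier that distinguishes two processes with bounded KL infinitely often---impossible. The Bernoulli resampling is what makes the KL computation tractable; without it your coupling sketch has no handle on the per-round divergence.
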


\begin{proof}
Let $(X,d)$ be a metric space with completion $(Y,d)$. Since $Y$ contain an isometric copy of $X$, we will abuse notation and consider $X$ as a subset of $Y$. We will present the proof the Lipschitz MAB problem; for the experts problem with double feedback, the proof is similar.

Given an algorithm $\A_X$ which is $f(t)$-tractable for $(X,d)$, we may use it as a Lipschitz MAB algorithm for $(Y,d)$ as well.  (The algorithm has the property that it never selects a point of $Y \setminus X$, but this doesn't prevent us from using it when the metric space is $(Y,d)$.)  The fact that $X$ is dense in $Y$ implies that for every Lipschitz payoff function $\mu$ defined on $Y$, we have
    $ \sup(\mu,X) = \sup(\mu,Y). $
From this, it follows immediately that the regret of $\A_X$, when considered a Lipschitz MAB algorithm for $(X,d)$, is the same as its regret when considered as a Lipschitz MAB algorithm for $(Y,d)$.

Conversely, given an algorithm $\A_Y$ which is $f(t)$-tractable for $(Y,d)$, we may design a Lipschitz MAB algorithm $\A_X$ for $(X,d)$ by running $\A_Y$ and perturbing its output slightly. Specifically, for each point $y\in Y$ and each $t\in \N$ we fix $x = x(y,t) \in X$ such that $d(x, y)<2^{-t}$. If $\A_Y$ recommends playing strategy $y_t \in Y$ at time $t$, algorithm $\A_X$ instead plays $x = x(y,t)$. Let $\pi$ be the observed payoff. Algorithm $\A_X$ draws an independent 0-1 random sample with expectation $\pi$, and reports this sample to $\A_Y$. This completes the description of the modified algorithm $\A_X$.

Suppose $\A_X$ is not $f(t)$-tractable. Then for some problem instance $\mathcal{I}$ on $(Y,d)$, letting $R_X(t)$ be the expected regret of $\A_X$ on this instance, we have that
    $\sup_{t\in\N} R_X(t)/f(t) =\infty$.
Let $\mu$ be the expected payoff function in $\mathcal{I}$.  Consider the following two problem instances of a MAB problem on $Y$, called $\mathcal{I}_1$ and $\mathcal{I}_2$, in which if point $y\in Y$ is played at time $t$, the payoff is an independent 0-1 random sample with expectation $\mu(y)$ and $\mu( x(y,t))$, respectively. Note that algorithm $\A_Y$ is $f(t)$-tractable on $\mathcal{I}_1$, and its behavior on $\mathcal{I}_2$ is identical to that of $\A_X$ on the original problem instance $\mathcal{I}$. It follows that by observing the payoffs of $\A_Y$ one can tell apart $\mathcal{I}_1$ and $\mathcal{I}_2$ with high probability. Specifically, there is a ``classifier" $\mathcal{C}$ which queries one point in each round, such that for infinitely many times $t$ it tell apart $\mathcal{I}_1$ and $\mathcal{I}_2$ with success probability $p(t) \to 1$. Now, the latter is information-theoretically impossible.

To see this, let $H_t$ be the $t$-round history of the algorithm (the sequence of points queried, and outputs received), and consider the distribution of $H_t$ under problem instances $\mathcal{I_1}$ and $\mathcal{I_2}$ (call these distributions $q_1$ and $q_2$). Let us consider  and look at their KL-divergence. By the chain rule (See Lemma~\ref{lem:kl}), we can show that $KL(q_1,q_2) <\tfrac12$. (We omit the details.) It follows that letting $S_t$ be the event that $\mathcal{C}$ classifies the instance as $\mathcal{I}_1$ after round $t$, we have
    $\mathbb{P}_{q_1}[S_t] - \mathbb{P}_{q_2}[S_t] \leq KL(q_1,q_2) \leq \tfrac12$.
For any large enough time $t$,
    $\mathbb{P}_{q_1}[S_t] <\tfrac14$,
in which case $\mathcal{C}$ makes a mistake (on $\mathcal{I}_2$) with constant probability.
\end{proof}

\OMIT{ 
If $\A_Y$ recommends playing strategy $y_t \in Y$ at time $t$, we instead find a point $x_t \in X$ such that $d(x_t,y_t) < 2^{-t}$ and we play $x_t$.  The regret of algorithm $\A_x$ exceeds that of $\A_Y$ by at most $\frac12 + \frac14 + \frac18 + \ldots = 1$, hence $\A_X$ is $f(t)$-tractable.
} 

\begin{lemma}\label{lm:reduction-2}
Consider The experts MAB problem with full feedback. If it is $f(t)$-tractable on a metric space $(X,d)$ then it is $f(t)$-tractable on the completion of $(X,d)$.
\end{lemma}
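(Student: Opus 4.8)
The plan is to adapt the easy direction of Lemma~\ref{lm:reduction} to the full-feedback setting; the converse implication genuinely fails in this model, since there is no way to manufacture full feedback on the completion from full feedback on $X$, which is precisely why only the stated direction holds. Write $(Y,d)$ for the completion of $(X,d)$ and identify $X$ with its dense isometric copy inside $Y$. Suppose $\A_X$ is $f(t)$-tractable for the full-feedback Lipschitz experts problem on $(X,d)$.

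First I would define an algorithm $\A_Y$ on $(Y,d)$ by composing $\A_X$ with restriction: in each round $\A_Y$ receives a sample $\pi\in[0,1]^Y$ from the instance measure $\prob_Y$, feeds the restriction $\pi|_X\in[0,1]^X$ to $\A_X$, and plays the strategy $x\in X\subseteq Y$ that $\A_X$ returns. The restriction map $\pi\mapsto\pi|_X$ is continuous for the product topologies, hence Borel measurable, so $\A_Y$ is a legitimate experts algorithm and the law $\prob_X$ of $\pi|_X$ is a Borel measure on $[0,1]^X$. Its expected payoff function is $\mu_X=\mu_Y|_X$, where $\mu_Y(y)=\expect_{\pi\sim\prob_Y}[\pi(y)]$ is Lipschitz on $Y$; thus $\mu_X$ is Lipschitz on $X$, so $(X,d,\prob_X)$ is a valid problem instance.

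Next I would observe that the two runs are coupled identically. By construction the observation sequence that $\A_Y$ passes to $\A_X$ is an i.i.d.\ sample from $\prob_X$, and whenever $\A_X$ plays $x\in X$ the reward collected is $\pi(x)$, whose expectation is $\mu_X(x)=\mu_Y(x)$. Hence the plays, the observations, and the accumulated reward of $\A_Y$ on $(Y,d,\prob_Y)$ have the same joint law as those of $\A_X$ on $(X,d,\prob_X)$, giving $P_{(\A_Y,\prob_Y)}(t)=P_{(\A_X,\prob_X)}(t)$.

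The crux is comparing the two regrets. Since $\mu_Y$ is continuous and $X$ is dense in $Y$, we have $\sup(\mu_Y,X)=\sup(\mu_Y,Y)$, so
$$
R_{(\A_Y,\prob_Y)}(t)=\sup(\mu_Y,Y)\,t-P_{(\A_Y,\prob_Y)}(t)
  =\sup(\mu_X,X)\,t-P_{(\A_X,\prob_X)}(t)=R_{(\A_X,\prob_X)}(t).
$$
As $\A_X$ is $f(t)$-tractable on $(X,d)$, the right-hand side is $O_{\prob_X}(f(t))$, and since $\prob_X$ is determined by $\prob_Y$ this reads $O_{\prob_Y}(f(t))$; as $\prob_Y$ was arbitrary, $\A_Y$ is $f(t)$-tractable on $(Y,d)$. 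The only delicate point is verifying that restriction yields a bona fide instance (measurability of the pushforward and Lipschitzness of $\mu_X$); everything else reduces to the density argument preserving the supremum, which is routine.
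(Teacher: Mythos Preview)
Your proof is correct and follows essentially the same approach as the paper, which simply points to the easy (``only if'') direction of Lemma~\ref{lm:reduction}: run $\A_X$ as an algorithm on $Y$, use density of $X$ in $Y$ together with continuity of $\mu_Y$ to get $\sup(\mu_Y,X)=\sup(\mu_Y,Y)$, and conclude equality of regrets. The only extra ingredient needed for full feedback, which you supply correctly, is the observation that one feeds $\A_X$ the restriction $\pi|_X$ of each sampled payoff function and that this pushforward yields a legitimate instance on $(X,d)$; the paper leaves this implicit in ``identical to the easy direction.''
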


\begin{proof}
Identical to the easy (``only if") direction of Lemma~\ref{lm:reduction}.
\end{proof}

\begin{note}{Remark.}
Lower bounds only require Lemma~\ref{lm:reduction-2}, or the easy (``only if") direction of Lemma~\ref{lm:reduction}. For the upper bounds (algorithmic results), we can either quote the ``if" direction of Lemma~\ref{lm:reduction}, or prove the desired property directly for the specific type algorithms that we use (which is much easier but less elegant).
\end{note}


\OMIT{
For any $B \in \mathcal{B}_{\infty}, \, t \in \N$,
let $\sigma(B,t)$ denote
the value of $\sigma(B)$ sampled at time $t$
when sampling the sequence of i.i.d.~payoff
functions $\payoff_t$ from distribution $\prob_{\mathcal{Q}}$.
We know that $\sigma(B_k,t)=1$ for all $k,t$.
In fact if $S(k,t)$ denotes the set of all
$B \in \mathcal{B}_k(B_{k-1})$ such that
    $\sigma(B,1) = \sigma(B,2) = \cdots = \sigma(B,t) = 1$
then conditional on the value of the set $S(k,t)$, the
value of $B_k$ is distributed uniformly at random
in $S(k,t)$.  As long as this set $S(k,t)$
has at least $n$ elements, the probability
that the algorithm picks a strategy $x_t$ belonging
to $B_k$ at time $t$ is bounded above by $\tfrac{1}{n}$,
even if we condition on the event that
    $x_t \in B_{k-1}.$
For any given $B \in \mathcal{B}_k(B_{k-1}) \setminus \{B_k\}$, we have
    $\prob_{\mathcal{Q}}(B \in S(k,t)) = 2^{-t}$
and these events are independent for different
values of $B$.  Setting $n=\sqrt{n_k}$, so that
$|\mathcal{B}_k(B_{k-1})|=n^2$, we have
    \begin{align}
      \nonumber
    \prob_{\mathcal{Q}} \left( |S(k,t)| \leq n \right) &\leq
    \sum_{R \subset \mathcal{B}_k(B_{k-1}), |R|=n} \prob_{\mathcal{Q}}(S(k,t)
\subseteq R) \\
      \nonumber
    &= \binom{n^2}{n}
    \left( 1-2^{-t} \right)^{n^2-n} \\
      \nonumber
    &< \left( n^2 \cdot \left( 1 - 2^{-t} \right)^{n-1} \right)^{n} \\
      \label{eq:mmlcdlb-3}
    &< \exp \left( n (2 \ln(n) - (n-1)/2^t) \right).
    \end{align}
STILL NEED TO FIX THIS.
Let $t_k = \tfrac14 \log_2(n_k) = \tfrac14 r_k^{-b}$
As long as $t < t_k$, the relation $\log_2(n) = \tfrac12
\log_2(n_k)$ implies $(n-1)/2^t > \sqrt{n}$ so the expression
\eqref{eq:mmlcdlb-3} is bounded above by
$\exp \left( -n \sqrt{n} + 2 n \ln(n) \right)$,
which equals
$\exp \left( -n_k^{3/4} + n_k^{1/2} \ln(n_k) \right)$
and is in turn bounded above by $\exp \left( -\tfrac12 n_k^{3/4} \right).$

For any Lipschitz experts algorithm \A,
let $N(t,k)$ denote the random variable
that counts the number of times \A\ selects
a strategy in $B_k$ during rounds $1,\ldots,t$.
We have already demonstrated that for all $t \leq t_k$,
    \begin{equation} \label{eq:mmlcdlb-4}
       \Pr_{\prob_\mathcal{Q} \in \mathcal{P}}(x_t \in B_k)
       \leq n_k^{-1/2} + \exp \left( -\tfrac12 n_{k}^{3/4} \right)
       < 2 n_k^{-1/2},
    \end{equation}
where the term $n_k^{-1/2}$ accounts for the
event that $S(k,t)$ has at least $\sqrt{n_k}$ elements.
Equation \eqref{eq:mmlcdlb-4} implies the bound
    $\mathbb{E}_{\prob_{\mathcal{Q}} \in \mathcal{P}}[N(t_k,k)] < 2 t_k n_k^{-1/2}.$
By Markov's inequality, the probability that $N(t_k,k) > t_k/2$
is less than $4 n_k^{-1/2}$.
By Borel-Cantelli, almost
surely the number of $k$ such that $N(t_k,k) \leq t_k/2$
is finite.
The algorithm's expected regret at time $t$ is
bounded below by $r_k(t_k - N(t_k,k))$, so with
probability $1$, for all but finitely many $k$ we have
  $$R_{(\A, \, \prob_\mathcal{Q})}(t_k) \geq r_k t_k / 2
    = r_k^{1-b}$
This establishes that \A\ is not $g(t)$-tractable.
}

\section{KL-divergence techniques}
\label{sec:KL-divergence}

Our proof will use the notion of Kullback-Leibler
divergence (or \emph{KL-divergence}), defined for
two probability measures as follows.

\begin{definition}  \label{def:kldiv}
Let $\Omega$ be a finite set with
two probability measures $p,q$.  Their \emph{Kullback-Leibler
divergence}, or KL-divergence, is the sum
\[
KL(p;q) = \sum_{x \in \Omega} p(x) \ln \left(
\frac{p(x)}{q(x)} \right),
\]
with the convention that $p(x) \ln(p(x)/q(x))$
is interpreted to be $0$ when $p(x)=0$ and $+\infty$
when $p(x)>0$ and $q(x)=0$.  If $Y$ is a random
variable defined on $\Omega$ and taking values in
some set $\Gamma$, the \emph{conditional
Kullback-Leibler divergence} of $p$ and $q$
given $Y$ is the sum
\[
KL(p;q \given Y) = \sum_{x \in \Omega} p(x)
\ln \left( \frac{p(x \given Y = Y(x))}{q(x \given Y = Y(x))} \right),
\]
where terms containing $\log(0)$ or $\log(\infty)$ are
handled according to the same convention as above.
\end{definition}

The definition can be applied to an infinite sample
space $\Omega$ provided that $q$ is absolutely
continuous with respect to $p$.  For details,
see~\cite{Bobby-thesis}, Chapter 2.7.
The following lemma summarizes some standard facts about
KL-divergence; for proofs, see~\cite{CoverThomas,Bobby-thesis}.
\begin{lemma} \label{lem:kl}
Let  $p,q$ be two probability measures on
a measure space $(\Omega,\mathcal{F})$ and
let $Y$ be a random variable defined on
$\Omega$ and taking values in some finite set
$\Gamma$.  Define a
pair of probability measures $p_Y,q_Y$ on $\Gamma$ by
specifying that $p_Y(y) = p(Y=y), q_Y(y) = q(Y=y)$ for
each $y \in \Gamma$.  Then
\[
KL(p;q) = KL(p;q \given Y) + KL(p_Y;q_Y),
\]
and $KL(p;q \given Y)$ is non-negative.
\end{lemma}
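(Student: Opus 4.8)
The plan is to derive the identity from a pointwise factorization of the likelihood ratio $p(x)/q(x)$, and then obtain non-negativity of the conditional term by reducing it fiberwise to the ordinary Gibbs inequality. First I would record the key factorization: since $Y(x)$ is a deterministic function of $x$, the singleton $\{x\}$ is contained in the event $\{Y=Y(x)\}$, so by the definition of conditional probability $p(x)=p_Y(Y(x))\,p(x\given Y=Y(x))$, and likewise for $q$. Dividing these two identities gives
\[
\frac{p(x)}{q(x)} \;=\; \frac{p_Y(Y(x))}{q_Y(Y(x))}\cdot\frac{p(x\given Y=Y(x))}{q(x\given Y=Y(x))}.
\]
Taking logarithms, multiplying through by $p(x)$, and summing over $x\in\Omega$, the left-hand side becomes $KL(p;q)$ and the first resulting sum on the right is exactly $KL(p;q\given Y)$ by Definition~\ref{def:kldiv}. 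For the second sum I would group terms by the value $y=Y(x)$ and use $\sum_{x:\,Y(x)=y}p(x)=p_Y(y)$ to collapse it to $\sum_{y\in\Gamma}p_Y(y)\ln(p_Y(y)/q_Y(y))=KL(p_Y;q_Y)$, which establishes the chain rule.

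For non-negativity, I would again group the defining sum of $KL(p;q\given Y)$ by the value of $Y$ and substitute $p(x)=p_Y(y)\,p(x\given Y=y)$ for each $x$ with $Y(x)=y$, obtaining
\[
KL(p;q\given Y) \;=\; \sum_{y\in\Gamma} p_Y(y)\,KL\bigl(p(\cdot\given Y=y);\,q(\cdot\given Y=y)\bigr).
\]
This is a non-negative combination of ordinary KL-divergences of the conditional laws, so it suffices to observe that each such divergence is non-negative. That follows in one line from $\ln t\le t-1$: writing $-KL(p';q')=\sum_x p'(x)\ln(q'(x)/p'(x))\le\sum_x p'(x)(q'(x)/p'(x)-1)=\sum_x(q'(x)-p'(x))=0$.

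The remaining care is with the conventions of Definition~\ref{def:kldiv}. The factorization and its logarithm are written for finite $\Omega$, and the boundary cases $p(x)=0$ as well as $p(x)>0=q(x)$ are governed by the stated rules, which keep every term well-defined in $[-\infty,+\infty]$ and preserve both the identity and the inequality termwise (a $+\infty$ produced by a vanishing $q$-value on the left is matched by a $+\infty$ in the corresponding factor on the right). The extension to a general measure space under absolute continuity of $q$ with respect to $p$ proceeds exactly as in~\cite{Bobby-thesis}, replacing sums by Radon--Nikodym integrals. I expect no genuine obstacle here; the only point requiring attention is consistent bookkeeping of the zeros and infinities, since a naive cancellation of log-ratios is invalid when some $q$-values vanish.
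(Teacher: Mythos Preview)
The paper does not actually prove this lemma: the sentence introducing it reads ``The following lemma summarizes some standard facts about KL-divergence; for proofs, see~\cite{CoverThomas,Bobby-thesis}.'' So there is no in-paper proof to compare against; the authors treat the chain rule and the non-negativity of the conditional term as background facts.

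Your argument is the standard one and is correct in substance: factor the likelihood ratio via conditional probabilities, take logs, sum, and group by the value of $Y$; then for non-negativity rewrite $KL(p;q\given Y)$ as a $p_Y$-average of ordinary KL-divergences between conditional laws and apply Gibbs' inequality via $\ln t\le t-1$. One small slip: in the factorization you display, the \emph{first} factor on the right is the marginal ratio $p_Y(Y(x))/q_Y(Y(x))$ and the \emph{second} is the conditional ratio, so the sum that equals $KL(p;q\given Y)$ directly by Definition~\ref{def:kldiv} is the second one, and the one requiring grouping by $y$ to yield $KL(p_Y;q_Y)$ is the first. Swap those two sentences and the writeup is clean.
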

An easy corollary is the following lemma
which expresses the KL-divergence of two
distributions on sequences as a sum of
conditional KL-divergences.
\begin{lemma} \label{lem:kl-chainrule}
Let $\Omega$ be a sample space, and suppose $p,q$
are two probability measures on $\Omega^n$,
the set of $n$-tuples of elements of $\Omega$.
For a sample point $\vec{\omega} \in \Omega^n$,
let $\omega^i$ denote its  first $i$ components.
If $p^i,q^i$ denote the probability
measures induced on $\Omega^i$ by $p$
(resp. $q$) then
\[
KL(p;q) = \textstyle{\sum_{i=1}^n}\, KL(p^i;q^i \given \omega^{i-1}).
\]
\end{lemma}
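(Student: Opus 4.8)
The plan is to prove this by induction on $n$, using the two-term chain rule of Lemma~\ref{lem:kl} as the single recursive step. The base case $n=1$ is immediate: here $p^1 = p$ and $q^1 = q$, and conditioning on $\omega^0$ (the empty prefix) amounts to conditioning on a constant random variable, so $KL(p^1; q^1 \given \omega^0) = KL(p^1; q^1) = KL(p;q)$, which is exactly the right-hand side.

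For the inductive step, I would apply Lemma~\ref{lem:kl} with the random variable $Y = \omega^{n-1}$, the projection of $\vec{\omega} \in \Omega^n$ onto its first $n-1$ coordinates. The lemma gives
\[
KL(p; q) = KL(p; q \given \omega^{n-1}) + KL(p_Y; q_Y),
\]
where $p_Y, q_Y$ are the measures induced on the range of $Y$, namely $\Omega^{n-1}$. Two identifications then finish the argument. First, the induced measures $p_Y, q_Y$ are by definition exactly $p^{n-1}, q^{n-1}$, so the induction hypothesis applies to the second summand and expands it as $\sum_{i=1}^{n-1} KL(p^i; q^i \given \omega^{i-1})$. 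Second, since $p = p^n$ and $q = q^n$, the first summand $KL(p; q \given \omega^{n-1})$ is literally the $i=n$ term $KL(p^n; q^n \given \omega^{n-1})$. Adding the two pieces yields $\sum_{i=1}^n KL(p^i; q^i \given \omega^{i-1})$, as claimed.

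The only points requiring care are consistency checks about how the induced measures compose: when invoking the induction hypothesis on $p^{n-1}, q^{n-1}$, I need that the measure which $p^{n-1}$ induces on $\Omega^i$ (for $i \le n-1$) coincides with $p^i$. This holds because both are simply the pushforward of $p$ under projection onto the first $i$ coordinates, and projecting to the first $i$ of the first $n-1$ coordinates is the same as projecting directly to the first $i$. I expect this bookkeeping --- rather than any analytic difficulty --- to be the main (and only mild) obstacle. I would also note that the application of Lemma~\ref{lem:kl} requires $Y$ to range over a finite set, which holds when $\Omega$ is finite as in Definition~\ref{def:kldiv}; the extension to a general sample space follows from the absolutely-continuous version of the definitions discussed in the cited references.
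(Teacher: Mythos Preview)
Your proposal is correct and follows essentially the same approach as the paper: the paper's proof is also induction (on $m$) using Lemma~\ref{lem:kl} at each step, stated in a single sentence without the bookkeeping you spell out. Your careful identification of $p_Y, q_Y$ with $p^{n-1}, q^{n-1}$ and the consistency of iterated marginals simply makes explicit what the paper leaves implicit.
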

\begin{proof}  For $m=1,2,\ldots,n$, the formula
$KL(p^m;q^m) = \sum_{i=1}^m KL(p^i;q^i \given \omega^{i-1})$
follows by induction on $m$, using Lemma~\ref{lem:kl}.
\end{proof}
The following three lemmas will also be useful
in our lower bound argument.  Here and henceforth
we will use the following notational convention:
for real numbers $a,b \in [0,1]$,
$KL(a;b)$ denotes the KL-divergence
$KL(p;q)$ where $p,q$ are probability
measures on $\{0,1\}$ such that $p(\{1\})=a,
\, q(\{1\}) = b.$  In other words,
\[
KL(a;b) = a \ln \left( \tfrac{a}{b} \right) +
(1-a) \ln \left( \tfrac{1-a}{1-b} \right).
\]
\begin{lemma} \label{lem:kl-bernoulli}
For any $0 < \eps < y \leq 1$,
$KL(y-\eps;y) < \eps^2/y(1-y).$
\end{lemma}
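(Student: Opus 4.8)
The plan is to reduce the claim to two applications of the elementary inequality $\ln(1+x) \le x$, which is strict whenever $x \neq 0$. First I would unwind the definition of $KL(a;b)$ with $a = y-\eps$ and $b = y$, writing
\[
KL(y-\eps;\, y) = (y-\eps)\,\ln\!\left(\frac{y-\eps}{y}\right) + (1-y+\eps)\,\ln\!\left(\frac{1-y+\eps}{1-y}\right).
\]
The key observation is that the two logarithms are exactly $\ln\!\left(1 - \tfrac{\eps}{y}\right)$ and $\ln\!\left(1 + \tfrac{\eps}{1-y}\right)$, each in a form to which $\ln(1+x)\le x$ applies. Throughout I assume $y<1$; the degenerate case $y=1$ makes the right-hand side $+\infty$ and is set aside.

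For the first (negative) term, applying $\ln(1+x)\le x$ with $x = -\eps/y$ yields $(y-\eps)\ln(1-\eps/y) \le -\eps + \eps^2/y$, and for the second (positive) term, applying it with $x = \eps/(1-y)$ yields $(1-y+\eps)\ln(1+\eps/(1-y)) \le \eps + \eps^2/(1-y)$. Adding these two bounds, the linear terms $-\eps$ and $+\eps$ cancel, leaving
\[
KL(y-\eps;\,y) \;\le\; \frac{\eps^2}{y} + \frac{\eps^2}{1-y} \;=\; \frac{\eps^2}{y(1-y)},
\]
which is the desired estimate. Strictness is immediate: since $\eps>0$ and $1-y>0$, the argument $x=\eps/(1-y)$ of the second logarithm is nonzero, so $\ln(1+x)<x$ holds strictly there (with positive coefficient $1-y+\eps$), forcing the overall inequality to be strict.

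I do not anticipate a genuine obstacle; the only subtlety worth flagging is the \emph{direction} of the inequalities. Because both log terms enter $KL$ with positive coefficients, both must be upper-bounded to upper-bound $KL$, and $\ln(1+x)\le x$ conveniently points the right way simultaneously for the negative term (argument $<0$) and the positive term (argument $>0$). One tempting alternative is a second-order Taylor expansion of $a\mapsto KL(a;y)$ about $a=y$, exploiting $g''(a)=1/(a(1-a))$ together with $g(y)=g'(y)=0$; but the mean-value remainder is evaluated at an interior point $\xi\in(y-\eps,y)$ where $\xi(1-\xi)$ can be strictly smaller than $y(1-y)$, so that route does not cleanly produce the stated constant $1/(y(1-y))$. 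The two-term logarithmic bound above is therefore the cleaner path.
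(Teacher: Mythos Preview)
Your argument is correct and essentially identical to the paper's own proof: both expand the definition of $KL(y-\eps;y)$ and apply the elementary bound $\ln(1+x)\le x$ to each of the two logarithmic terms, after which the linear contributions cancel and the quadratic ones sum to $\eps^{2}/\bigl(y(1-y)\bigr)$. Your added remarks about strictness and the degenerate endpoint $y=1$ are accurate and go slightly beyond what the paper spells out.
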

\begin{proof}
A calculation using the inequality $\ln(1+x)<x$
(valid for $x > 0$) yields
\begin{align*}
KL(y-\eps;y) &= (y-\eps) \ln \left( \tfrac{y-\eps}{y} \right)
+ (1-y+\eps) \ln \left( \tfrac{1-y+\eps}{1-y} \right) \\
&< (y-\eps) \left( \tfrac{y-\eps}{y} - 1 \right)
+ (1-y+\eps) \left( \tfrac{1-y+\eps}{1-y} - 1 \right) \\
& = \tfrac{-\eps(y-\eps)}{y} + \tfrac{\eps(1-y+\eps)}{1-y}
 = \tfrac{\eps^2}{y(1-y)}.\qedhere
\end{align*}
\end{proof}

\begin{lemma} \label{lem:kl-distinguishing}
Let $\Omega$ be a sample space with
two probability measures
$p,q$ whose KL-divergence is $\kappa.$  For
any event $\mathcal{E}$, the probabilities
$p(\mathcal{E}), \, q(\mathcal{E})$ satisfy
\[
q(\mathcal{E}) \geq
p(\mathcal{E}) \exp \left(
- \tfrac{\kappa + 1/e}{p(\mathcal{E})} \right).
\]
\end{lemma}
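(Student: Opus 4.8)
The plan is to reduce the general statement to a two-point (Bernoulli) calculation via the data-processing property of KL-divergence, and then to extract an upper bound on $\ln(p(\mathcal{E})/q(\mathcal{E}))$ directly from the definition of $KL$. First I would apply Lemma~\ref{lem:kl} with $Y = \mathbf{1}_{\mathcal{E}}$, the indicator random variable of the event $\mathcal{E}$. Since $KL(p;q \given Y)$ is non-negative, the lemma yields $KL(p_Y; q_Y) \le KL(p;q) = \kappa$, where $p_Y, q_Y$ are the Bernoulli laws of $\mathbf{1}_{\mathcal{E}}$ under $p$ and $q$. Writing $a = p(\mathcal{E})$ and $b = q(\mathcal{E})$, this is exactly $KL(a;b) \le \kappa$ in the scalar notation introduced just before the lemma. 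The boundary cases are handled separately and trivially: if $a = 0$ the right-hand side of the claimed inequality is $0$, and if $b=0$ while $a>0$ then $\kappa = +\infty$ so the right-hand side is again $0$; in both cases $q(\mathcal{E}) \ge 0$ suffices.

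Next I would solve the definition $KL(a;b) = a\ln(a/b) + (1-a)\ln\tfrac{1-a}{1-b}$ for the first term, obtaining $a\ln(a/b) = KL(a;b) + (1-a)\ln\tfrac{1-b}{1-a}$. The crux is to bound the second summand from above. Since $1-b \le 1$ we have $\ln\tfrac{1-b}{1-a} \le \ln\tfrac{1}{1-a}$, so with $s = 1-a$ the summand is at most $-s\ln s$, and an elementary calculus computation gives $\max_{s \in [0,1]} (-s\ln s) = 1/e$, attained at $s = e^{-1}$. This maximization is precisely where the additive constant $1/e$ in the statement comes from; it is the only non-routine step, and I expect it to be the main (though mild) obstacle, since everything else is mechanical.

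Combining the two displays gives $a\ln(a/b) \le \kappa + 1/e$, hence $\ln(a/b) \le (\kappa + 1/e)/a$, and exponentiating and rearranging yields $b \ge a\exp\!\left(-(\kappa + 1/e)/a\right)$, which is exactly the claimed bound $q(\mathcal{E}) \ge p(\mathcal{E})\exp\!\left(-\tfrac{\kappa + 1/e}{p(\mathcal{E})}\right)$. The only points requiring care are invoking the data-processing reduction in the first step correctly (immediate from Lemma~\ref{lem:kl}) and tracking the degenerate cases where $a$ or $b$ vanishes.
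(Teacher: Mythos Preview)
Your proposal is correct and follows essentially the same approach as the paper: reduce to the Bernoulli case via Lemma~\ref{lem:kl} with $Y=\mathbf{1}_{\mathcal{E}}$, then bound the non-$a\ln(a/b)$ term using the inequality $-x\ln x\le 1/e$. The only cosmetic difference is that the paper substitutes $c=(1-a)/(1-b)$ and applies $c\ln c\ge -1/e$ directly, whereas you first use $1-b\le 1$ and then apply the same inequality to $s=1-a$; both routes land on $a\ln(a/b)\le\kappa+1/e$.
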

A consequence of the lemma, stated in less quantitative
terms, is the following: if $\kappa=KL(p;q)$ is bounded above
and $p(\mathcal{E})$ is bounded away from zero then
$q(\mathcal{E})$ is bounded away from zero.
\begin{proof}
Let $a = p(\mathcal{E}), \, b = q(\mathcal{E}),
c = (1-a)/(1-b)$.
Applying Lemma~\ref{lem:kl} with $Y$ as the indicator
random variable of $\mathcal{E}$ we obtain
\[
\kappa = KL(p;q) \geq KL(p_Y;q_Y) =
a \ln \left( \tfrac{a}{b} \right) +
(1-a) \ln \left( \tfrac{1-a}{1-b} \right) =
a \ln \left( \tfrac{a}{b} \right) + (1-b)\, c \ln(c).
\]
Now using the inequality $c \ln(c) \geq -1/e$,
(valid for all $c \geq 0$) we obtain
\[
\kappa \geq a \ln(a/b) - (1-b)/e \geq a \ln(a/b) - 1/e.
\]
The lemma follows by rearranging terms.
\end{proof}
\begin{lemma} \label{lem:reverse-pinsker}
Let $p,q$ be two probability measures, and suppose that
for some $\delta \in (0,\tfrac{1}{2}]$ they satisfy
\[
\forall \mbox{\rm \ events } \mathcal{E}, \quad
1-\delta < \tfrac{q(\mathcal{E})}{p(\mathcal{E})}
< 1+\delta
\]
Then $KL(p;q) < \delta^2.$
\end{lemma}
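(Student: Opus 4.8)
The plan is to reduce the statement to an elementary one-variable inequality about the pointwise ratio of the two densities. Write $\Omega$ for the sample space (finite, as in Definition~\ref{def:kldiv}; the general case is handled identically with the Radon--Nikodym derivative $f = dq/dp$, which exists because, reading the hypothesis in the form $(1-\delta)\,p(\mathcal{E}) < q(\mathcal{E}) < (1+\delta)\,p(\mathcal{E})$, we have $q \ll p$). Set $r(x) = q(x)/p(x)$ for every $x$ with $p(x)>0$. Applying the hypothesis to the singleton event $\mathcal{E} = \{x\}$ gives $1-\delta < r(x) < 1+\delta$, and summing $p(x)(r(x)-1) = q(x)-p(x)$ over $x$ gives $\expect_p[\,r-1\,] = 0$. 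Since $KL(p;q) = \sum_x p(x)\ln(p(x)/q(x)) = -\expect_p[\ln r]$, it remains to bound $-\expect_p[\ln r]$ from above under the constraints $r \in (1-\delta,1+\delta)$ and $\expect_p[r] = 1$.

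The key step is the inequality
\[
\ln(1+g) \;\geq\; g - g^2 \qquad \text{for all } g \in [-\tfrac12,\tfrac12],
\]
which I would prove by setting $\psi(g) = \ln(1+g) - g + g^2$ and computing $\psi'(g) = g(1+2g)/(1+g)$; on $(-\tfrac12,\tfrac12)$ this has the sign of $g$, so $\psi$ attains its minimum at $g=0$, where $\psi(0)=0$. Writing $g = r-1 \in (-\delta,\delta) \subseteq (-\tfrac12,\tfrac12)$ and taking expectations yields
\[
KL(p;q) \;=\; -\expect_p[\ln(1+g)]
\;\leq\; \expect_p[\,{-}g + g^2\,]
\;=\; \expect_p[g^2],
\]
using $\expect_p[g] = 0$. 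Finally $\expect_p[g^2] \leq \delta^2$ since $|g| \leq \delta$, which already gives the non-strict bound $KL(p;q) \le \delta^2$.

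The only place requiring care — and thus the main obstacle — is upgrading this to the strict inequality $KL(p;q) < \delta^2$. In the finite case it is immediate: the hypothesis gives \emph{strict} bounds $|g(x)| < \delta$, so every term of $\expect_p[g^2] = \sum_x p(x) g(x)^2$ is strictly below $p(x)\,\delta^2$, whence $\expect_p[g^2] < \delta^2$ (and if $p=q$ then $KL(p;q)=0<\delta^2$ trivially). In the general measure-theoretic case one only gets $|g|\le \delta$ a.e., so instead I would invoke the strictness of the pointwise inequality: $\psi(g)>0$ for every $g \neq 0$ in $[-\tfrac12,\tfrac12]$, so equality $KL(p;q)=\expect_p[g^2]$ would force $g=0$ a.e., i.e. $p=q$, again giving $KL(p;q)=0$. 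Either way the inequality is strict, completing the proof.
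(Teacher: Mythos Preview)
Your proof is correct and follows essentially the same route as the paper's: define the pointwise ratio, apply the elementary inequality $\ln(1+g) \geq g - g^2$ for $g = q/p - 1 \in (-\delta,\delta)$, and combine $\expect_p[g]=0$ with $\expect_p[g^2] < \delta^2$. Your handling of the strict inequality and of the general measure-theoretic case is a bit more careful than the paper's (which simply asserts that the general case follows by taking a supremum and states the one-variable inequality without proof), but the argument is the same.
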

\begin{proof}
We will prove the lemma assuming the sample space
is finite.  The result for general measure spaces
follows by taking a supremum.

For every $x$ in the sample space $\Omega$, let
$r(x) = \frac{q(x)}{p(x)}-1$ and note that $|r(x)| < \delta$
for all $x$.  Now we make use of the inequality
$\ln(1+x) \leq x-x^2$, valid for $x \geq -\tfrac{1}{2}.$
\begin{align*}
KL(p;q) &= \textstyle{\sum_{x}}\, p(x) \ln \left( \tfrac{p(x)}{q(x)} \right)  \quad\quad\;
 = \textstyle{\sum_{x}}\, p(x) \ln \left( \tfrac{1}{1 + r(x)} \right) \\
&= - \textstyle{\sum_{x}}\, p(x) \ln(1+r(x))
\;\leq - \textstyle{\sum_{x}}\, p(x) [r(x) - (r(x))^2 ] \\
& < - \left( \textstyle{\sum_{x}}\, p(x) r(x) \right) +
    \delta^2 \left( \textstyle{\sum_{x}}\, p(x) \right)
\\
& = - \left( \textstyle{\sum_{x}}\, q(x) - p(x) \right) + \delta^2
 = \delta^2. \qedhere
\end{align*}
\end{proof}

\begin{proofof}{Theorem~\ref{thm:LB-technique}}
Let $\Omega = [0,1]^X$.  Using Property~\ref{ens:1}
of an $(\eps,\delta,k)$-ensemble combined with
Lemma~\ref{lem:reverse-pinsker}, we find that
$KL(\prob_i;\prob_0) < \delta^2.$

Let \A\ be an experts algorithm whose random bits
are drawn from a sample space $\Gamma$
with probability measure $\nu$.  For any
positive integer $s < \ln(17k) / 2 \delta^2$,
let $p_i^s$ denote
the measure $\nu \times (\prob_i)^s$
on the probability space $\Gamma \times \Omega^s.$
By the chain rule for KL-divergence
(Lemma~\ref{lem:kl-chainrule}),
$KL(p_i^s;p_0^s) < s \delta^2 <  \ln(17k) / 2.$
Now let $\mathcal{E}_i^s$ denote the event that
\A\ selects a point $x \in S_i$ at time $s$.
If $p_i^s(\mathcal{E}_i^s) \geq \tfrac{1}{2}$
then Lemma~\ref{lem:kl-distinguishing} implies
\begin{align*}
p_0^s(\mathcal{E}_i^s)
& \geq
p_i^s(\mathcal{E}_i^s) \exp \left(
- \frac{\ln(17k)/2 + 1/e}{p_i^s(\mathcal{E}_i^s)}
\right)
 \geq
\tfrac{1}{2} \exp \left(- \ln(k) + \ln(17) - \tfrac{2}{e} \right)
 > \frac{4}{k}.
\end{align*}
The events $\{\mathcal{E}_i^s \,|\, 1 \leq i \leq k\}$
are mutually exclusive, so fewer than $k/4$ of them
can satisfy $p_0^s(\mathcal{E}_i^s) > \frac{4}{k}.$
Consequently, fewer than $k/4$ of them can satisfy
$p_i^s(\mathcal{E}_i^s) \geq \tfrac{1}{2},$ a
property we denote in this proof by saying that
$s$ is \emph{satisfactory} for $i$.
Now assume $t < \ln(17k)/2 \delta^2$.
For a uniformly random $i \in \{1,\ldots,k\}$,
the expected number of satisfactory
$s \in \{1,\ldots,t\}$
is less than $t/4$, so by Markov's inequality, for
at least half of the $i \in \{1,\ldots,k\}$, the
number of satisfactory $s \in \{1,\ldots,t\}$ is
less than $t/2$.  Property~\ref{ens:2} of an
$(\eps,\delta,k)$-ensemble guarantees that
every unsatisfactory $s$ contributes at least
$\eps$ to the regret of \A\ when the problem
instance is $\prob_i$.  Therefore, at least half
of the measures $\prob_i$ have the property that
	$R_{(\A,\,\prob_i)}(t) \geq \eps t/2$.
\end{proofof}

\subsection{Proof of Claim~\ref{cl:logT-KLdiv}}
\label{sec:logT-KLdiv}

Recall that in Section~\ref{sec:logT} we defined
a pair of payoff functions $\mu_0,\mu_i$ and
a ball $B_i$ of radius $r_i$ such that $\mu_0 \equiv \mu_i$
on $X \setminus B_i$, while for $x \in B_i$ we have
    $$ \tfrac38 \leq \mu_0(x) \leq \mu_i(x) \leq
       \mu_0(x) + \tfrac{r_i}{4} \leq \tfrac34.
    $$
Thus, by Lemma~\ref{lem:kl-bernoulli},
$KL(\mu_0(x);\mu_i(x)) < r_i^2 / 3$ for
all $x \in X$, and $KL(\mu_0(x);\mu_i(x)) = 0$
for $x \not\in B_i$.

Represent the algorithm's choice and the payoff
observed at any given time $t$ by a pair $(x_t,y_t).$
Let $\Omega = X \times [0,1]$ denote the set of
all such pairs.  When a given algorithm \A\
plays against payoff functions $\mu_0, \mu_i$,
this defines two different probability measures
$p_0^t, p_i^t$ on the set $\Omega^t$ of possible
$t$-step histories.  Let $\omega^t$ denote a
sample point in $\Omega^t$.  The bounds derived
in the previous paragraph imply that for any
non-negative integer $s$,
\begin{equation} \label{eq:logT-KLdiv-1}
KL(p_0^{s+1}; p_i^{s+1} \,|\, \omega^s) <
\tfrac{1}{3} r_i^2 \prob_0(x_{s+1} \in B_i).
\end{equation}
Summing equation \eqref{eq:logT-KLdiv-1} for
$s=0,1,\ldots,t-1$ and applying Lemma~\ref{lem:kl-chainrule}
we obtain
\begin{equation} \label{eq:logT-KLdiv-2}
KL(p_0^{t}; p_i^{t}) < \tfrac13 r_i^2 \;
    \textstyle{\sum_{s=1}^{t}}\, \prob_0(x_{s} \in B_i)
 = \tfrac13 r_i^2 \mathbb{E}_0(N_i(t)),
\end{equation}
where the last equation follows from the definition
of $N_i(t)$ as the number of times algorithm \A\
selects a strategy in $B_i$ during the first $t$ rounds.

The bound stated in Claim~\ref{cl:logT-KLdiv}
now follows by applying Lemma~\ref{lem:kl-distinguishing}
with the event $S$ playing the role of $\mathcal{E}$,
$\prob_0$ playing the role
of $p$, and $\prob_i$ playing the role of $q$.

\section{Topological equivalences: proof of Lemma~\ref{lm:topological-equivalence}}
\label{sec:topological}

Let us restate the lemma, for the sake of convenience. Recall that it includes an equivalence result for compact metric spaces, and two implications for arbitrary metric spaces:

\begin{lemma}\label{lm:topological-equivalence-appendix}
For any compact metric space $(X,d)$, the following are equivalent: (i) $X$ is a countable set, (ii) $(X,d)$ is well-orderable, (iii) no subspace of $(X,d)$ is perfect. For an arbitrary metric space we have (ii)$\iff$(iii) and (i)$\Rightarrow$(ii), but not (ii)$\Rightarrow$(i).
\end{lemma}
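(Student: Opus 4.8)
The central tool is the \emph{Cantor--Bendixson derivative}. For a metric space $X$ let $X'$ be its set of limit points, and define the transfinite sequence of closed sets $X^{(0)}=X$, $X^{(\alpha+1)}=(X^{(\alpha)})'$, and $X^{(\lambda)}=\bigcap_{\alpha<\lambda}X^{(\alpha)}$ at limit ordinals. The plan is to first record its basic properties. This decreasing chain stabilizes at the \emph{perfect kernel} $K=X^{(\theta)}$, which satisfies $K'=K$ and hence is perfect; a one-line transfinite induction shows that $P\subseteq X^{(\alpha)}$ for every perfect $P$ and every $\alpha$, so $P\subseteq K$. Thus condition (iii), that no subspace of $X$ is perfect, is \emph{equivalent} to $K=\emptyset$, i.e. to $X$ being scattered. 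When $K=\emptyset$ each point $x$ has a well-defined \emph{rank} $\rho(x)$, the unique $\alpha$ with $x\in X^{(\alpha)}\setminus X^{(\alpha+1)}$, and $x$ is isolated in $X^{(\rho(x))}$.

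Next I would dispatch (ii)$\iff$(iii) for arbitrary metric spaces. For (ii)$\Rightarrow$(iii): given a topological well-ordering $\prec$ and a nonempty $Y\subseteq X$ with $\prec$-least element $y_0$, the set $\{x:x\preceq y_0\}$ is an initial segment, hence open, and meets $Y$ exactly in $y_0$, so $y_0$ is isolated in $Y$ and $Y$ is not perfect. For (iii)$\Rightarrow$(ii), assuming $X$ scattered I would order $X$ lexicographically by rank: points of smaller rank come first, and ties within a level $X^{(\alpha)}\setminus X^{(\alpha+1)}$ are broken by an arbitrary fixed well-ordering. This is a well-ordering because the ranks and each level are well-ordered. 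The crux is that every initial segment $S$ is open: if $y\in S$ has rank $\beta$, then $y$ is isolated in $X^{(\beta)}$, so some ball $B(y,\epsilon)$ meets $X^{(\beta)}$ only in $y$; every other point of that ball has rank $<\beta$ and therefore precedes $y$, so $B(y,\epsilon)\subseteq S$. Hence $S$ is open and (ii)$\iff$(iii) holds in full generality.

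For compact $X$ the plan is to close the loop with (i)$\iff$(iii). For (i)$\Rightarrow$(iii): if $P\subseteq X$ is nonempty and perfect, its closure $\overline P$ is again perfect (every point of $\overline P$ is a limit of points of $P$) and is compact as a closed subset of $X$; a nonempty compact perfect metric space is uncountable by the standard Cantor-scheme (Baire) embedding of $\{0,1\}^{\mathbb N}$, forcing $X$ to be uncountable, so a countable compact space has no perfect subspace. For (iii)$\Rightarrow$(i): a scattered $X$ is second countable, so its Cantor--Bendixson rank $\theta$ is countable (a strictly decreasing transfinite chain of closed sets injects its successor-drops into a countable base), and each level is a discrete subspace of a separable space, hence countable; thus $X$ is a countable union of countable sets and is countable. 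Together with (ii)$\iff$(iii) this gives the full equivalence (i)$\iff$(ii)$\iff$(iii) for compact spaces, and in particular (i)$\Rightarrow$(ii); these are precisely the classical Cantor--Bendixson~\cite{Cantor83} and Mazurkiewicz--Sierpi\'nski~\cite{MazSier} facts.

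Finally, to witness that (ii) does \emph{not} imply (i), I would take an uncountable set with the discrete metric $d(x,y)=1$ for $x\neq y$: every singleton is open, so every subset is open and any well-ordering (existing by the Axiom of Choice) is automatically a topological well-ordering, yet the space is uncountable. I expect the main obstacle to be the direction (iii)$\Rightarrow$(ii), whose difficulty is entirely in verifying that the rank-lexicographic order has open initial segments; this rests on the geometric fact that a rank-$\beta$ point has a neighborhood free of other points of rank $\geq\beta$. The compact implications (i)$\iff$(iii) are comparatively routine once the derivative machinery is set up, the only delicate points being the countability of the Cantor--Bendixson rank and the uncountability of a nonempty compact perfect set.
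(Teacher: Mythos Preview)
Your proposal is correct. The differences from the paper are mainly organizational. For (iii)$\Rightarrow$(ii), the paper builds the well-ordering by transfinite recursion, at each stage picking a single isolated point of the remaining set $X\setminus\{x_\nu:\nu<\lambda\}$; you instead precompute Cantor--Bendixson ranks and order lexicographically. These are two presentations of the same construction---your rank strata are what the paper's recursion would produce if it removed all isolated points at each step rather than one at a time---and your version makes the openness of initial segments slightly more transparent. For the compact equivalence, the paper closes the loop differently: it proves (ii)$\Rightarrow$(i) directly by observing that $S(n)=\{x:B(x,1/n)\subseteq\{y:y\preceq x\}\}$ is $1/n$-separated and hence finite by compactness, whereas you prove (iii)$\Rightarrow$(i) from countability of the Cantor--Bendixson rank and of each discrete level in a second-countable space. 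The paper's route is slightly more elementary; yours stays within the derivative machinery you have already set up. Your (ii)$\Rightarrow$(iii) is actually cleaner than the paper's, which detours through showing that $\prec$ restricts to a topological well-ordering on every subspace before reaching the contradiction.

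One loose end: the lemma also asserts (i)$\Rightarrow$(ii) for \emph{arbitrary} metric spaces, which you do not address separately. In fact this clause is false as written---$\mathbb{Q}$ with its usual metric is countable yet perfect, hence by your own (ii)$\Leftrightarrow$(iii) it is not topologically well-orderable---so no proof is possible; the paper's remark that its argument for this implication ``does not in fact use compactness'' is itself in error.
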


\begin{proof}[Proof: compact metric spaces]
Let us prove the assertions in the circular order.

{\bf (i) implies (iii).} Let us prove the contrapositive: if $(X,d)$ has a
perfect subspace $Y$, then $X$ is uncountable. We have
seen that if $(X,d)$ has a perfect subspace $Y$ then it has
a ball-tree. Every leaf $\ell$ of the ball-tree (i.e. infinite path
starting from the root) corresponds to a nested sequence
of balls. The closures of these balls
have the finite intersection property, hence their
their intersection is non-empty. Pick an arbitrary point of
the intersection and call if $x(\ell)$. Distinct leaves $\ell$, $\ell'$
correspond to distinct points $x(\ell)$, $x(\ell')$ because if
$(y,r_y), \, (z,r_z)$ are siblings in the ball-tree
which are ancestors of $\ell$ and $\ell'$, respectively,
then the closures of $B(y,r_y)$ and $B(z,r_z)$ are disjoint
and they contain $x(\ell), x(\ell')$ respectively.
Thus we have constructed
a set of distinct points of $X$, one for each leaf of the
ball-tree. There are uncountably many leaves, so $X$ is
uncountable.

{\bf (iii) implies (ii).}  Let $\beta$ be some ordinal of strictly larger cardinality than $X$. Let us define a transfinite sequence
	$\{x_\lambda\}_{\lambda\leq \beta}$
of points in $X$ using transfinite recursion\footnote{''Transfinite recursion" is a theorem in set theory which asserts that in order to define a function $F$ on ordinals, it suffices to specify, for each ordinal $\lambda$, how to determine $F(\lambda)$ from $F(\nu)$, $\nu<\lambda$.}, by specifying that $x_0$ is any isolated point of $X$, and that for any ordinal $\lambda > 0$, $x_\lambda$ is any isolated point of the subspace $(Y_\lambda, d)$, where
	$Y_\lambda = X\setminus \{ x_\nu:\, \nu < \lambda \}$,
as long as $Y_\lambda$ is nonempty. (Such isolated point exists since by our assumption subspace $(Y_\lambda, d)$ is not perfect.) If $Y_\lambda$ is empty define e.g. $x_\lambda = x_0$.
Now, $Y_\lambda$ is empty for some ordinal $\lambda$ because otherwise we obtain a mapping from $X$ onto an ordinal $\beta$ whose cardinality exceeds the cardinality of $X$. Let
	$\beta_0 = \min\{ \lambda:\, Y_\lambda=\emptyset \}$.
Then every point in $X$ has been indexed by an ordinal number $\lambda<\beta_0$, and so we obtain a well-ordering of $X$. By construction, for
every $x=x_\lambda$ we can define  a radius $r(x)>0$ such that $B(x, r(x))$ is disjoint from the set of points $\{ x_\nu : \nu > \lambda \}$.
Any initial segment $S$ of the well-ordering is equal to
the union of the balls $\{B(x,r(x)) : x \in S\}$, hence is an
open set in the metric topology. Thus we have constructed
a topological well-ordering of X.

{\bf (ii) implies (i).} Suppose we have a binary relation $\prec$
which is a topological well-ordering of $(X,d)$. Let $S(n)$ denote
the set of all $x \in X$ such that $B(x, \tfrac{1}{n})$ is contained in
the set $P(x) = \{ y : y \preceq x \}$. By the definition of a
topological well-ordering we know that for every $x$, $P(x)$ is
an open set, hence $x\in S(n)$ for sufficiently
large $n$. Therefore $X = \cup_{n\in\N} S(n)$. Now, the definition
of $S(n)$ implies that every two points of $S(n)$ are
separated by a distance of at least $1/n$. (If $x$ and $z$
are distinct points of $S(n)$ and $x\prec z$, then $B(x,\tfrac{1}{n})$
is contained in the set $P(x)$ which does not contain
$z$, hence $d(x,z)\geq \tfrac{1}{n}$.) Thus by compactness of $(X,d)$ set $S(n)$ is finite.
\end{proof}

\begin{proof}[Proof: arbitrary metric spaces]
For implications {\em (i)$\Rightarrow$(ii)}  and {\em (iii)$\Rightarrow$(ii)}, the proof above does not in fact use compactness. An example of an uncountable but well-orderable metric space is $(\R,d)$, where $d$ is a uniform metric. It remains to prove that {\em (ii)$\Rightarrow$(iii)}.

Suppose there exists a topological well-ordering $\prec$. For each subset $Y\subseteq X$ and an element $\lambda\in Y$ let
	$Y_\prec(\lambda) = \{ y\in Y: y\preceq \lambda\}$
be the corresponding initial segment.

We claim that $\prec$ induces a topological well-ordering on any subset $Y\subseteq X$. We need to show that for any $\lambda\in Y$ the initial segment $Y_\prec(\lambda)$ is open in the metric topology of $(Y,d)$. Indeed, fix $y\in Y_\prec(\lambda)$. The initial segment $X_\prec(\lambda)$ is open by the topological well-ordering property of $X$, so
	$B_X(y,\eps) \subset X_\prec(\lambda)$
for some $\eps>0$. Since
	$Y_\prec(\lambda) =  X_\prec(\lambda) \cap Y$
and
	$B_Y(y,\eps) = B_X(y,\eps) \cap Y $,
it follows that
	$B_Y(y,\eps) \subset Y_\prec(\lambda)$.
Claim proved.

Suppose the metric space $(X,d)$ has a perfect subspace $Y\subset X$. Let $\lambda$ be the $\prec$-minimum element of $Y$. Then $Y_\prec(\lambda) = \{\lambda \}$. However, by the previous claim $\prec$ is a topological well-ordering of $(Y,d)$, so the initial segment $Y_\prec(\lambda)$ is open in the metric topology of $(Y,d)$. Since $(Y,d)$ is perfect, $Y_\prec(\lambda)$ must be infinite, contradiction. This completes the {\em (ii)$\Rightarrow$(iii)} direction.
\end{proof}

\end{document}